\theoremstyle{thmstyleone}%
\theoremstyle{thmstyletwo}%
\newtheorem{example}{Example}%
\theoremstyle{thmstylethree}%
\newtheorem{definition}{Definition}%
\definecolor{dkgreen}{rgb}{0,0.6,0}
\definecolor{ltblue}{rgb}{0,0.4,0.4}
\definecolor{dkviolet}{rgb}{0.3,0,0.5}
\lstdefinelanguage{Coq}{ 
    mathescape=true,
    texcl=false, 
    morekeywords=[1]{Section, Module, End, Require, Import, Export,
        Variable, Variables, Parameter, Parameters, Axiom, Hypothesis,
        Hypotheses, Notation, Local, Tactic, Reserved, Scope, Open, Close,
        Bind, Delimit, Definition, Let, Ltac, Fixpoint, CoFixpoint, Add,
        Morphism, Relation, Implicit, Arguments, Unset, Contextual,
        Strict, Prenex, Implicits, Inductive, CoInductive, Record,
        Structure, Canonical, Coercion, Context, Class, Global, Instance,
        Program, Infix, Theorem, Lemma, Corollary, Proposition, Fact,
        Remark, Example, Proof, Goal, Save, Qed, Defined, Hint, Resolve,
        Rewrite, View, Search, Show, Print, Printing, All, Eval, Check,
        Projections, inside, outside, Def},
    morekeywords=[2]{forall, exists, exists2, fun, fix, cofix, struct,
        match, with, end, as, in, return, let, if, is, then, else, for, of,
        nosimpl, when},
    morekeywords=[3]{Type, Prop, Set, true, false, option},
    morekeywords=[4]{pose, set, move, case, elim, apply, clear, hnf,
        intro, intros, generalize, rename, pattern, after, destruct,
        induction, using, refine, inversion, injection, rewrite, congr,
        unlock, compute, ring, field, fourier, replace, fold, unfold,
        change, cutrewrite, simpl, have, suff, wlog, suffices, without,
        loss, nat_norm, assert, cut, trivial, revert, bool_congr, nat_congr,
        symmetry, transitivity, auto, split, left, right, autorewrite},
    morekeywords=[5]{by, done, exact, reflexivity, tauto, romega, omega,
        assumption, solve, contradiction, discriminate},
    morekeywords=[6]{do, last, first, try, idtac, repeat},
    morecomment=[s]{(*}{*)},
    showstringspaces=false,
    morestring=[b]",
    morestring=[d],
    tabsize=3,
    extendedchars=false,
    sensitive=true,
    breaklines=false,
    basicstyle=\small,
    captionpos=b,
    columns=[l]flexible,
    identifierstyle={\ttfamily\color{black}},
    keywordstyle=[1]{\ttfamily\color{dkviolet}},
    keywordstyle=[2]{\ttfamily\color{dkgreen}},
    keywordstyle=[3]{\ttfamily\color{ltblue}},
    keywordstyle=[4]{\ttfamily\color{dkblue}},
    keywordstyle=[5]{\ttfamily\color{dkred}},
    stringstyle=\ttfamily,
    commentstyle={\ttfamily\color{dkgreen}},
    literate=
    {\\forall}{{\color{dkgreen}{$\forall\;$}}}1
    {\\exists}{{$\exists\;$}}1
    {<-}{{$\leftarrow\;$}}1
    {=>}{{$\Rightarrow\;$}}1
    {==}{{\code{==}\;}}1
    {==>}{{\code{==>}\;}}1
    {->}{{$\rightarrow\;$}}1
    {<->}{{$\leftrightarrow\;$}}1
    {<==}{{$\leq\;$}}1
    {\#}{{$^\star$}}1 
    {\\o}{{$\circ\;$}}1 
    {\@}{{$\cdot$}}1 
    {\/\\}{{$\wedge\;$}}1
    {\\\/}{{$\vee\;$}}1
    {++}{{\code{++}}}1
    {~}{{\ }}1
    {\@\@}{{$@$}}1
    {\\mapsto}{{$\mapsto\;$}}1
    {\\hline}{{\rule{\linewidth}{0.5pt}}}1
}[keywords,comments,strings]
\newcommand{\oset}[3][0ex]{%
  \mathrel{\mathop{#3}\limits^{
    \vbox to#1{\kern-0\ex@
    \hbox{$\scriptstyle#2$}\vss}}}}
\newcommand{\doublebrackets}[1]{[\![ #1 ]\!]}
\newcommand{\progspec}[1]{\{ #1 \}}
\newcommand{\tuple}[1]{\langle #1 \rangle}
\newcommand{\lsep}{ ~\vert~ }
\newcommand{\triple}[3]{\progspec{ #1 } #2 \progspec{ #3 }}
\DeclareRobustCommand{\VDash}{\mathrel{\|}\joinrel\Relbar}
\DeclareRobustCommand{\fpupdate}{\oset{\cdot}{\mathrel{\vert}\joinrel\mathrel{\Rrightarrow}}}
\newcommand{\circnum}[1]{{\bigcirc\mkern-12.5mu{\footnotesize\text{#1}}\mkern6.25mu}}
\newcommand{\btrue}{\texttt{true}}
\newcommand{\bfalse}{\texttt{false}}
\newcommand{\cmdref}{\texttt{ref}}
\newcommand{\cmdfork}[1]{\texttt{fork}\{#1\}}
\newcommand{\cmdloop}[1]{\texttt{loop}_{#1}\,}
\newcommand{\cmdbreak}{\texttt{break}\,}
\newcommand{\cmdcontinue}{\texttt{continue}}
\newcommand{\cmdif}{\texttt{if}\,}
\newcommand{\cmdthen}{\,\texttt{then}\,}
\newcommand{\cmdelse}{\,\texttt{else}\,}
\newcommand{\cmdseq}{\,;\mkern-8mu;\mkern-4mu}
\newcommand{\cmdreturn}{\texttt{return}\,}
\newcommand{\cmdcall}{\texttt{call}\,}
\newcommand{\hred}{\rightarrow_h}
\newcommand{\tred}{\rightarrow_t}
\newcommand{\cred}{\text{red}}
\newcommand{\emp}{\texttt{emp}}
\newcommand{\WPRE}{\textsf{WP}}
\newcommand{\wand}{-\mkern-8mu*\,}
\newcommand{\upd}{\dot{\vert\mkern-8mu\Rrightarrow}\,}
\newcommand{\later}{\triangleright}
\newcommand{\wpre}[5]{\textsf{WP}\, #1\, \progspec{#2, [#3, #4, #5]}}
\newcommand{\cseq}{\,;\mkern-8mu;\mkern-4mu}
\newcommand{\cif}[3]{\ensuremath{\mathtt{if}~#1~\mathtt{then}~#2~\mathtt{else}~#3}}
\newcommand{\cfor}[2]{\ensuremath{\mathtt{for}(;\mkern-8mu;\mkern-4mu#2)~#1}}
\newcommand{\cwhile}[2]{\ensuremath{\mathtt{while}(#1)~#2}}
\newcommand{\cbreak}{\ensuremath{\mathtt{break}}}
\newcommand{\ccontinue}{\ensuremath{\mathtt{continue}}}
\newcommand{\cskip}{\ensuremath{\mathtt{skip}}}
\newcommand{\bigvalid}{\vDash_\text{b}}
\newcommand{\exitkind}{\text{exit\_kind}}
\newcommand{\ek}{\text{ek}}
\newcommand{\ekb}{\text{brk}}
\newcommand{\ekc}{\text{con}}
\newcommand{\ekr}{\text{ret}}
\newcommand{\guard}[3]{\progspec{#1} (#2, #3)}
\newcommand{\kloops}[3]{\text{KLoop}_{#3}(#1, #2)}
\newcommand{\kseq}[1]{\text{KSeq}(#1)}
\newtheorem{thm}{Theorem}
\newtheorem{lem}[thm]{Lemma}
\newtheorem{hypo}[thm]{Hypothesis}
\begin{document}

\title[Deep Embedding v.s Shallow Embedding]{Verifying Programs with Logic and Extended Proof Rules: Deep Embedding v.s. Shallow Embedding}


\author*[1]{Zhongye Wang}\email{wangzhongye1110@sjtu.edu.cn}
\author[1]{Qinxiang Cao}\email{caoqinxiang@gmail.com}
\author[1]{Yichen Tao}\email{taoyc0904@sjtu.edu.cn}


\affil[1]{\orgname{Shanghai Jiao Tong University}, \orgaddress{Shanghai, China}}


\abstract{
  Many foundational program verification tools have been developed to build machine-checked program correctness proofs, a majority of which are based on Hoare logic.
  Their program logics, their assertion languages, and their underlying programming languages can be formalized by either a shallow embedding or a deep embedding.
  Tools like Iris and early versions of Verified Software Toolchain (VST) choose different shallow embeddings to formalize their program logics.
  But the pros and cons of these different embeddings were not yet well studied.
  Therefore, we want to study the impact of the program logic's embedding on logic's proof rules in this paper.

  This paper considers a set of useful extended proof rules, and four different logic embeddings: one deep embedding and three common shallow embeddings.
  We prove the validity of these extended rules under these embeddings and discuss their main challenges.
  Furthermore, we propose a method to lift existing shallowly embedded logics to deeply embedded ones to greatly simplify proofs of extended rules in specific proof systems.
  We evaluate our results on two existing verification tools.
  We lift the originally shallowly embedded VST to our deeply embedded VST to support extended rules, and we implement Iris-CF and deeply embedded Iris-Imp based on the Iris framework to evaluate our theory in real verification projects.
}

\keywords{Hoare logic, program verification, deep embedding, shallow embedding, machine-checked proofs, separation logic}



\maketitle

\section*{Statements and Declarations}
\subsection*{Funding}
This research is sponsored by National Natural Science Foundation of China (NSFC) Grant No. 61902240.
\subsection*{Competing interests}
\begin{itemize}
  \item The authors have no relevant financial or non-financial interests other than those declared in the Funding section.
  \item The authors have no competing interests to declare that are relevant to the content of this article.
\end{itemize}
\newpage

\section{Introduction}
\label{sec:intro}


Computer scientists have gained great success in developing foundationally sound program verification systems in the past decade.
The word \emph{foundationally} means: not only are programs' correctness properties verified, but the correctness proofs and the depended proof rules are also verified and machine-checked in a proof assistant like Coq \cite{COQ} or Isabelle \cite{nipkow2002isabelle, paulson1994isabelle}.
Foundational verification tools like Verified Software Toolchain (VST) \cite{VST, VST-Floyd},
Iris \cite{jung2018iris}, CSimpl \cite{sanan2017csimpl},
CakeML \cite{CF-CakeML} etc. enable their users to verify programs written in real programming languages like C, Rust and OCaml using Hoare-style logic.

\paragraph{Extended Proof Rules.}
Theoretically, a Hoare logic with compositional rules, the consequence rule, and proof rules for singleton commands (like assignments) is powerful enough to prove all valid judgement \cite{cook1978soundness}.
These rules are often referred to as primary rules.
Figure~\ref{fig:example}.A shows a sound and complete logic for a toy language with only skip command, assignment, sequential composition, and if-statement.

\begin{figure}[h]
\textbf{A. Primary Rules.}\\
\vspace{-1em}
\begin{mathpar}
    \inferrule[hoare-skip]{}{
        \vdash \triple{P}{\cskip}{P}
    }
    \and
    \inferrule[hoare-assign]{}{
        \vdash \triple{P[x\mapsto e]}{x=e}{P}
    }
    \and
    \inferrule[hoare-seq]{
        \vdash \triple{P}{c_1}{R}
        \and
        \vdash \triple{R}{c_2}{Q}
    }{
        \vdash \triple{P}{c_1 \cseq c_2}{Q}
    }
    \and
    \inferrule[hoare-if]{
        \vdash \triple{P \land \doublebrackets{e}=\text{true}}{c_1}{Q} \\\\
        \vdash \triple{P \land \doublebrackets{e}=\text{false}}{c_2}{Q}
    }{
        \vdash \triple{P}{\cif{e}{c_1}{c_2}}{Q}
    }
    \and
    \inferrule[hoare-conseq]{
        P\vdash P' \\     
        \vdash \triple{P'}{c}{Q'} \\
        Q' \vdash Q
    }{
        \vdash \triple{P}{c}{Q}
    }
\end{mathpar}

\vspace{1em}
\textbf{B. Extended Rules.}\\
\begin{minipage}{0.33\linewidth}
    \begin{mathpar}
        \inferrule[\underline{seq-assoc}]{
            \vdash \triple{P}{(c_1 \cseq c_2) \cseq c_3}{Q}
        }{
            \vdash \triple{P}{c_1 \cseq (c_2 \cseq c_3)}{Q}
        }
    \end{mathpar}
\end{minipage}
\begin{minipage}{0.66\linewidth}
    \begin{mathpar}
        \inferrule[\underline{if-seq}]{
                \vdash \triple{P}{\cif{e}{c_1\cseq c_3}{c_2\cseq c_3}}{Q}
        }{
            \vdash \triple{P}{(\cif{e}{c_1}{c_2}) \cseq c_3}{Q}
        }
    \end{mathpar}
\end{minipage}

\begin{minipage}{0.33\linewidth}
    \textbf{C. Derived Rules.}
\end{minipage}
\begin{minipage}{0.6\linewidth}
\ \\
\begin{mathpar}
        \inferrule*[left=hoare-conseq-pre]{
            P \vdash P' \\
            \vdash\triple{P'}{c}{Q}
        }{
            \vdash\triple{P}{c}{Q}
        }
    \end{mathpar}
\end{minipage}
\vspace{1em}
\label{fig:example}
\caption{A example Hoare logic for a toy language: primary rules and extended rules}
\end{figure}

But in practice \cite{VST-Floyd}, a verification tool can be much easier to use if more rules are added to aid our proofs.
Figure~\ref{fig:example}.B shows two extended proof rules for the toy language.
\textsc{seq-assoc} changes the associativity of sequential composition and allows provers to verify a program both from the beginning to the end in the forward style and from the end to the beginning in the backward style;
\textsc{if-seq} distributes the command $c_3$ after an if-block to two branches inside it, and in this way, instead of finding an intermediate assertion that merges post-conditions of two branches $c_1$ and $c_2$, provers can directly verify $c_1\cseq c_3$ and $c_2\cseq c_3$, with no obligation to find a common post-condition for two branches.
In this paper, we also incorporate control flow reasoning in program logics, which allows realistic program verification and this induces more extended rules with increased complexity, which are also covered in this paper.

We make a distinction between extended rules and derived rules.
Derived rules are directly derived from primary rules with mostly trivial proofs, and these proofs are the same across different embedding of a logic with same primary rules.
Figure~\ref{fig:example}.C shows an example of derived proof rules.
\textsc{hoare-conseq-pre} is easily derivable from \textsc{hoare-conseq} and it allows weakening the precondition but keep the postcondition untouched.
This paper focuses on extended rules, whose proofs are usually non-trivial and differs across different embeddings.
Both derived rules and extended rules are useful in practice and we admit both as necessary components of a verification tool.


\paragraph{Extended Rules \& Hoare Logic Embeddings.}
To utilize these extended rules in program verifications, developers should first prove them in their logics.
However, these extended rules may not be easily provable in every foundational verification tool, e.g., shallowly embedded VST does not provide \textsc{seq-inv} (in figure~\ref{fig:extended}) due to its tremendous proof burden while we can develop a simple formal proof of it in our deeply embedded VST.
This is because different verification tools may choose various ways, i.e., various embeddings, to formalize their program languages and program logics to obtain unique features.
For example, a deeply embedded program logic defines the logic by inductive proof trees, while a shallowly embedded one directly defines Hoare triple's validity using a program's semantics.

Due to differences in embedding methods, not all these extended rules are valid under each embedding, and challenges we might encounter in their proofs also vary from embedding to embedding.
Therefore, in this paper, we (1) \textbf{present four mainstream Hoare logic embeddings}\footnote{
    One deep embedding, and three shallow embedding respectively based on big-step semantics, weakest precondition, and continuation.
} based on our survey of Hoare-logic-based program verification projects including VST, Iris, FCSL, etc., (2) \textbf{identify a set of extended proof rules} that can benefit verification automation; and (3) \textbf{formally verify these extended proof rules under these different embeddings}.

\paragraph{Another Contribution: A Lifting Approach.}

We summarize main challenges in proofs of these extended rules and find that most proofs of extended proof rules in shallowly embedded program logics is more complicated than those in the deeply embedded one.
It would take tremendous proof effort to equip verification tools using shallow embeddings with these extended proof rules.
\textbf{Another contribution of our paper is to present a much easier way to extended shallowly embedded program logic with extended proof rules:}
we can first lift the shallowly embedded logic into a deeply embedded one with acceptable proof effort and then use simpler proofs of extended rules under the deep embedding.

We may not be the first one to discover and utilize such approach, but to our knowledge, it has never been formally presented in literatures and we believe it can help optimize existing foundational verification tools.
To evaluate the lifting approach, we apply it to two existing verification projects.
\begin{itemize}
    \item[$\triangleright$] We lift the shallowly embedded VST \cite{VST}, a verification framework for C programs, to the deeply embedded VST, which supports most extended rules and is the basis of the VST currently in the industrial use \cite{VST}.
    \item[$\triangleright$] We also instantiate it on the Iris \cite{jung2018iris} framework.
    We first extend the shallowly embedded Iris framework to support control flow reasoning in our Iris-CF.
    Then we encode a deeply embedded logic, Iris-Imp, above it using the lifting approach.
\end{itemize}

Our survey and formalization techniques can benefit future verification tools in following ways.
Firstly, our surveys and proofs indicate the level of support for extended proof rules under differently embedded program logics.
Knowing which extended proof rules a verification tool supports, verification tool users may decide in advance which tool best matches their proof goals.
And by understanding advantages and limitations of different embedding methods, verification tool developers can design proof systems satisfying the growing industrial demands.
Secondly, our lifting approach can help existing verification tools to improve their framework by implementing extended rules with alleviated proof burden, as we demonstrate in our VST and Iris-Imp projects.

\paragraph{Paper Structure.}

Firstly, in section~\ref{sec:rules}, we fix primitives of the programming language and primary proof rules of the program logic that this paper focuses on.
Based on this logic, we present three categories of extended proof rules and explain how they benefit verification tools.

We then clarify the concept of deep/shallow embeddings for programming languages, assertion languages, and program logics, along with the settings of this paper in section~\ref{sec:nomen}, since they will affect the correctness of extended proof rules.
In section~\ref{sec:embed}, we formally define a deep embedding and three shallow embeddings of the Hoare logic supporting control flow reasoning.
We also briefly review existing Hoare-logic based verification frameworks using these embeddings.

Then, in section~\ref{sec:proof}, we present formal proofs of these extended rules under each embedding method and discuss their main challenges.
In section~\ref{sec:d2s}, we present our approach to lift a shallowly embedded program logic into a deeply embedded one to avoid otherwise challenging proofs of extended rules.
This approach relies on careful choices of primary rules, which is explained in section~\ref{sec:choice}.

After that, we discuss various extensions to the Hoare logic and its embedding in section~\ref{sec:hypo}.
They include separation logic, procedure calls, total correctness, non-determinism, and impredicative assertions.

In section~\ref{sec:project}, we apply our lifting method to shallowly embedded VST to obtain our deeply embedded VST.
We then present Iris-CF, a shallowly embedded program logic which extends Iris to support control flow reasoning, and we lift Iris-CF into a deeply embedded logic, Iris-Imp, and equip it with extended rules for the demonstration of our lifting method.

Lastly, we discuss related works of program logic embeddings in section~\ref{sec:related-embed} and conclude the paper in section~\ref{sec:conclusion}.

We formalize all results of this paper in Coq in a repository\cite{exrule-repo}.
It contains formalizations and proofs of extended rules under three different shallow embeddings, the deeply embedded VST, and our Iris-CF and Iris-Imp.
\section{Program Logic \& Extended Rules}
\label{sec:rules}

A verification tool always comes with some very basic primary rules, like compositional rules, the consequence rule, and singleton command rules.
However, we could further enrich its capability by adding extended proof rules, some of which will be of great assistance to users for proof simplification and automation.

In this section, we first present in section~\ref{sec:setting} a toy language and a set of primary rules for the program logic that we will use for demonstration throughout this paper.
Then based on this program logic, we introduce in the remainder of this section three categories of useful extended proof rules: transformation rules (section~\ref{sec:rules-trans}), structural rules (section~\ref{sec:rules-struct}), and inversion rules (section~\ref{sec:rules-inv}).
We demonstrate some representative rules in each category (figure~\ref{fig:extended}) and show their potential usages.



\subsection{The Toy Language and the Program Logic}
\label{sec:setting}

\begin{figure}[h]    
\small\paragraph{While-CF Language} \

\begin{minipage}{\linewidth}
\begin{center}
    $x, y \in \text{program-variable}$ \ \
    $a, b \in \text{logic-variable}$ \\
    $v \in \text{value}$ \ \
    $e \in \text{expression}$ \ \
    $\sigma \in \text{state}$
\end{center}
$$
\begin{array}{rcl}
c \in \text{command} &:=& \cskip \lsep x = e \lsep c_1 \cseq c_2
\lsep \cif{e}{c_1}{c_2} \\
&\lsep& \cfor{c_1}{c_2} \lsep \cbreak \lsep \ccontinue
\end{array}
$$
\end{minipage}

\small\paragraph{Assertion Language} \
\begin{minipage}{\linewidth}
$$
\begin{array}{rcl}
P,Q,R \in \text{assertion} &:=&
\top \lsep \bot \lsep \lnot P \lsep P \land Q \lsep P \lor Q
\lsep P \rightarrow Q \\
&\lsep& \forall a. P \lsep \exists b. P \lsep \doublebrackets{e} = v \lsep P[x\mapsto e] \lsep \cdots
\end{array}
$$
\end{minipage}

\small\paragraph{Primary Proof Rules} \
\vspace{-2ex}
\begin{mathpar}
    \inferrule[hoare-skip]{}{
        \vdash \triple{P}{\cskip}{P, [\bot, \bot]}
    }
    \and
    \inferrule[hoare-break]{}{
        \vdash \triple{P}{\cmdbreak}{\bot, [P, \bot]}
    }
    \and
    \inferrule[hoare-continue]{}{
        \vdash \triple{P}{\cmdcontinue}{\bot, [\bot, P]}
    }
    \and
    \inferrule[hoare-seq]{
        \vdash \triple{P}{c_1}{Q', [\vec{R}]} \\\\
        \vdash \triple{Q'}{c_2}{Q, [\vec{R}]}    
    }{
        \vdash \triple{P}{c_1 \cseq c_2}{Q, [\vec{R}]}        
    }
    \and
    \inferrule[hoare-loop]{
        \vdash \triple{P}{c_1}{I, [Q, I]} \\\\
        \vdash \triple{I}{c_2}{P, [Q, \bot]}
    }{
        \vdash \triple{P}{\cfor{c_1}{c_2}}{Q, [\bot, \bot]}
    }
    \and
    \inferrule[hoare-if]{
        \vdash \triple{P \land \doublebrackets{e} = \btrue}{c_1}{Q, [\vec{R}]} \\\\
        \vdash \triple{P \land \doublebrackets{e} = \bfalse}{c_2}{Q, [\vec{R}]}
    }{
        \vdash \triple{P}{\cmdif e \cmdthen c_1 \cmdelse c_2}{Q, [\vec{R}]}
    }
    \and
    \inferrule[hoare-assign]{}{
        \vdash \triple{P[x \mapsto e]}{x = e}{P, [\bot, \bot]}
    }
    \and
    \inferrule[hoare-consequence]{
        \vdash \triple{P'}{c}{Q', [R_\ekb', R_\ekc']} \\\\
        P \vdash P' \and Q' \vdash Q \and
        R_\ekb' \vdash R_\ekb \and R_\ekc' \vdash R_\ekc
    }{
        \vdash \triple{P}{c}{Q, [R_\ekb, R_\ekc]}
    }
\end{mathpar}

\caption{The While-CF Programming Language and Primary Proof Rules}
\label{fig:setting}
\end{figure}


The While-CF language (\textbf{While}-language with \textbf{c}ontrol \textbf{f}low commands) in figure~\ref{fig:setting} includes the assignment statement, $x = e$, which assigns the value of $e$ to variable $x$, and empty command $\cskip$ which does nothing.
It includes three basic structural commands: $c_1 \cmdseq c_2$ executes $c_1$ and $c_2$ in sequence; $\cmdif e \cmdthen c_1 \cmdelse c_2$ is the regular if-statement;
$\cfor{c_1}{c_2}$ is the for-loop in C language style, where $c_1$ is the loop body and $c_2$ is the increment step after the execution of each loop iteration.\footnote{
    Although the language is named While-CF, we use the for-loop instead of the while-loop because the for-loop will yield more interesting extended proof rules, and it is also the choice of the VST.
    It is also non-trivial because a continue command in the for-loop body $c_1$ does not skip the increment step $c_2$, which cannot be encoded by a while-loop.
}
It has $\cmdbreak$ command and $\cmdcontinue$ command to manipulate control flows in a loop.

Figure~\ref{fig:setting} shows some terms of the assertion language.
It includes basic first-order logic terms and terms to support express evaluation and variable assignment.
The term $\doublebrackets{e}$ denotes the value of $e$ evaluated on given state $\sigma$, i.e., $\text{eval}(e, \sigma)$, and $\doublebrackets{e} = v$ means expression $e$ evaluates to value $v$.
$P[x \mapsto e]$ denotes the assertion after substitution of all occurrences of $x$ by the value of $e$ evaluated on the current state.

Primary proof rules are listed in figure~\ref{fig:setting}.
Intuitively and informally, $\vdash \triple{P}{c}{Q, [R_\ekb, R_\ekc]}$\footnote{We use $[\vec{R}]$ as a syntax sugar for $[R_\ekb, R_\ekc]$.} means that: starting from any program state satisfying pre-condition $P$, 
\begin{description}
    \item[safety:] the execution of the command $c$ never cause an error,
    \item[correctness:] and if the execution of $c$ terminates and its termination is caused by $\cbreak$ or $\ccontinue$, the post-state will satisfy corresponding control flow post-condition $R_\ekb$ and $R_\ekc$.
    Otherwise, if the program terminates naturally, the post-state will satisfy normal post-condition $Q$.
\end{description}

The \textsc{hoare-skip}, \textsc{hoare-break} rule, and \textsc{hoare-continue} are rules for singleton control flow commands that introduces the pre-condition into corresponding control flow post-conditions.
The singleton command rule \textsc{hoare-assign}\footnote{
Here we use the backward version of \textsc{hoare-assign}.
There is another forward version of \textsc{hoare-assign}, $\vdash \triple{P}{x=e}{\exists v.\, P[x \mapsto v] \land \doublebrackets{x} = \doublebrackets{e[x \mapsto v]}}$, which can be derived from the former one and is often used in forward symbolic execution.
We use both versions in this paper without notice.
} asserts that the value of variable $x$ used in the pre-condition will be changed in the post-condition.
The \textsc{hoare-seq} rule splits the proof of two sequentially composed commands.
The \textsc{hoare-loop} rule uses two triples of the loop body and the increment step to build a specification for the for-loop statement.
The \textsc{hoare-if} rule splits the proof of if-statement into two branches to prove them separately.
The \textsc{hoare-consequence} rule allows provers to strengthen the pre-condition and weaken the post-condition.



Although this programming language is a simple toy example, we may extend it with heap manipulation commands and function invocations as the paper will discuss in section~\ref{sec:hypo}.
Other primary rules related to separation logic and function invocation can also be included.
But to demonstrate our results about extended rules in section~\ref{sec:rules} and section~\ref{sec:proof}, the language and proof rules in figure~\ref{fig:setting} are sufficient.
These primary rules are expressive enough to reason about loop control flows and also present in real verification tools like VST\cite{VST}.

In figure~\ref{fig:setting}, the assertion language and the program logic are presented in a deep embedding style using the syntax of assertions and derivation rules, but it is only for the purpose of demonstration.
However, both of them can also be defined using shallow embeddings directly using their interpretation and semantics.
We will soon investigate into different embeddings of them in section~\ref{sec:nomen}.

\begin{figure}[h]
\paragraph{Extended Rules --- Transformation Rules} \
\vspace{-2ex}
\begin{mathpar}
\inferrule*[left=if-seq]{
    \vdash \triple{P}{\cif{e}{c_1 \cseq c_3}{c_2 \cseq c_3}}{Q, [\vec{R}]}
}{
    \vdash \triple{P}{(\cif{e}{c_1}{c_2}) \cseq c_3}{Q, [\vec{R}]}
} \and
\inferrule[loop-nocontinue]{
    c_1, c_2 \text{ contain no } \ccontinue \\
    \vdash \triple{P}{\cfor{c_1 \cseq c_2}{\cskip}}{Q, [\vec{R}]}
}{
    \vdash \triple{P}{\cfor{c_1}{c_2}}{Q, [\vec{R}]}
} \and
\inferrule*[left=loop-unroll1]{
    \vdash \triple{P}{c_1}{P_1, [R_\ekb, P_1]} \\
    \vdash \triple{P_1}{c_2}{P_2, [R_\ekb, P_2]} \\
    \vdash \triple{P_2}{\cfor{c_1}{c_2}}{Q, [\vec{R}]}
}{
    \vdash \triple{P}{\cfor{c_1}{c_2}}{Q, [\vec{R}]}
}
\end{mathpar}

\paragraph{Extended Rules --- Structural Rules} \
\vspace{-2ex}
\begin{mathpar}
    \inferrule*[lab=hoare-ex]{
        \text{forall } x.\, \vdash \triple{P(x)}{c}{Q, [\vec{R}]}
    }{
        \vdash \triple{\exists x.\, P(x)}{c}{Q, [\vec{R}]}
    }
    \and
    \inferrule*[left=nocontinue]{
        c \text{ contains no } \ccontinue \\\\
        \vdash \triple{P}{c}{Q, [R_\ekb, R_\ekc]}
    }{
        \vdash \triple{P}{c}{Q, [R_\ekb, R'_\ekc]}
    }
\end{mathpar}

\paragraph{Extended Rules --- Inversion Rules} \
\vspace{-2ex}
\begin{mathpar}
    \inferrule[seq-inv]{
        \vdash \triple{P}{c_1 \cseq c_2}{Q, [\vec{R}]}
    }{
        \text{exists } Q'.\, \vdash \triple{P}{c_1}{Q', [\vec{R}]} \\\\
        \quad \text{and } \vdash \triple{Q'}{c_2}{Q, [\vec{R}]}
    }
    \and
    \inferrule[loop-inv]{
        \vdash \triple{P}{\cfor{c_1}{c_2}}{Q, [\bot, \bot]}
    }{
        \text{exists } I_1, I_2.\, \vdash \triple{I_1}{c_1}{I_2, [Q, I_2]} \\\\
        \quad \text{and } \vdash \triple{I_2}{c_2}{I_1, [Q, \bot]} \text{ and } P \vdash I_1
    }
    \and
    \inferrule*[left=if-inv]{
        \vdash \triple{P}{\cmdif e \cmdthen c_1 \cmdelse c_2}{Q, [\vec{R}]}
    }{
        \vdash \triple{P \land \doublebrackets{e} = \btrue}{c_1}{Q, [\vec{R}]} \text{ and }
        \vdash \triple{P \land \doublebrackets{e} = \bfalse}{c_2}{Q, [\vec{R}]}
    }
\end{mathpar}

\caption{Three Categories of Extended Rules and Their Representatives}
\label{fig:extended}
\end{figure}

There are lots of useful extended rules that can be implemented above the program logic in figure~\ref{fig:setting}, and we organize them into three categories and present some representatives of them in figure~\ref{fig:extended}.
We explain in detail the usage of these extended rules in the following sections.

\subsection{Transformation Rules}
\label{sec:rules-trans}

It is often the case that we know two different programs are semantically equivalent and we should be able to substitute one of them for the other in a Hoare triple while preserving the triple's validity.
Rules like \textsc{if-seq}, \textsc{loop-nocontinue}, and \textsc{loop-unroll1}\footnote{
As far as we know, Andrew W. Appel and the development team of VST has integrated these transformation rules into VST for improving proof automation since 2017.
} allow such semantic preserving transformations, and provers can transform the program into the one that is easier to verify and sometimes automate its proof.
\textit{We classify these extended rules that transforms the command in a Hoare triple as transformation rules.}

\

\noindent\textit{\textbf{The proof rule}} \textsc{if-seq}.
The program $(\cif{e}{c_1}{c_2}) \cseq c_3$ and $\cif{e}{c_1 \cseq c_3}{c_2 \cseq c_3}$ behaves similarly and we can use \textsc{if-seq} to transform the former into the latter.

\begin{example}
For a concise demonstration, we first start with a toy example (for which exists another work-around instead of using \textsc{if-seq} and we will discuss it soon), where we want to prove the following for some given $m, e, c_2, Q, \vec{R}$,
$$
\vdash \triple{
    \underset{P}{\underbrace{\exists n. \doublebrackets{x} = n \times m \land \doublebrackets{y} = m}}
}{
    (\cif{e}{\cbreak}{\underset{c_1}{\underbrace{z=x/y}}}) \cseq c_2  
}{
    Q, [\vec{R}]
}
$$
Regularly, we need to explicitly provide an intermediate assertion $Q'$ and show\footnote{We abbreviate assertions $\doublebrackets{e} = \text{true}$ as $\doublebrackets{e}$ and $\doublebrackets{e} = \text{false}$ as $\lnot \doublebrackets{e}$ for the sake of space.}
\begin{equation}
\label{eq:ifseq-1}
\inferrule*[leftskip=10pt, rightskip=15pt]{
    \inferrule*[rightskip=15pt]{
        \inferrule*[rightskip=15pt]{
            \inferrule[]{
                \cdots
            }{
                P \wedge \doublebrackets{e} \vdash R_\ekb
            }
        }{
            \vdash\{P \wedge \doublebrackets{e}\}
            \cbreak \{Q', [\vec{R}]\}
        }
        \and
        \inferrule*[]{
            \cdots
        }{
            \vdash \{P \wedge \lnot\doublebrackets{e}\}
            c_1\{Q', [\vec{R}]\}
        }
    }{
        \vdash \triple{P}{(\cif{e}{\cbreak}{c_1})}{Q', [\vec{R}]}
    } \and
    \inferrule*[]{\cdots}{
        \vdash \triple{Q'}{c_2}{Q, [\vec{R}]}
    }
}{
    \vdash \triple{P}{(\cif{e}{\cbreak}{c_1}) \cseq c_2}{Q, [\vec{R}]}
}
\end{equation}
But when we move $c_2$ inside the if-statement by $\textsc{if-seq}$, we only need to prove
\begin{equation}
    \label{eq:ifseq-2}
    \inferrule*[]{
        \inferrule*[]{
            \inferrule*[rightskip=15pt]{
                \inferrule[]{
                    \cdots
                }{
                    P \wedge \doublebrackets{e} \vdash R_\ekb
                }
            }{
                \vdash\{P \wedge \doublebrackets{e}\}
                \cbreak \cseq c_2 \{Q', [\vec{R}]\}
            } \and
            \inferrule*[]{\cdots}{
                \vdash \triple{ P \wedge \lnot\doublebrackets{e} }{ c_1 \cseq c_2 }{Q, [\vec{R}]}
            }
        }{
            \vdash \triple{P}{\cif{e}{\cbreak \cseq c_2}{c_1 \cseq c_2}}{Q, [\vec{R}]}
        }
    }{
        \vdash \triple{P}{(\cif{e}{\cbreak}{c_1}) \cseq c_2}{Q, [\vec{R}]}
    }
\end{equation}
As the reader may wonder, to prove the sequential composition $c_1\cmdseq c_2$ in \eqref{eq:ifseq-2}, we still need to prove the last two goals for $c_1$ and $c_2$ in \eqref{eq:ifseq-1}.
The difference here is that when verifying \eqref{eq:ifseq-2}, we can symbolically execute $c_1$ along with $c_2$ from the beginning
to the end.
But in \eqref{eq:ifseq-1}, we need to symbolically execute two branches to a unifying post-condition $Q'$ before executing $c_2$, while symbolic executions of two branches might produce different free variables in their postconditions and it brings troubles.
We explain this difficulty in detail in following paragraphs.

\textbf{Symbolic execution} in theorem provers is a technique for proof automation. It simulates the execution of the program and apply the execution's effect to the pre-condition according to proof rules.
Both VST and Iris uses such technique to automate their proofs.
For example, symbolic execution can automatically transform the proof goal of the false branch (sequentially composed with a skip command for demonstration) in \eqref{eq:ifseq-1} as \eqref{eq:symb-exe-eg} shows (arrows represent symbolic execution steps by labelled proof rules).
\begin{equation}
    \label{eq:symb-exe-eg}
    \begin{array}{rl}
        &\triple{\exists n.\,\doublebrackets{x} = n \times m \land \doublebrackets{y} = m \land \cdots}{z = x / y \cseq \cskip}{Q', [\vec{R}]} \\
        \xrightarrow[\ ]{\textsc{hoare-ex}} &\triple{\doublebrackets{x} = n \times m \land \doublebrackets{y} = m \land \cdots}{z = x / y \cseq \cskip}{Q', [\vec{R}]} \\
        \xrightarrow[{\vspace{-10pt}\textsc{hoare-seq}}]{\textsc{hoare-assign}} &\triple{\doublebrackets{z} = n \land \doublebrackets{x} = n \times m \land \doublebrackets{y} = m \land \cdots}{\cskip}{Q', [\vec{R}]} \\
        \vspace{-9pt} \\
        \xrightarrow[\textsc{hoare-skip}]{\textsc{hoare-assign}} &(\doublebrackets{z} = n \land \doublebrackets{x} = n \times m \land \doublebrackets{y} = m \land \cdots) \vdash Q'
    \end{array}
\end{equation}
During the symbolic execution, it will apply \textsc{hoare-ex} (demonstrated in section~\ref{sec:rules-struct}) to eliminate quantifiers in the pre-condition and introduce the bounded variable as a free one into the context to reason about it.
On the contrary, symbolic execution cannot automatically determine which free variables should be reverted back into the assertion as bounded ones in which order so that the following execution steps can proceed.

Back to the example, symbolic execution of $c_1$ in \eqref{eq:symb-exe-eg} will introduce $n$ into the context, which is necessary for forwarding $z=x/y$.
To achieve a unifying $Q'$ in \eqref{eq:ifseq-1}, we need to revert $n$ back into $Q'$ in the symbolic execution of $c_1$, since the other branch does not introduce $n$ and it remains bounded.
In other words, the $Q'$ in \eqref{eq:symb-exe-eg} should become $\exists n.\,\doublebrackets{z} = n \land \doublebrackets{x} = n \times m \land \doublebrackets{y} = m \land \cdots$, instead of the final assertion in \eqref{eq:symb-exe-eg}.
However, in \eqref{eq:ifseq-2}, after we introduce $n$ in  the execution of $c_1$, we do not need to revert it back and when $c_2$ access $z$ and $n$ during its execution, we do not need to introduce $n$ again.
\textsc{if-seq} reduces the complexity of symbolic execution in this case.
\end{example}

\paragraph{Remark.}
As we have mentioned, the toy example can be resolved without applying \textsc{if-seq}.
As the proof scheme in \eqref{eq:another} indicates, we can extract $n$ in $P$ before we start the proof of the sequential composition and in this way, there is no need to revert $n$ back again when finding the post-conditions of if-branches.
\begin{equation}
    \label{eq:another}
    \inferrule*[]{
        \inferrule*[]{
            \inferrule*[]{
                \cdots
            }{
                \vdash \triple{\cdots}{\cif{e}{\cbreak}{c_1}}{\cdots}
            }
            \and
            \inferrule*[]{
                \cdots
            }{
                \vdash \triple{\cdots}{c_2}{\cdots}
            }
        }{
            \vdash \triple{\doublebrackets{x} = n \times m \land \doublebrackets{y} = m}{(\cif{e}{\cbreak}{c_1}) \cseq c_2}{Q, [\vec{R}]}
        }
    }{
        \vdash \triple{\exists n. \doublebrackets{x} = n \times m \land \doublebrackets{y} = m}{(\cif{e}{\cbreak}{c_1}) \cseq c_2}{Q, [\vec{R}]}
    }
\end{equation}

However, this does not mean the example is meaningless, because in many cases, the existential quantifier in the precondition cannot and should not be eliminated in advance.
For example, the precondition below is an assertion to state the storage of a linked list in the heap.
To prove the Hoare triple below for some $x, l, e, c_1', c_2, Q, \vec{R}$, we are unable to eliminate all quantifiers in listrep by \textsc{hoare-ex} without a fixed length of list $l$.
In practice, we should be able to prove it without such knowledge and keep the listrep predicate folded until reaching a test-empty operation to the linked list, and then should we unfold the predicate once and extract quantifiers.
\textsc{if-seq} is the only option in these cases.
$$
\vdash \triple{
    \text{listrep}(x, l)
}{
    (\cif{e}{\cbreak}{c_1'}) \cseq c_2  
}{
    Q, [\vec{R}]
}
$$
\vspace{-8pt}
$$
\begin{aligned}
    \text{where } \text{listrep}(x,l) &\triangleq (l = \text{nil} * \emp) \\
    &\lor (\exists y, v, l'.\, l = v :: l' * x\mapsto (v, y) * \text{listrep}(y, l'))
\end{aligned}
$$

\

\

\noindent\textit{\textbf{The proof rule}} \textsc{loop-nocontinue}.
Another exemplary transformation rule in figure~\ref{fig:extended} is \textsc{loop-nocontinue}, which allows merging loop body and incremental step into one command when neither of them contains $\ccontinue$.
In this way, users only need to find one loop invariant for the loop body instead of two (one before the loop body and one before the increment step) when reasoning about a loop with no $\ccontinue$.

\begin{example}
This example combines \textsc{loop-nocontinue} and \textsc{if-seq}.
The program divides $x$ by $y$ twice but separate two divisions in the incremental step and the loop body.
\vspace{-12pt}
$$
\vdash 
    \Bigg\{\underset{P}{\underbrace{\exists n. \begin{array}{c}
        \doublebrackets{x}=m^{2n}\\
        \land \doublebrackets{y} = m
    \end{array}}}\Bigg\} 
    \cfor{
    \texttt{if } \underset{e}{\underbrace{x>1}} \
    \begin{array}{l}
    \\
      \texttt{then } \cbreak \\
      \texttt{else } \underset{c_1}{\underbrace{z=x/y}}
    \end{array}
    }{\underset{c_2}{\underbrace{x=z/y}}}
    \Bigg\{\underset{Q}{\underbrace{\doublebrackets{x} = 1}}, [\bot]\Bigg\}
$$
The precondition $P$ can serve as a loop invariant, but regularly, we still need to find another loop invariant after the loop body but before the incremental step.
However, by the symbolic execution \eqref{eq:loop-nocont-eg}, it first removes the incremental step and with $\cskip$ as the incremental step, it is obvious that another loop invariant is the same as $P$.
Then it can open the loop and prove that the loop body obeys the invariant.
With the help of \textsc{if-seq}, it can automatically reach two easily provable proof goals without manual assistance.
\begin{equation}
\label{eq:loop-nocont-eg}
\begin{aligned}
&\vdash \triple{P}{
    \cfor{\cif{e}{\cbreak}{c_1}}{c_2}
}{Q, [\bot]} \\
\xrightarrow[\textsc{nocontinue}]{\textsc{loop-}} &\vdash \triple{P}{
    \cfor{(\cif{e}{\cbreak}{c_1}) \cseq c_2}{\cskip}
}{Q, [\bot]} \\
\xrightarrow[\textsc{if-seq}]{\textsc{hoare-loop}} &\vdash \triple{P}{
    \cif{e}{\cbreak \cseq c_2}{(c_1 \cseq c_2)}
}{P, [P, Q]} \\
\xrightarrow[\textsc{hoare-break}]{\textsc{hoare-if}} &
    P \land \lnot \doublebrackets{e} \vdash Q \quad \text{and} \quad
    \vdash \triple{P}{c_1 \cseq c_2}{P, [P, Q]}
\end{aligned}
\end{equation}
\end{example}




\noindent\textit{\textbf{The proof rule}} \textsc{loop-unroll1}.
One more transformation rule we want to mention here is the \textsc{loop-unroll1} rule.
It allows peeling the first iteration of a for-loop and prove it as a separate triples.
This is especially useful in two cases: when the first iteration of a loop does something different from the remaining iterations, e.g., initialization of some data, and the loop invariant for the remaining iterations can be greatly simplified without the first iteration; and when the loop only runs constant number of iterations, it is easier to simply unfold it into a sequence of loop bodies and use symbolic execution to prove them automatically.

Although the \textsc{loop-unroll1} rule in figure~\ref{fig:extended} does not perfectly match our criterion for transformation rules, their essence are the same.
The \textsc{loop-unroll1} rule has its counterpart, the following \textsc{while-unroll1} for the while command, in the While language without control flow statements.
It unrolls the first iteration of the while-loop into an if-statement.
\begin{mathpar}
\inferrule*[left=while-unroll1]{
    \vdash \triple{P}{
        \cif{b}{c \cseq \cwhile{b}{c}}{\cskip}
    }{Q}
}{
    \vdash \triple{P}{\cwhile{b}{c}}{Q}
}
\end{mathpar}
If there are control flow commands like $\cbreak$ or $\ccontinue$, we cannot simply transform the for-loop into an if-statement because these commands will jump out of the scope of the original loop and interfere executions outside the loop.
As a result, we need to split the if-statement in \textsc{while-unroll1} into several Hoare triples in \textsc{loop-unroll1} and specify control flow post-conditions for the loop body and the increment command.

\subsection{Structural Rules}
\label{sec:rules-struct}

\textit{We classify extended rules that transforms pre-/post-conditions in proof goals as structural rules.}
These rules allow provers to adjust pre-/post-conditions into forms that permit more organized proofs.

A typical case is the use of the primary rule \textsc{hoare-consequence}\footnote{
    However, we do not formalize \textsc{hoare-consequence} rule as an extended rule but as a primary rule due to historical reasons.
    It is part of the primary rules in Cook's soundness and completeness proof of the Hoare logic \cite{cook1978soundness} and is admitted as a primary rule of any Hoare logic ever since.
}.
For example, in the last step of symbolic execution \eqref{eq:loop-nocont-eg} from the last section, we need to prove the following assumption for the then-branch in order to apply \textsc{hoare-if} to verify the if-statement, where direct application of \textsc{hoare-break} does not work since the pre-condition does not match the break post-condition $P$ and other two post-conditions are not $\bot$.
We will use \textsc{hoare-consequence} to weaken the pre-condition into $Q$ and strengthen the post-condition $P$ into $\bot$ to match the form of \textsc{hoare-break}.
In this way, we can directly apply \textsc{hoare-break} to the prove the new triple.
$$
\inferrule*[]{
    P \land \lnot \doublebrackets{e} \vdash Q
    \and
    \bot \vdash P
    \and
    \vdash \triple{Q}{\cbreak}{\bot,[\bot,Q]}
}{
    \vdash \triple{(\underset{P}{\underbrace{\exists n. \doublebrackets{x}=m^{2n} \land \doublebrackets{y} = m}}) \land \doublebrackets{x} \leq 1}{\cbreak}{P, [P, \underset{Q}{\underbrace{\doublebrackets{x} = 1}}]}
}
$$

In this section, we mainly discuss two structural rules in figure~\ref{fig:extended}, \textsc{hoare-ex} and \textsc{nocontinue}, which we consider as extensions of \textsc{hoare-consequence}.
When separation logic is taken into consideration in section~\ref{sec:frame}, we will encounter two more structural rules, \textsc{frame} and \textsc{hypo-frame}, and we postpone their discussions until then.



\

\noindent\textit{\textbf{The proof rule}} \textsc{hoare-ex}.
As we have already seen in section~\ref{sec:rules-trans}, it is a common situation that we need to prove a Hoare triple whose the precondition is existentially quantified, e.g.,
loop invariants are usually existentially quantified\footnote{
    This is often necessary because logical variables utilized in the proof of the loop body usually are not exposed to the proof outside the loop.
    These variables should not be introduced before the proof of loop body and we need existential quantifiers to introduce this variables in the loop invariant.
}.
If provers can eliminate those existential quantifiers and extract bounded variables and related pure facts in preconditions into the meta-logic context,
they can then apply domain-specific theories to those extracted variables --- they are not buried in assertions any longer.
\textsc{hoare-ex} enable us to perform such extraction.

\begin{example}    
In the symbolic execution example \eqref{eq:symb-exe-eg} in section~\ref{sec:rules-trans} (rephrased below), the first step is applying \textsc{hoare-ex} to remove the existential quantifier.
To verify the assignment $z=x/y$, we know the result of $z$ will be $n$ and it must be a free variable so that we can perform substitution of $z$ by $n$ according to \textsc{hoare-assign}.
$$
\begin{array}{rl}
    &\triple{\exists n.\,\doublebrackets{x} = n \times m \land \doublebrackets{y} = m \land \cdots}{z = x / y \cseq \cskip}{Q', [\vec{R}]} \\
    \xrightarrow[\ ]{\textsc{hoare-ex}} &\triple{\doublebrackets{x} = n \times m \land \doublebrackets{y} = m \land \cdots}{z = x / y \cseq \cskip}{Q', [\vec{R}]} \\
    \cdots
\end{array}
$$
\textsc{hoare-ex} is different from and sometimes cannot be derived directly from \textsc{hoare-consequence}.
\textsc{hoare-consequence} changes pre-/post-conditions by entailments of the assertion logic, while \textsc{hoare-ex} is a direct entailment between two Hoare triples in the meta-logic.
\end{example}

\

\textit{\textbf{The proof rule}} \textsc{nocontinue}.
This rule allows us to modify the continue assertion arbitrarily since the program will never exit by continue.
We can use it as an enhancement to the consequence rule when we need to change the continue assertion.
We can use this proof rule to support derivations of loop-related extended rules.
For example, as we will see later in figure~\ref{fig:loop-nocont-eg}, we can prove \textsc{loop-nocontinue} using the \textsc{nocontinue} and inversion rules, which is the proof taken by VST.

\subsection{Inversion Rules}
\label{sec:rules-inv}

Compositional rules allow us to combine proofs for separate modules into one proof for a larger program specification.
But in some cases, we would like to extract information for these modules from the complete specification for other purposes.
\textit{We classify extended rules that extract premises of a primary rule from its goal as inversion rules.}




\Cref{fig:extended} shows two example inversion rules, the \textsc{seq-inv} and \textsc{loop-inv}.
\textsc{seq-inv} is the reversed sequencing rule, where the triple about $c_1 \cseq c_2$ gives us the triples of $c_1$ and $c_2$.
\textsc{loop-inv} is the reversed loop rule, which gives us the intermediate loop invariant.
They are typical rules that reproduce the premises in compositional rules from the original conclusion.
In the following sections, we discuss only the \textsc{seq-inv} rule because it is the most representative inversion rule and other ones have similar conclusions.

\begin{example}
With inversion rules, we can more easily destruct and reorganize proof trees in deeply embedded logic, and in our deeply embedded VST, they have been used extensively in proving other extended rules like transformation rules and structural rules.
For example, in figure~\ref{fig:loop-nocont-eg}, the transformation rule \textsc{loop-nocontinue} can be proven given \textsc{seq-inv} and \textsc{loop-inv}.
\begin{figure}[h]
    \vspace{-1em}
    \begin{mathpar}
    \inferrule*[Right=conseq, leftskip=5em]{
    \inferrule*[Right=hoare-loop, rightskip=0.5em, leftskip=0.5em]{
    \inferrule*[Right=nocontinue, rightskip=0.5em, leftskip=0.5em]{
    \inferrule*[Right=seq-inv, rightskip=0.5em, leftskip=0.5em]{
    \inferrule*[Right=skip-inv \& conseq, rightskip=0.5em, leftskip=0.5em]{
    \inferrule*[Right=loop-inv, rightskip=0.5em, leftskip=0.5em]{
    c_1, c_2 \text{ has no } \cmdcontinue \\\\
    \vdash \triple{P}{\cfor{c_1 \cseq c_2}{\cskip}}{Q, [\vec{R}]}
    }{
    \vdash \triple{I_1}{\cskip}{P, [Q, \bot]} \\ \vdash \triple{P}{c_1\cseq c_2}{I_1, [Q, P]}
    }
    }{
    \vdash \triple{P}{c_1\cseq c_2}{P, [Q, P]}
    }
    }{
    \vdash \triple{P}{c_1}{I_2, [Q, P]} \\ \vdash \triple{I_2}{c_2}{I_1, [Q, P]}
    }
    }{
    \vdash \triple{P}{c_1}{I_2, [Q, P]} \\ \vdash \triple{I_2}{c_2}{I_1, [Q, \bot]}
    }
    }{
    \vdash \triple{P}{\cfor{c_1}{c_2}}{Q, [\bot, \bot]}
    }
    }{
    \vdash \triple{P}{\cfor{c_1}{c_2}}{Q, [\vec{R}]}
    }
    \end{mathpar}
    \caption{A Derivation of \textsc{loop-nocontinue} based on Inversion Rules and Structural Rules}
    \label{fig:loop-nocont-eg}
    \end{figure}
We first extract information from the original triple by inversion rules, and use \textsc{nocontinue} to adjust the continue post-condition of the increment step $c_2$ to $\bot$ so that we can subsequently apply \textsc{hoare-loop} to obtain the triple of the new loop.
\end{example}

\begin{example}
More crucially, inversion rules enable the destruction of shallowly embedded triples into smaller pieces without unfolding the definition of the triple and users can reorganize existing ``proofs'' to produce new triples.

For example, consider an anotated program $\cfor{c_1\cmdseq \texttt{assert }I \cmdseq c_2}{c_3}$, where $\texttt{assert }I$ asserts that the assertion $I$ holds between executions of $c_1$ and $c_2$ and it can guide prover's verification.
Usually, the main burden of verifying a loop is to find the loop invariant.
But for this program, if we can check $\vdash \triple{P}{c_1}{I, [Q, \bot]}$ and $\vdash \triple{I}{c_2 \cmdseq c_3 \cmdseq c_1}{I, [Q, \bot]}$ to be true, then the following derivation directly proves the triple $\vdash\triple{P}{\cfor{c_1\cmdseq \texttt{assert }I \cmdseq c_2}{c_3}}{Q, [\bot, \bot]}$ for the loop.
\begin{mathpar}
\inferrule*[right=seq-inv]{
    \vdash \triple{P}{c_1}{I, [Q, \bot]} \and
    \vdash \triple{I}{c_2 \cmdseq c_3 \cmdseq c_1}{I, [Q, \bot]}
}{
    \inferrule*[right=seq, rightskip=3em]{
            \inferrule*[]{
                \vdash \triple{P}{c_1}{I, [Q, \bot]} \and
                \vdash \triple{I}{c_2}{I_2, [Q, \bot]} \\\\
                \vdash \triple{I_1}{c_1}{I, [Q, \bot]} \and
                \vdash \triple{I_2}{c_3}{I_1, [Q, \bot]}
            }{
                \vdash \triple{P \lor I_1}{c_1}{I, [Q, \bot]} \and
                \vdash \triple{I}{c_2}{I_2, [Q, \bot]} \and
                \vdash \triple{I_2}{c_3}{I_1, [Q, \bot]}
            }
    }{
        \inferrule*[right=loop, rightskip=0.5em]{
            \vdash \triple{P \lor I_1}{c_1\cmdseq \texttt{assert }I \cmdseq c_2}{I_2, [Q, \bot]} \and
            \vdash \triple{I_2}{c_3}{P \lor I_1, [Q, \bot]}
        }{
            \inferrule*[right=conseq]{
                \vdash\triple{P \lor I_1}{\cfor{c_1\cmdseq \texttt{assert }I \cmdseq c_2}{c_3}}{Q, [\bot, \bot]}
            }{
                \vdash\triple{P}{\cfor{c_1\cmdseq \texttt{assert }I \cmdseq c_2}{c_3}}{Q, [\bot, \bot]}
            }
        }
    }
}
\end{mathpar}
With the help of \textsc{seq-inv}, two intermediate assertions $I_1$ and $I_2$ are extracted for free.
By re-ordering commands into a new loop body $c_2 \cmdseq c_3 \cmdseq c_1$ using $I$ as the invariant, $P \lor I_1$ and $I_2$ become our loop invariants and we establish the final triple without any additional proofs.
Notice that the triples we need to provide as the premise of this derivation can be verified symbolic execution without the need to manually construct intermediate assertions $I_1$ and $I_2$.
This approach of using \textsc{seq-inv} to rearrange program orders helps simplify the proof of this kind of loops by making it easier to apply symbolic executions.



\end{example}


\paragraph{Summary.}
In summary, inversion rules have one-to-one correspondence to primary rules.
For each primary rule, we can derive an inversion rule that produces the premises of it from the conclusion.
People often apply inversion rules to assumptions to extract information from it and use these information to aid their proofs.
On the contrary, transformation rules and structural rules do not have such correspondence and are applied directly to the proof goals (the Hoare triple specification) to generate sub-goals with easier proofs.
Transformation rules change the program in the specification but keep original pre-/post-conditions, while structural rules only have effect on pre-/post-conditions but not the program.

Although the discussion of extended rules in this paper focuses on a programming language with control flow commands, most of these rules are still meaningful in a language that does not feature the control flow commands.
For example, \textsc{if-seq} and \textsc{hoare-ex} does not depend on the control flow post-condition.
And \textsc{loop-unroll1} has a counterpart \textsc{while-unroll} for loops without break and continue.
For any programming language and program logic, it will always have its own set of inversion rules corresponds to its primary rules.
Moreover, we believe adding other features to the language and the logic will yield more extended rules, but they are not well studied and is not the focus of this paper.


\section{Nomenclature}
\label{sec:nomen}



We have presented the program logic and extended rules in section~\ref{sec:rules}, but did not mention how they are formalized at all and we cannot check the correctness of these extended rules until we formally define the program logic.
This section clarifies the notion of the deep embedding and the shallow embedding, which are two different approaches of formalizing languages and logics.
In short, deep embeddings formalize structures while shallow embeddings formalize underlying semantics.

A program-logic-based foundational verification tools contains at least three elements where we could choose different embeddings: \textbf{the programming language}, \textbf{the assertion language} that describe program state properties, and \textbf{the program logic} that reasons about  functional correctness of a program.
Different verification projects choose different combinations of embeddings for these three elements as table~\ref{fig:table-choice} shows.
For shallow embeddings of program logics, we could further divide them into sub-categories: big-step based (BigS.), weakest precondition based (WP), and continuation based (Cont.), which we will discuss in detail in section~\ref{sec:embed}.

\begin{table}[h]
    \begin{center}
\resizebox{\textwidth}{!}{
    \begin{tabular}{ |c|c|c|c| } 
    \hline
        \begin{tabular}{c}
            \textbf{Verification Projects}
        \end{tabular} & \begin{tabular}{c}
            \textbf{Programming} \\\textbf{Language}
        \end{tabular} & \textbf{Program Logic} & \textbf{Assertion} \\
    \hline
    VST (before Sep. 2018) & Deep  & Shallow (Cont.) & Shallow \\ \hline
    VST (after Sep. 2018) & Deep  & Deep & Shallow \\ \hline
    Iris \cite{jung2018iris}, Iris-CF & Deep & Shallow (WP)  &Shallow\\ \hline
    FCSL\cite{FCSL-mechanized} & Shallow & Shallow (WP) &Shallow\\ \hline
    \begin{tabular}{c}
        Software Foundations \cite{SF}\\
        Vol. 2, Vol. 6
    \end{tabular} & Deep & Shallow (BigS.) \& Deep &Shallow\\ \hline
    \begin{tabular}{c}
        Simpl \cite{schirmer2006verification},
        CSimpl \cite{sanan2017csimpl}
    \end{tabular} & Deep & Deep &Shallow\\ \hline
    CAP \cite{yu2003building}, XCAP \cite{ni2006certified} & Deep & Deep &Shallow\\ \hline
    $\mu$C \cite{xu2016practical} & Deep & Deep & Deep\\ \hline
    \end{tabular}
}
    \end{center}
    \caption{Choices between shallow embedding and deep embedding}
    \label{fig:table-choice}
\end{table}

In the rest of this section, we explain what are shallow/deep embeddings for a programming language, an assertion language, and a program logic.


\subsection{Embeddings of Programming Languages}
When formalizing a language (e.g. a programming language or an assertion language), a deep embedding formalizes its syntax tree first and defines its meaning separately.
In contrast, a shallow embedding uses the language's intrinsic semantics as its definition directly.
For example, figure~\ref{fig:sample-lang} shows their differences in defining simple program expressions which only contain integer variables and addition.
The addition operation is defined as a syntax tree constructor in the deeply embedded one and we use another evaluation function to define the semantics of all expressions.
In the shallowly embedded one, the operation is directly defined as a function that computes the summation of its operands' evaluation results or a relation from the expression to the final result.

\begin{figure}[ht]
\centering
\small
$x : \text{variable\_name}$ \ \ \ \ \
$\sigma: \text{prog\_state}$ \ \ \ \ \
$\text{prog\_state} = \text{variable\_name} \to \mathbb{Z}$

\begin{multicols}{2}
Deeply Embedded Program:
$\begin{array}{l}   
e \in \text{expr} := \text{Const}(n) ~\vert~ \text{Var}(v) ~\vert~ \text{Add}(e_1,e_2) \\
\begin{aligned}
\text{eval}(\text{Const}(n), \sigma) &= n \\
\text{eval}(\text{Var}(x), \sigma) &= \sigma(x) \\
\text{eval}(\text{Add}(e_1,e_2), \sigma) &= \text{eval}(e_1, \sigma) + \text{eval}(e_2, \sigma)
\end{aligned}
\end{array}$

Shallowly Embedded Program: 
$\begin{array}{l}
\text{expr} \triangleq \text{prog\_state} \to \mathbb{Z} \\
\begin{aligned}
\text{Const} &= \lambda n. \lambda \sigma. n \\
\text{Var} &= \lambda x. \lambda \sigma. \sigma(x) \\
\text{Add} &= \lambda e_1. \lambda e_2. \lambda \sigma. e_1(\sigma) + e_2(\sigma)
\end{aligned}
\end{array}$
\end{multicols}
\caption{Example: deeply/shallowly embedded expressions}
\label{fig:sample-lang}
\end{figure}

Most foundational verification tools including VST and Iris choose to use deeply embedded programming languages.
The deeply embedded languages separate the syntax and semantics of programs and would allow structural induction over the program's syntax tree in some proofs.
Comparably, typical shallowly embedded languages use functions from initial program states to ending states, or binary relations between initial states and ending states, to represent programs.
In other words, programs are formalized in proof assistants by their denotations instead of syntax trees, which makes it difficult to equip them with structural inductions.
It is also difficult (although not impossible \cite{FCSL-mechanized}) for a shallowly embedded programming language to support concurrency and other extensions.
For example, to implement concurrency, a deeply embedded programming language can syntactically parallel-compose two programs and then choose small-step semantics for describing concurrency since one can interleave parallel steps that programs take.
However, it is less convenient for the denotational shallow embedding to implement such an interleaving.
In the denotational shallow embedding, a program expression itself is a big-step denotation which hides intermediate states of its execution, so we cannot directly define the program with other threads interleaving at this intermediate states.

In some cases, developers would also use mixed embeddings of their programming languages \cite{xu2016practical, FCSL-mechanized}:
these languages use deep embedding to formalize structural compositions and use shallow embedding for singleton commands.
For example, in the $\mu$C framework \cite{xu2016practical} for verifying OS kernels, Xu et. al formalize atomic operations $\gamma$ via a shallow embedding and formalize operation compositions (e.g., sequential composition $s_1 ; s_2$ and non-deterministic choice $s_1 + s_2$) via a deep embedding.
Thus a command $s$ is defined by both embeddings in the formula below.
$$
\begin{array}{rcl}
    s &\in& \text{Command} := \gamma \lsep s_1 ; s_2 \lsep s_1 + s_2 \lsep \textbf{end} \lsep \cdots \\
    \gamma &\in& \text{Abstract State} \rightarrow \text{Abstract State} \rightarrow \text{Prop}
\end{array}
$$
\newpage
\paragraph{Remark.}
In a shallowly embedded language, transformation rules become trivial and meaningless.
For example, in \textsc{if-seq} rule, we have two programs defined below.
\begin{align*}
\cif{e}{c_1 \cseq c_3}{c_2 \cseq c_3} &\triangleq
\lambda \sigma.\, \textsf{match } \text{eval}(e, \sigma) \textsf{ with} ~\vert~ \bfalse \Rightarrow c_3(c_1(\sigma)) \\
&\,\,\qquad\qquad\qquad\qquad\qquad\qquad\vert~ \btrue \Rightarrow c_3(c_2(\sigma)).
\\
\cif{e}{c_1}{c_2} \cseq c_3 &\triangleq
\lambda \sigma.\, c_3(\textsf{match } \text{eval}(e, \sigma) \textsf{ with} ~\vert~ \bfalse \Rightarrow c_1(\sigma) \\
&\,\,\,\quad\qquad\qquad\qquad\qquad\qquad\qquad\vert~ \btrue \Rightarrow c_2(\sigma)).
\end{align*}
They are exactly the same function in the meta-language and will satisfy the same Hoare triple.
Proofs of structural rules and inversion rules for a shallowly embedded language is similar to their proofs in a big-step based shallowly embedded program logic (section~\ref{sec:embed-big}) since both uses big-step semantics to define the semantics.
In general, deeply embedded programming languages are more interesting and we will mainly discuss it in this paper.

\subsection{Embeddings of Assertion Languages}
To reason about a program's effect on program states, a Hoare-style program logic uses an assertion language to describe program state.
A shallowly embedded assertion is a predicate in the meta-logic which directly defines the set of states.
A deeply embedded assertion language specifies syntax trees of assertions first and then defines how to interpret assertions as sets of states.
For example, figure~\ref{fig:setting} defines a deep embedding of an assertion language.
And figure~\ref{fig:shallow-assn} defines a shallow embedding of the assertion language (only a snippet of it), which is also the interpretation for the deeply embedded one.
\begin{figure}[h]
$$
\begin{array}{c}
    \top := \lambda \sigma.\, \text{True} \ \ \ \
    \bot := \lambda \sigma.\, \text{False} \ \ \ \
    P \land Q := (\lambda \sigma.\, \sigma \vDash P \text{ and } \sigma \vDash Q)\\
    \doublebrackets{e} = v := (\lambda \sigma.\, \text{eval}(e, \sigma) = v) \ \ \ \
    P[x\mapsto e] := (\lambda \sigma.\, \sigma \vDash P[x/\text{eval}(e, \sigma)]) \ \ \cdots
\end{array}
$$
\caption{The shallowly embedded assertion language}
\label{fig:shallow-assn}
\end{figure}


Unlike programming languages, we observe that shallowly embedded assertion languages are generally preferred by program verification framework developers, as table~\ref{fig:table-choice} indicates that almost all existing works rely on an shallow assertion language.
The reason is that the shallowly embedded assertion often has more expressiveness.
For one fixed deeply embedded assertion language, we need to define its interpretation in the meta-logic.
This interpretation is a meta-logic function from assertions' syntax trees to sets of program states, i.e., $\doublebrackets{\cdot}\in \text{Assertion}_{\text{deep}} \rightarrow \text{state} \rightarrow \text{Prop}$.
Thus, the expressiveness of this deep embedding is no stronger than the expressiveness of ``sets of states'', as which the assertion language is formalized in a shallow embedding, i.e., $\text{Assertion}_{\text{shallow}} \triangleq \text{state} \rightarrow \text{Prop}$.

The cause of different embedding choices for the programming language and the assertion language is the difference in the object they describe.
Assertion languages describe static objects, i.e. an assertion is always used to determine a set of program states.
Thus, it is suitable to directly model assertions as sets.
Comparably, programming languages describe dynamic objects, transitions of program states.
Verification projects for different real programming languages need to embody different features of those languages, e.g., concurrency.
However, there exists no simple shallow embedding of a programming language that can describe these features concisely, e.g., we have mentioned in previous part that denotational shallow embeddings are difficult to express concurrency.
Therefore, formalizing different programming languages' syntax in deep embeddings is preferred than directly using shallow embeddings.

In this paper, we will stick to the shallow embedding of assertion languages defined in figure~\ref{fig:shallow-assn}.

\subsection{Embeddings of Program Logics}
\label{sec:nomen-logic}
When formalizing logics\footnote{In this paper, we mainly discuss the embedding of ``program logic'' and will use ``logic'' for short without ambiguity.} (e.g. the propositional logic or a Hoare logic), a shallowly embedded logic defines a statement to be \textit{valid} directly using semantics.
We mainly consider the partial correctness in this paper, and one common definition of shallowly embedded Hoare triple is: $\vDash \triple{P}{c}{Q}$ iff. for any initial state $\sigma_1$, if $\sigma_1$ satisfies assertion $P$, then the execution of $c$ from $\sigma_1$ does not cause error; and for and any ending state $\sigma_2$ reachable from $\sigma_1$ in the execution of $c$, $\sigma_2$ satisfy $Q$.
Verification tool developers can prove many properties about valid triples and use them as ``proof rules'' to assist users in program verification.
In practice, there are different ways to shallowly embed a program logic, which we discuss in section~\ref{sec:embed-big}, section~\ref{sec:embed-small}, section~\ref{sec:embed-cont}.


In contrast, a deeply embedded logic formalizes the proof tree inductively by giving admitted proof rules, and we say a statement is \textit{provable} under the logic, denoted by $\vdash \triple{P}{c}{Q}$, iff. it can be constructed from these proof rules.
Users of a verification tool can use these proof rules to build a proof tree of their program specifications.

To ensure proof rules given in a deeply embedded logic are consistent with program behaviors,
one needs to additionally prove the logic sound, that is, every \textit{provable} statement is also \textit{valid}, i.e., for any statement $S$, if $\vdash S$ then $\vDash S$.
In simple and common cases, this soundness theorem can be proved by induction over proof trees, i.e. it suffices to prove that every proof rule will always generate valid Hoare triples from valid triples. Using the sequential rule (\textsc{hoare-seq}) as an example,
$$
\inferrule*[left=hoare-seq]{
    \vdash \{P\} c_1 \{Q\} \\ \vdash \{Q\} c_2 \{R\}
}{
    \vdash \{P\} c_1 \cmdseq c_2 \{R\} 
}
$$
the induction step is to prove:
\begin{eqnarray}
\vDash \{P\} c_1 \{Q\} \text{ and } \vDash \{Q\} c_2 \{R\} \text{ imply }\vDash \{P\} c_1 \cmdseq c_2 \{R\} \label{seq_sound}
\end{eqnarray}
In comparison, there is no counterpart of this soundness property when using a shallowly embedded logic.
To use this sequential rule in a shallowly embedded logic, one need to directly prove property (\ref{seq_sound}).
This fact implies that the implementation of a sound deeply embedded program logic always accompanies an underlying shallowly embedded one.

In some nontrivial cases, one has to introduce an auxiliary validity definition $\VDash S $ in order to prove the soundness of a logic in two steps: (1) for any triple $S$, prove $\vdash S$ implies $\VDash S$ by induction over the proof tree; (2) show that $\VDash S$ does imply $\vDash S$ by semantic analysis.
For example, Brookes's concurrent separation logic soundness proof \cite{brookes2007semantics} uses this technique. 
In comparison, a shallow embedding strategy will directly formalize the auxiliary validity, and establish ``proof rules'' based on it. 
The fact that $\VDash S$ implies $\vDash S$ is still necessary and should be proved separately, and it is also called ``the adequacy property'' \cite{jung2018iris} in some literature.
We discuss this soundness proof technique later in section~\ref{sec:discuss-sound}.

In conclusion, we can choose arbitrary combinations of shallow/deep embeddings among the programming language, the assertion language, and the program logic to instantiate a foundational verification tool.
In this paper, we mainly focus on a deeply embedded programming language and a shallowly embedded assertion language, but consider different embeddings of program logics which we will discuss soon in section~\ref{sec:embed}.




\section{Different Embeddings of Program Logics}
\label{sec:embed}

As we have mentioned, there exists different ways to embed a program logic.
In this section, we will present four mainstream embeddings of Hoare logic from our survey of existing verification projects: a deep embedding and three different shallow embeddings.


\

Based on this language, this section then use three different shallow embeddings (section~\ref{sec:embed-big}, section~\ref{sec:embed-small}, section~\ref{sec:embed-cont}) and a deep embedding (section~\ref{sec:embed-deep}) to formalize program logics.
For each embedding method, we also demonstrate which existing Hoare-logic-based verification projects and how do their program logic fit into the category as table~\ref{fig:table-choice} shows.
Under each formalization, the chosen primary rules in figure~\ref{fig:setting} are sound.

Meanwhile, there also exists other non-Hoare-logic based verification approaches and some Hoare-logic based framework may also employ non-Hoare-logic based reasoning styles, which we discuss in section~\ref{sec:embed-other}.

Verification projects we review in this section all have certain unique features and complex mechanisms and it is difficult to cover all of them, e.g., verification supports for concurrency.
Some also do not support control flow reasoning and we do not discuss how to extend them with control flows to support our toy language.
We only discuss some basics of their embeddings and related features that can classify them into each category.

\subsection{Big-step (BigS.) based Shallow Embedding}
\label{sec:embed-big}


\Cref{fig:note-bigstep} describes notations defining the big-step semantics of the While-CF language, where $(c, \sigma_1) \Downarrow (\text{ek}, \sigma_2)$ states that from program state $\sigma_1$, the program $c$ may terminate with exit kind $\ek$, which could be normal exit, denoted by $\epsilon$, or break, continue exit, and the state will be changed to $\sigma_2$.
We use $(c, \sigma) \Uparrow$ to denote that an error would occur for the execution of $c$ from state $\sigma$.
The big-step semantics is defined recursively, e.g., the semantics of the sequencing command is \textsc{Seq1} and \textsc{Seq2}.
For the sake of space, we put the full definition (which is standard) in appendix~\ref{sec:Abigs}.

\begin{figure}[ht]
\centering
$\ek \in \exitkind := \epsilon \lsep \ekb \lsep \ekc$ \ \ $\sigma \in \text{state}$

Big-step Relation: $(c, \sigma_1) \Downarrow (\text{ek}, \sigma_2)$\ \ \ \
Error: $(c, \sigma) \Uparrow$
\begin{mathpar}
    \inferrule*[left=Seq1]{
        (c_1, \sigma_1) \Downarrow (\epsilon, \sigma_3) \\\\
        (c_2, \sigma_3) \Downarrow (\ek, \sigma_2)
    }{
        (c_1 \cseq c_2, \sigma_1) \Downarrow (\ek, \sigma_2)
    }
    \and
    \inferrule*[left=Seq2]{
        (c_1, \sigma_1) \Downarrow (\ek, \sigma_2)
    }{
        (c_1 \cseq c_2, \sigma_1) \Downarrow (\ek, \sigma_2)
    }
\end{mathpar}
\caption{Notations for Big-step semantics}
\label{fig:note-bigstep}
\end{figure}



Based on this big-step semantics, a triple is defined to be valid, \linebreak
$\bigvalid \triple{P}{c}{Q, [R_\ekb, R_\ekc]}$,
iff. for any $\sigma_1$ satisfying precondition $P$,
(1)~the execution of $c$ from $\sigma_1$ is safe and does not cause error, and
(2)~if the execution terminates, the ending program state satisfies the corresponding post-condition.
$$
\begin{array}{rcl}
\bigvalid \triple{P}{c}{Q, [R_\ekb, R_\ekc]} & \text{iff.} & \text{for all $\sigma_1 \vDash P$, } \lnot\,(c, \sigma_1)\Uparrow\\
&& \text{and for all } \ek, \sigma_2, \text{if }(c, \sigma_1) \Downarrow (\ek, \sigma_2) \\
&& \quad\text{then }\ek = \epsilon \text{ implies } \sigma_2 \vDash Q\\
&& \quad\text{and }\ek = \ekb \text{ implies } \sigma_2 \vDash R_\ekb\\
&& \quad\text{and }\ek = \ekc \text{ implies } \sigma_2 \vDash R_\ekc
\end{array}
$$


\paragraph{Related Projects.} \


Klein \textit{et al.} \cite{bulwahn2008imperative} formalizes
an imperative functional programming language in Isabelle/HOL, which is a shallowly embedded one using the state monad in HOL.
Based on this language, Lammich \cite{lammich2012separation} builds a logic in big-step based shallow embedding in Isabelle/HOL.
The logic embedding is almost identical to the one discussed above, but they do not consider control flows and only require the ending state of the normal exit to satisfy the post-condition.
Thus their definition does not have the last two conjuncts above.
Based on Lammich's logic \cite{lammich2012separation}, Zhan \cite{zhan2018verifying} verifies imperative implementations of some data structures in Isabelle/HOL using its auto2 prover \cite{zhan2016auto2}.
Nipkow's Hoare logic in Isabelle/HOL \cite{nipkow2002hoare,nipkow2002isabelle} also uses big-step embedding.
Many verification has been performed based on these logics in Isabelle, e.g., Lammich and Nipkow \cite{lammich2019proof} proves the correctness of priority search tree and Prim's and Dijkstra's algorithm in Isabelle.

In Cook's famous Hoare logic's soundness and completeness proof \cite{cook1978soundness}, it uses big-step based shallow embedding as the logic's validity definition and proves many properties including logic's soundness and completeness w.r.t. it.

Software foundation \cite{SF} is a famous textbook for teaching Coq formalization.
Its second volume introduces Hoare logics both in the big-step based embedding and the deep embedding, where the former is also the validity definition of the latter.
Its sixth volume introduces the separation logic built with the big-step based embedding.
The simplicity of the big-step based embedding makes it a good introduction of the Hoare logic and the separation logic for beginners.

\subsection{Weakest Precondition (WP) based Shallow Embedding}
\label{sec:embed-small}

Another shallow embedding method is to embed the logic using the weakest precondition defined by small-step semantics.
We want to emphasize that the weakest precondition here is directly defined by semantics instead of an encoding by some existing Hoare logic.

We use $(c, \kappa, \sigma) \rightarrow_c (c', \kappa', \sigma')$ to describe a small-step in command reduction and $\rightarrow_c^*$ to describe the reflexive transitive closure of $\rightarrow_c$.
The small-step reduction is a binary relation between triples of the focused term\footnote{The focused term is the next command to execute and is so named in some literature.} $c$, the continuation (control stack) $\kappa$, and the program state $\sigma$.
The control flow commands makes it slightly different from text book definitions.
We mainly follows CompCert Clight's definition style and list most important semantic rules in figure~\ref{fig:note-smallstep}.

\begin{description}
    \item[\eqref{eq:for-intro}]
    When the focused term begins with a for loop, it pushes the loop into the control flow stack and loads the loop body (followed by \ccontinue) into the focused term.
    \item[\eqref{eq:skip-for}]
    When the increment step for previous iteration finishes, it loads the loop body for next iteration\footnote{$\kloops{c_1}{c_2}{1}$ means in-loop-body, and $\kloops{c_1}{c_2}{2}$ means in-increment-step.}.
    \item[\eqref{eq:continue-for}] When the focused term reduces to $\ccontinue$, it loads the increment step $c_2$ and updates the innermost loop's continuation to Kloop$_2$.
    \item[\eqref{eq:break-for}] When the focused term reduces to $\cbreak$, it pops the innermost loop and sets the program to $\cskip$ to continue execute the remaining control stack.
    \item[\eqref{eq:continue-seq}, \eqref{eq:break-seq}] When the focused term reduces to $\ccontinue$ or $\cbreak$, it will keep skipping KSeq continuations.
    As a result, if an execution from $(c,\kappa,\sigma)$ terminates, it must terminate at $(\cskip, \epsilon, \_)$, $(\cbreak, \epsilon, \_)$, or $(\ccontinue, \epsilon, \_)$.
\end{description}

\setcounter{equation}{0}
\begin{figure}[ht]
\centering
$
\kappa \in \text{continuation} := \epsilon \lsep \kseq{c} \cdot \kappa \lsep \kloops{c_1}{c_2}{1} \cdot \kappa \lsep \kloops{c_1}{c_2}{2} \cdot \kappa
$
\begin{align}
((\cfor{c_1}{c_2}), \kappa, \sigma) &\rightarrow_c
(c_1 \cseq \ccontinue, \kloops{c_1}{c_2}{1} \cdot \kappa, \sigma) \label{eq:for-intro} 
\\
(\cskip, \kloops{c_1}{c_2}{2} \cdot \kappa, \sigma) &\rightarrow_c
(c_1 \cseq \ccontinue, \kloops{c_1}{c_2}{1} \cdot \kappa, \sigma)
\label{eq:skip-for} 
\\
(\ccontinue, \kloops{c_1}{c_2}{1} \cdot \kappa, \sigma) &\rightarrow_c
(c_2, \kloops{c_1}{c_2}{2} \cdot \kappa, \sigma) \label{eq:continue-for} 
\\
(\cbreak, \kloops{c_1}{c_2}{1} \cdot \kappa, \sigma) &\rightarrow_c
(\cskip, \kappa, \sigma) \label{eq:break-for} 
\\
(\ccontinue, \kseq{c} \cdot \kappa, \sigma) &\rightarrow_c
(\ccontinue, \kappa, \sigma) \label{eq:continue-seq} \\
(\cbreak, \kseq{c} \cdot \kappa, \sigma) &\rightarrow_c
(\cbreak, \kappa, \sigma) \label{eq:break-seq}
\end{align}
\caption{Notations for Small-step semantics}
\label{fig:note-smallstep}
\end{figure}

Shallowly embedded weakest pre-condition can be defined on this small-step semantics.
We use $\sigma~\vDash~\textsf{WP} (c, \kappa) \progspec{Q, [\vec{R}]}$ to denote that a program state $\sigma$ satisfies the weakest pre-condition for the focused program $c$ and the continuation $\kappa$ with post-conditions $Q$ and $[\vec{R}]$. The $\textsf{WP}$ is defined as the largest set such that $\sigma~\vDash~\textsf{WP} (c, \kappa) \progspec{Q, [\vec{R}]}$ iff.
\begin{itemize}
\item \textbf{Terminal Case:} $\kappa$ is $\epsilon$ and (1) $c = \cskip$ and $\sigma \vDash Q$ or (2) $c = \cbreak$ and $\sigma \vDash R_\ekb$ or (3) $c = \ccontinue$ and $\sigma \vDash R_\ekc$;
\item \textbf{Preservation Case:} Or $(c, \kappa, \sigma)$ can be further reduced by $\rightarrow_c$. And for any $c', \kappa', \sigma'$ such that $(c, \kappa, \sigma) \rightarrow_c (c', \kappa', \sigma')$, it still has $\sigma' \vDash \textsf{WP} (c', \kappa') \progspec{Q, [\vec{R}]}$.
\end{itemize}
The terminal case ensures that the ending program state will satisfy post-conditions, and the preservation case guarantees that it can always step forward and eventually reach an ending state satisfying post-conditions.

A triple is defined to be valid under the weakest precondition based shallow embedding, $\vDash_w \triple{P}{c}{Q}$, iff. for any state $\sigma$ satisfying precondition $P$, we have $\sigma \vDash \textsf{WP} (c, \epsilon) \progspec{Q, [\vec{R}]}$.


\paragraph{Remark.}
The definition of the co-inductive weakest precondition above uses the Tarski fixed point theorem \cite{tarski1955lattice}.
\begin{thm}[Tarski fixed point theorem]
    For a complete lattice $(L, \leq)$ and a monotone function $f: L \rightarrow L$, the set of all fixed points of $f$ is also a complete lattice with $\sup\{x\in L \,\vert\, x \leq f(x)\}$ as the greatest fixed point.
\end{thm}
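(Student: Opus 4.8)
The plan is to prove the two assertions of the theorem in turn: first that $u := \sup\{x \in L \mid x \leq f(x)\}$ is the greatest fixed point of $f$, and then that the full set of fixed points $\mathrm{Fix}(f) := \{x \in L \mid f(x) = x\}$ forms a complete lattice. Write $P := \{x \in L \mid x \leq f(x)\}$ for the set of post-fixed points, so that $u = \sup P$ exists by completeness of $L$.

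First I would show that $u$ is itself a fixed point. For every $x \in P$ we have $x \leq u$, so monotonicity gives $f(x) \leq f(u)$, and combining with $x \leq f(x)$ yields $x \leq f(u)$; hence $f(u)$ is an upper bound of $P$ and therefore $u = \sup P \leq f(u)$, i.e.\ $u \in P$. Applying $f$ once more, monotonicity gives $f(u) \leq f(f(u))$, so $f(u) \in P$ and thus $f(u) \leq \sup P = u$. The two inequalities force $f(u) = u$. That $u$ is the \emph{greatest} fixed point is then immediate: any fixed point $y$ satisfies $y \leq f(y)$, hence $y \in P$ and $y \leq u$. This disposes of the greatest-fixed-point claim.

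Next I would establish the lattice structure. Let $S \subseteq \mathrm{Fix}(f)$ be arbitrary and set $w := \sup_L S$, the supremum taken in $L$. The key observation is that $w \leq f(w)$: for each $s \in S$ we have $s = f(s) \leq f(w)$ by monotonicity, so $f(w)$ bounds $S$ from above and therefore dominates its least upper bound $w$. Consequently $f$ maps the interval $L' := \{x \in L \mid w \leq x\}$ into itself, and $L'$ is again a complete lattice. Applying the dual of the argument above inside $L'$ (the least fixed point of a monotone map on a complete lattice exists as the infimum of its pre-fixed points $\{x \mid f(x) \leq x\}$) produces a least fixed point $\ell$ of $f|_{L'}$. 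This $\ell$ is a genuine fixed point of $f$ lying above every element of $S$, and if $m \in \mathrm{Fix}(f)$ is any other fixed-point upper bound of $S$, then $m \geq w$ forces $m \in L'$, whence $\ell \leq m$ by minimality. Thus $\ell$ is precisely the supremum of $S$ computed inside $\mathrm{Fix}(f)$. Since arbitrary suprema exist in $\mathrm{Fix}(f)$ (including $\sup \emptyset$, the least fixed point, recovered by taking $w = \bot_L$), the infima exist automatically, and $\mathrm{Fix}(f)$ is a complete lattice.

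The main subtlety to watch out for is that $\mathrm{Fix}(f)$ need not be a sublattice of $L$: the supremum of $S$ inside $\mathrm{Fix}(f)$ is in general strictly larger than $\sup_L S$, so one cannot simply inherit joins from $L$. The crux of the argument is therefore the passage to the sublattice $L' = \{x \mid w \leq x\}$ and the construction of the least fixed point there, which "corrects" $\sup_L S$ upward to the nearest fixed point. For the purposes of this paper only the greatest-fixed-point half is strictly needed, since it is what justifies defining $\textsf{WP}$ as the largest set satisfying the terminal and preservation conditions; so if a lighter statement sufficed I would isolate just the first two paragraphs.
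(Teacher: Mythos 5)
Your proof is correct and complete: the first two paragraphs are the standard Knaster--Tarski argument for the greatest fixed point (showing $u=\sup P$ is a post-fixed point, then that $f(u)$ is too, forcing equality), and the third paragraph correctly handles the completeness of $\mathrm{Fix}(f)$ by relativizing to the interval $[\,\sup_L S, \top\,]$ and taking the least fixed point there, together with the standard fact that a poset with all suprema has all infima. Note, however, that the paper does not prove this theorem at all --- it only states it and cites Tarski's original paper, using it solely to justify that the co-inductive $\textsf{WP}$ is well defined as the greatest fixed point of a monotone operator; so there is no proof in the paper to compare against. Your closing observation is the right one: for the paper's purposes only the greatest-fixed-point half (your first two paragraphs) is actually needed, and the completeness of the full fixed-point lattice is extra. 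The one subtlety you flag --- that $\mathrm{Fix}(f)$ is not a sublattice of $L$ and joins must be ``corrected upward'' --- is exactly the point most abbreviated accounts gloss over, and you handle it properly.
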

We can also define it using the Bourbaki-Witt fixed point theorem \cite{witt1950beweisstudien,bourbaki1949theoreme}, which is used by many existing frameworks to define the weakest precondition based embedding and is used in our Coq formalization.
\begin{thm}[Bourbaki-Witt fixed point theorem]
    If $(X, \leq)$ is a non-empty complete ordered chain, and $f: X \rightarrow X$ satisfies $f(x) \geq x$ forall $x$, then $f$ has a fixed point.
    For some $x_0 \in X$, let $g: \mathbb{N} \rightarrow X$ be a function with
    $$
    g(0) = x_0 \qquad g(n+1) = f(g(n)),
    $$
    then $\lim_{n\rightarrow \infty}g(n)$ is a fixed point.
\end{thm}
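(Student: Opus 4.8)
The plan is to produce the fixed point by iterating $f$ and then closing up the iteration under limits that completeness supplies. First I would observe that the inflationary hypothesis $f(x) \geq x$ makes the iteration sequence automatically non-decreasing: since $g(n+1) = f(g(n)) \geq g(n)$, a one-line induction gives $g(0) \leq g(1) \leq \cdots$, so the image $\{g(n) : n \in \mathbb{N}\}$ is a chain. By completeness this chain has a supremum $g^\star = \lim_{n\to\infty} g(n) = \sup_n g(n) \in X$, and the inflationary property immediately yields $f(g^\star) \geq g^\star$. Hence the entire content of the theorem is concentrated in the reverse inequality $f(g^\star) \leq g^\star$.

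That reverse inequality is where the work lies, and it does \emph{not} follow from the inflationary hypothesis alone. If one additionally knows that $f$ preserves suprema of increasing sequences ($\omega$-continuity, which is exactly the behaviour of the monotone step operator used to define $\textsf{WP}$), the argument closes in Kleene style: $f(g^\star) = f(\sup_n g(n)) = \sup_n f(g(n)) = \sup_n g(n+1) = g^\star$, where the last equality holds because dropping the first term of an increasing chain leaves its supremum unchanged. Under this continuity assumption $g^\star$ is a fixed point, and this is precisely the form needed to justify the co-inductive $\textsf{WP}$ definition earlier in the paper.

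Without continuity the literal $\mathbb{N}$-indexed limit need not be fixed --- for instance, on the complete chain $\{0,1,2,\dots,\omega,\omega+1\}$ the map sending $n \mapsto n+1$, $\omega \mapsto \omega+1$, $\omega+1 \mapsto \omega+1$ is inflationary, yet its $\mathbb{N}$-limit $\omega$ satisfies $f(\omega) = \omega+1 \neq \omega$. So the robust form of the theorem must iterate transfinitely rather than only along $\mathbb{N}$. I would therefore extend the recursion to all ordinals by setting $g_{\alpha+1} = f(g_\alpha)$ and $g_\lambda = \sup_{\beta < \lambda} g_\beta$ at limit ordinals, each supremum existing by completeness. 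A transfinite induction shows $(g_\alpha)$ is non-decreasing; and if $f$ had no fixed point at all, every successor step $g_\alpha < g_{\alpha+1}$ would be strict, making $\alpha \mapsto g_\alpha$ a strictly increasing, hence injective, map from the proper class of ordinals into the set $X$. This contradicts Hartogs' lemma, so some $g_\alpha$ must satisfy $g_\alpha = g_{\alpha+1} = f(g_\alpha)$.

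The main obstacle is exactly this tension: reconciling the clean $\mathbb{N}$-indexed statement with the fact that a merely inflationary map in general stabilizes only after transfinitely many iterations, and then carrying out the transfinite recursion together with the Hartogs cardinality contradiction cleanly (this is also where the choice-sensitive, delicate reasoning lives in a Coq development). For the paper's actual purpose I expect the continuous route of the first two paragraphs to suffice, since the operator whose greatest fixed point defines $\textsf{WP}$ is monotone and behaves continuously on the relevant chain of iterates; the transfinite argument is the fallback that secures the theorem in full generality.
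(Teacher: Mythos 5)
The paper never proves this statement: it is quoted as background (with citations to Witt and Bourbaki) solely to motivate the $\mathbb{N}$-indexed definition of $\textsf{WP}$, so there is no in-paper proof to compare yours against. Judged on its own, your analysis is essentially correct and is in fact more careful than the statement it is asked to prove. The monotonicity of the iteration, the existence of $g^\star=\sup_n g(n)$ by completeness, and the inequality $f(g^\star)\geq g^\star$ are all fine, and you are right that the reverse inequality does not follow from inflationarity alone. Your counterexample on $\{0,1,2,\dots,\omega,\omega+1\}$ is valid: the chain is complete, the map is inflationary, yet the $\mathbb{N}$-limit $\omega$ is not fixed. So the second sentence of the theorem as recorded here is literally false without an extra hypothesis; the genuine Bourbaki--Witt conclusion is only the existence of a fixed point, reached in general after transfinitely many steps. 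Your Hartogs-based transfinite recursion is a standard and correct way to get that existence statement (the classical Bourbaki--Witt proof reaches the same conclusion via the choice-free ``tower'' construction, avoiding ordinals, which is the version one would more plausibly formalize in Coq).

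What rescues the paper's intended application is precisely the continuous case you isolate, in its dual form: the $\textsf{WP}$ operator is iterated \emph{downward} from $\top$ and its greatest fixed point is the infimum of the decreasing $\omega$-chain exactly when the operator preserves infima of such chains, which holds here because the small-step relation is finitely branching. It is that $\omega$-cocontinuity --- not mere inflationarity or monotonicity --- that justifies quantifying over $n\in\mathbb{N}$ in the displayed definition of $\textsf{WP}(n,c,\kappa)$, and your write-up would be strengthened by stating that hypothesis explicitly rather than attributing the conclusion to Bourbaki--Witt.
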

Below is a formal definition of the weakest pre-condition using the Bourbaki-Witt greatest fixed point.
We use an ordinal number $n$ to bound the number of small-step reductions until termination
and the limit of $\WP(n, c, \kappa)$ when $n$ goes to infinite defines the weakest precondition.
\begin{align}
\sigma \vDash \WP(n, c, \kappa)\progspec{Q,[\vec{R}]}
&\text{ iff. }
\left(\begin{gathered}
    \kappa = \epsilon \land n = 0 \land\\
    \left(\begin{gathered}
        (c = \cskip \land \sigma \vDash Q) \lor \\
        (c = \cbreak \land \sigma \vDash R_\ekb) \lor \\
        (c = \ccontinue \land \sigma \vDash R_\ekc)
    \end{gathered}\right)
\end{gathered}
\right) \\
& \qquad \lor
\left(\begin{gathered}
n > 0 \land
\text{reducible}(c, \kappa, \sigma) \\
\land \forall c',\kappa',\sigma'. (c,\kappa,\sigma)\rightarrow_c(c',\kappa',\sigma') \\
\Rightarrow \sigma' \vDash \WP(n-1, c', \kappa')\progspec{Q,[\vec{R}]}
\end{gathered}\right)
\notag
\\
\sigma \vDash \WP(c, \kappa)\progspec{Q,[\vec{R}]}
&\text{ iff. }
\forall n \in \mathbb{N}. \sigma \vDash \WP(n, c, \kappa)\progspec{Q,[\vec{R}]}
\end{align}

Both definition of the weakest precondition using the greatest fixed point are just approach to define a co-inductive type.
Although we can directly write a co-inductive definition in Coq, using the Bourbaki-Witt fixed point approach makes proofs easier and thus we use it in our Coq formalization.
We use the Tarski fixed point approach in the paper (here and in section~\ref{sec:embed-cont} for $\text{safe}(c,\kappa,\sigma)$) for a clear presentation.

If the reader is familiar with the step-index technique \cite{appel2001indexed}, they may find that a meta-logic equipped with step-indices (e.g., Iris \cite{jung2018iris}) can easily define the weakest precondition as the Bourbaki-Witt fixed point.
The step-index can be used as the ordinal number $n$ and there are lots of infrastructures (e.g., the later modality $\later$) to support the definition.
But in general, step-indexing is not necessary for the shallowly embedded weakest precondition as we have shown here and in our Coq formalization.
It is necessary only when the definition involves recursive types that are not inductive.
For example \cite{jung2018iris}, we cannot give an inductive definition for the type iProp below, since iProp has non strictly positive occurrence in this definition.
$$
\text{iProp} := \text{Res} \rightarrow \text{Prop} \qquad \text{where } \text{Res} := F(\text{iProp}) \text{ is parameterized by iProp}
$$
If the state model and the language semantics does not rely on these non-inductive types, then the weakest precondition's definition obviously needs no step-index.
We refer interested readers to Appel and McAllester's work on the step-indexed model \cite{appel2001indexed} for more details about the step-indexing technique.


\paragraph{Related Projects.}

Fine-grained Concurrent Separation Logic (FCSL) \cite{FCSL-mechanized} is a framework for verifying concurrent programs.
It uses mixed embedding for its programming language and a weakest precondition based shallow embedding for its program logic.
Its program is defined by an action tree, where each edge, the atomic action, is defined by a transition between state, i.e., a shallow embedding.
In the action tree's syntax definition below, $\omega$ is a leaf node indicating the divergence of a thread and $\text{ret } v$ is a leaf node indicating the termination of a thread with return value $v$.
The concatenation $a :: k$ of the action $a$ and the context $k$ is the sequential execution $a$ and $k$ with $a$'s result as arguments.
Similarly, parallel composition $(t_1 \| t_2) :: k$ first interleaves executions of tree $t_1$ and $t_2$ and passes their results to the context $k$.
$$
\begin{array}{rcl}
    t, t_1, t_2 \in \text{Tree} &:=& \omega \lsep \text{ret } v
    \lsep a :: (k: \text{Val}\rightarrow\text{Tree})\\
    &\lsep& (t_1 \| t_2) :: (k: \text{Val} \times \text{Val} \rightarrow\text{Tree}) \\
    a \in \text{Action} &:=& \text{state}\rightarrow\text{state}\rightarrow\text{Prop}
\end{array}
$$

The definition below is FCSL's \cite{FCSL-mechanized} embedding of their Hoare triple (we omit details about concurrency).
The always predicate \eqref{eq:fcsl-always-k} is similar to our weakest precondition's definition, which asserts that the memory safety at each step and at each step the assertion $P$ should hold for the program state and the action tree.
This assertion of always predicate in the Hoare triple \eqref{eq:fcsl-triple} ensures that the post-condition holds at leaf nodes.
It also uses the Bourbaki-Witt fixed point approach to define the weakest precondition
The $k$ in \eqref{eq:fcsl-always-k}\eqref{eq:fcsl-always} serves the purpose of an ordinal number.
\begin{align}
\text{always}^{k}\,\sigma\,t\,Q &\text{ iff. } \text{memsafe}(\sigma, t) \land Q(\sigma,t) \land \label{eq:fcsl-always-k} \\
&\quad\quad \forall \sigma', t'.\, (k > 0 \land ((\sigma, t)\rightarrow (\sigma', t'))) \Rightarrow \text{always}^{k-1}\,\sigma'\,t'\,Q \notag \\
\text{always}\,\sigma\,t\,Q &\text{ iff. } \forall k \in \mathbb{N}.\, \text{always}^{k}\,\sigma\,t\,Q \label{eq:fcsl-always}\\
\vDash \triple{P}{c}{Q} &\text{ iff. } \forall \sigma.\, \sigma \vDash P \Rightarrow \text{always}\,\sigma\,t\,(\lambda \sigma, t.\, \forall v.\, t = \text{Ret }v \Rightarrow Q(\sigma,v)) \label{eq:fcsl-triple}
\end{align}

Iris is a higher order concurrent separation logic for verifying correctness of functional programs.
It uses deeply embedded $\lambda$-calculus-like programming languages and the weakest precondition based embedding.
When concurrency is not involved (section 6.3.2 of \cite{jung2018iris}), Iris embeds their logic by weakest pre-conditions below as a separation logic proposition in Iris (iProp).
\begin{align}
    \WPRE\, e\, \{\Phi\}&\triangleq
           (e \in \textit{Val} \land \Phi(e))
           \label{eq:iris-term} \\
    &\lor \bigl(\forall \sigma.\, e \notin \textit{Val} \land S(\sigma) \wand \bigl(\text{reducible}(e, \sigma) \notag \\
    &\quad \land \later
        \forall e', \sigma'.\, \left((e, \sigma) \tred (e', \sigma')\right) \wand
    \left(S(\sigma') * \WPRE\, e'\, \{\Phi\}\right)
    \bigr)
    \bigr) \label{eq:iris-pres}
\end{align}
The first disjunct \eqref{eq:iris-term} asserts that if the expression $e$ is now a value, then the evaluation has terminated and the program state and the evaluation result $e$ should satisfy $\Phi$.
The second disjunct \eqref{eq:iris-pres} specifies behaviors when $e$ is not an terminal and should be further reduced.
$S(\sigma)$ injects program state $\sigma$ into an iProp.
The disjunct \eqref{eq:iris-pres} will first consume a piece of memory injected by $\sigma$ and asserts that expression $e$ is reducible with this piece of memory ($\text{reducible}(e, \sigma)$).
And for any new expressions $e'$ and $\sigma'$ it can reduce to by small-step semantics $\rightarrow_t$, we can still have the memory $S(\sigma')$ it reduces to and the new expression $e'$ still preserves the weakest pre-condition.
Iris's propositions iProp uses the step-index to solve the circularity when defining higher-ordered ghost states, which happens to be a suitable choice of the ordinal number in the definition of the co-inductive weakest precondition.
The later modality $\later$ here reduces the step-index (the ordinal number) in the weakest precondition $\WPRE\, e'\, \{\Phi\}$ at the next step.


Iris's Hoare triple\footnote{
It is worth mentioning that Iris also provides mechanisms to directly reason about weakest preconditions, which makes it less like a Hoare-logic based framework.
We discuss the weakest precondition based verification method in section~\ref{sec:iris}.
} $\triple{P}{e}{\Phi}$ is embedded as $P \wand \WPRE\,e\,\{\Phi\}$.
We can observe that Iris's embedding is very similar to our prototypical weakest precondition based embedding, and in section~\ref{sec:project-iris-shallow}, we will use Iris as an example under the functional program setting to demonstrate our research into extended proof rules.

FCSL and Iris both support concurrent program verifications through mechanisms of angelic updates (view shifts in Iris).
Iris is also extended to support prophecy variables \cite{jung2019future}.
However, these features are out of this paper's scope.
For simplicity, definitions in this section only consider sequential programs and remove features supporting concurrency.
For the complete definition, readers may refer to their original papers \cite{FCSL-mechanized,jung2018iris}.


\subsection{Continuation (Cont.) based Shallow Embedding}
\label{sec:embed-cont}

Continuation based shallow embedding defines the Hoare triple through Hoare tuple $\progspec{P} (c, k)$, pronounced $P$ guards $(c, k)$, which asserts that program $c$ and continuation $k$ can safely executes from program states satisfying the pre-condition $P$.
A triple $\triple{P}{c}{Q, [\vec{R}]}$ is valid iff. for arbitrary continuation $\kappa$,
$$
\text{if } \quad\begin{cases}
    \quad\guard{Q}{\cskip}{\kappa} &\\
    \quad\guard{R_\ekb}{\cbreak}{\kappa} &\\
    \quad\guard{R_\ekc}{\ccontinue}{\kappa} &
\end{cases}
\text{then } \quad \guard{P}{c}{\kappa}.
$$
It states that for any continuation that safely executes from different post-conditions with corresponding exit kind, it should safely executes after program $c$'s execution from the pre-condition.

An assertion $P$ guards the execution of a program $c$ and a continuation $\kappa$, $\guard{P}{c}{\kappa}$, iff. for any program state $\sigma$ satisfying the pre-condition $P$, we have safe($c, \kappa, \sigma$), which is the largest set of configurations with following properties.
\begin{itemize}
    \item \textbf{Terminal Case:} $\kappa$ is $\epsilon$, and $c$ is \cskip, \cbreak, or \ccontinue;
    \item \textbf{Preservation Case:} Or $(c, \kappa, \sigma)$ can be further reduced by $\rightarrow_c$, and for any $(c', \kappa', \sigma')$ it reduces to, we should have safe$(c', \kappa', \sigma')$.
\end{itemize}

\paragraph{Related Projects.}

The original shallowly embedded VST \cite{VST} uses a deeply embedded programming language and a program logic in continuation based shallow embedding.
The Hoare triples in shallowly embedded VST extends our definition with function invocations and concurrent separation logic.
Besides these features, which we discuss later in section~\ref{sec:project-vst}, its embedding is almost identical to the prototypical definition above.

\subsection{Deep Embeddings}
\label{sec:embed-deep}

As clarified in section~\ref{sec:nomen}, a deep embedding of a program logic is a proof system with inductively defined proof trees.
Different deep embeddings of program logics with fixed languages and forms of judgement have different sets of admitted proof rules.
For example, we may consider a proof system with primary proof rules from figure~\ref{fig:setting} as an example deep embedding.
In Coq, it is easy to define the deep embedding with an inductive relation shown below.
Here, \texttt{provable P c Q Rb Rc} stands for the Hoare triple $\vdash \triple{P}{c}{Q, [R_b, R_c]}$.
\begin{lstlisting}[language=Coq]
Inductive provable: Assertion -> com -> Assertion (* normal post *) -> Assertion (* break post *) -> Assertion (* continue post *) -> Prop :=
| hoare_skip: forall P, provable P $\cskip$ P bot bot
| hoare_break: forall P, provable P $\cbreak$ bot P bot
| hoare_continue: forall P, provable P $\ccontinue$ bot bot P
| hoare_seq: forall c1 c2 P P' Q Rb Rc,
    provable P c1 P' Rb Rc -> provable P' c2 Q Rb Rc ->
    provable P (c1$\cseq$c2) Q Rb Rc
| hoare_loop: forall c1 c2 P I Q,
    provability P c1 I Q I -> provable I c2 P Q bot ->
    provable P ($\cfor{c_1}{c_2}$) Q bot bot
| $\cdots$.
\end{lstlisting}

An arbitrary set of admitted proof rules may not be correct, and we need to prove the soundness of such proof system.
To do so, we need to define the validity of a logic judgement, i.e., the corresponding shallowly embedded logic.
We can prove the soundness by showing that each proof rule is valid in the shallow embedding and the shallowly embedded logic is sound.
We can easily prove primary proof rules in figure~\ref{fig:setting} to be valid under all three shallow embeddings in section~\ref{sec:embed-big}, section~\ref{sec:embed-small}, section~\ref{sec:embed-cont}.

\paragraph{Related Projects.} \
Simpl \cite{schirmer2006verification} is a tool in Isabelle/HOL for verifying sequential programs, which has been used to verify seL4 code \cite{klein2009sel4}.
It uses deeply embedded programming language and program logic.
Its judgement is $\vdash\triple{P}{c}{Q,Q_\ek}$ with the following validity definition.
$$
\begin{array}{rcl}
\vDash \triple{P}{c}{Q, Q_\ek} &\triangleq& \forall \sigma_1, \sigma_2.\, \sigma_1 \in \text{Normal } P \rightarrow\\
&& \tuple{c, \sigma_1} \Downarrow \sigma_2 \rightarrow \sigma_2 \not\in \text{Fault} \rightarrow \\
&& \sigma_2 \in \text{Normal }Q \cup \text{ Abrupt }Q_\ek
\end{array}
$$
It says after the execution from program states satisfying the precondition, if the ending state is not a fault one, then it satisfies the normal and abrupt post-conditions depending on its exit kind.

CSimpl \cite{sanan2017csimpl} is a framework for concurrent program verification based on rely guarantee reasoning.
Its programming language is a deeply embedded imperative language supporting concurrency.
Their logic is deeply embedded by giving inference rules of the logic with judgement $R, G \vdash \triple{P}{c}{Q, Q_\ek}$, where $P$ is pre-condition and $Q, Q_\ek$ are normal and control flow post-conditions respectively.
They choose to use rely-guarantee method to reason about concurrency, and use $R, G$ to specify the rely and guarantee of a program.
The judgement specifies how the environment can modify ($R$) and what a program can do ($G$) to the shared resource.
To show the soundness of their proof system, they define judgement's validity in three steps.
\begin{itemize}
    \item They use $\text{assum}(c, P, R)$ to denote a set of small-step reduction steps of $c$ (lists of intermediate commands and states) from pre-condition $P$ and under environment interference $R$.
    \item They use $\text{comm}(G, Q, Q_\ek)$ to denote a set of reduction steps that obey guarantee $G$ and have terminal state satisfying post-conditions $Q, a$.
    \item The judgement $R, G \vdash \triple{P}{c}{Q, Q_\ek}$ is valid iff. $\text{assum}(c, P, R)$ is a subset of $\text{comm}(G, Q, Q_\ek)$. The inclusion implies any trace of $c$ from pre-condition $P$ and under rely $R$, will terminate in post-conditions $Q, Q_\ek$ and generate guarantee $G$.
\end{itemize}
The definition of the set $\text{comm}(G, Q, Q_\ek)$ is almost identical to the weakest precondition in section~\ref{sec:embed-small} but with extra guarantee constraints on the program's behaviors.
In conclusion, CSimpl uses deeply embedded logic with weakest precondition based shallow embedding as its validity.

Certified assembly program (CAP) \cite{yu2003building} and XCAP \cite{ni2006certified}  use deeply embedded language and deeply embedded logic with judgement $\vdash \{P\}\, c$ to verify assembly programs.
The logic guarantees that an assembly program $c$ can execute safely if the program state satisfies the pre-condition $P$.
They formalize the logic with a set of inference rules as its deep embedding and prove the soundness under the continuation-based shallow embedding.
Their objective is to verify program's safety instead of functional correctness, therefore the judgement and logic looks different from the one in our paper.

Xu \textit{et al.} \cite{xu2016practical} develop $\mu$C, a framework for verifying preemptive operating systems.
The framework uses deeply embedded program logic.
The logic is proven sound w.r.t. a weakest precondition based shallow embedding, which is similar to those in section~\refeq{sec:embed-small}.

Software foundation \cite{SF} introduces a toy example of a deeply embedded Hoare logic with big-step based shallow embedding as its validity, as we have mentioned in section~\ref{sec:embed-big}.






\subsection{Other Logic based Verification Methods}
\label{sec:embed-other}

So far, we have discuss four Hoare logic embeddings and foundational verification frameworks that use these embeddings.
Most of these frameworks use Hoare triples as program specifications and prove them mainly by applying proof rules to Hoare triples, which we refer to as Hoare-logic-based verification.
Nevertheless, there exists other logic based verification methods that do not use Hoare triples and Hoare logics as their primary tools.
And some verification frameworks that encodes Hoare triples in their logic may feature verification technique that is non Hoare-logic-based.

In this section, we will briefly review some non-Hoare-logic-based verification techniques: Hoare monad and Dijkstra monad in section~\ref{sec:monad}, Iris (weakest precondition based verification) in section~\ref{sec:iris}, and Characteristic Formulae in section~\ref{sec:cf}.
It is not clear whether these approaches is better than pure Hoare-logic based ones, and the comparison among them is not the focus of this paper.



\subsubsection{Hoare Monad \& Dijkstra Monad}
\label{sec:monad}

Instead of Hoare logic, we can also use Hoare monad and Dijkstra monad to assert and verify the correctness of a program.
Based on their Hoare type theory \cite{nanevski2008hoare}, Nanevski \textit{et al.} \cite{nanevski2008ynot} type a program using the Hoare monad $\texttt{ST}\, P\, A\, Q$, where the type of the program asserts the its execution from state satisfying the pre-condition $P$ will return with a value of type $A$, and the return value and the ending state will satisfy the post-condition $Q$ over both the return value and the program state.
The type of a large program is derived from types of statements that assembles it.
\begin{mathpar}
    \inferrule*[left=bind-hst]{
        \vdash e_1: (x:B) \rightarrow \texttt{ST}\,(R\,x)\,A\,Q \\
        \vdash e_2: \texttt{ST}\,P\,B\,R
    }{
        \vdash (e_1\,e_2): \texttt{ST}\,P\,A\,Q
    }
\end{mathpar}
For example, the rule above defines how to compose two Hoare monad through the bind operation (lambda function application), which is similar to the \textsc{hoare-seq} rule.
The type of $e_1\,e_2$ is derived by the type of $e_1$ and $e_2$.


The typing proof of Hoare monad cannot be easily automated due to some existential quantifiers over some intermediate program state.
A series of works \cite{swamy2013verifying, jacobs2015dijkstra, ahman2017dijkstra, maillard2019dijkstra} develops Dijkstra monad to improve the automation of such technique and allow reasoning about program with more features like exception, non-determinism, and IO.
Different from Hoare monad, Dijkstra monad types a program as $\texttt{WP\_ST}\,A\,\textsf{wp}$, a predicate transformer mapping a postcondition of the computation to its precondition, where $A$ is the return type and \textsf{wp} is a weakest precondition.
The weakest precondition \textsf{wp} is of the type
$$(A \times \text{state} \rightarrow \text{Prop}) \rightarrow \text{state} \rightarrow \text{Prop}$$
which is exactly a predicate transformer from the postcondition to the precondition.
Below shows the typing rule for the bind in Dijkstra monad.
$$
\inferrule*[left=bind-dst]{
    \vdash e_1:(x:B) \rightarrow \texttt{WP\_ST}\,A\,(\textsf{wp}_1\,x)\\
    \vdash e_2:\texttt{WP\_ST}\,B\,\textsf{wp}_2
}{
    \vdash (e_1\,e_2):\texttt{WP\_ST}\,A\,(\lambda Q\, s.\, \textsf{wp}_2\,(\lambda x\, s_1.\, \textsf{wp}_1\,x\,Q\,s_1)\,s)
}
$$
This rule composes predicate transformers $\textsf{wp}_1\,x$ (parameterized over the return value $x$) and $\textsf{wp}_2$ into a new one to type $e_1\,e_2$.

The typing of programs as Hoare monads and Dijkstra monads resembles the definition of the program's big-step semantics, as all of them are definitions by structural induction over the program's syntax.
Still using the bind operation as an example, \textsc{bind-bigs} defines the big-step semantics of $e_1\,e_2$.
$$
\inferrule*[left=bind-bigs]{
    (e_1\,x, \sigma_3) \Downarrow (v, \sigma_2)\\
    (e_2, \sigma_1) \Downarrow (x, \sigma_3)
}{
    (e_1\,e_2, \sigma_1) \Downarrow (v, \sigma_2)
}
$$
It is similar to \textsc{bind-hst} except that the relationship between the beginning state and ending state defined in \textsc{bind-bigs} becomes the relationship between the precondition and postcondition in \textsc{bind-hst}.
This similarity is even more obvious in \textsc{bind-dst}, where programs' effects on the pre/post-conditions are composed instead of composing their effects on the program state.




\subsubsection{Weakest Precondition based Verification}
\label{sec:iris}

In section~\ref{sec:embed-small}, we have embedded a Hoare triple using the weakest precondition: $\vDash_w \triple{P}{c}{Q}$, iff. for any state $\sigma$ satisfying precondition $P$, we have $\sigma \vDash \textsf{WP} (c, \epsilon) \progspec{Q, [\vec{R}]}$.
Although we will use the Hoare logic to reason about triples using this embedding, it is also possible to directly reason about the weakest precondition, which is the approach taken by Iris \cite{jung2018iris}.

We use the \textsc{if-seq} rule to demonstrate the weakest precondition based verification method and its advantage.
Suppose we want to prove $\triple{P}{(\cif{e}{c_1}{c_2}) \cseq c_3}{Q}$ in Iris for some $c_1, c_2, c_3, e$.
It is equivalent to prove that $P \wand \WP ((\cif{e}{c_1}{c_2}) \cseq c_3) \progspec{Q}$, which has the following weakest precondition derivation~\eqref{eq:ifseq-iris-1}.
\begin{equation}
\label{eq:ifseq-iris-1}
\inferrule*[right=wp-bind]{
    \inferrule*[right=wp-if, rightskip=3em]{
        (P*\doublebrackets{e}=\btrue) \wand \WP c_1 \progspec{\lambda v. \WP (v \cseq c_3) \progspec{Q}} \quad (\dagger) \\
        (P*\doublebrackets{e}=\bfalse) \wand \WP c_2 \progspec{\lambda v. \WP (v \cseq c_3) \progspec{Q}} \quad (\ddagger)
    }{
        P \wand \WP (\cif{e}{c_1}{c_2}) \progspec{\lambda v. \WP (v \cseq c_3) \progspec{Q}}
    }
}{
    P \wand \WP ((\cif{e}{c_1}{c_2}) \cseq c_3) \progspec{Q}
}
\end{equation}
We can first apply the \textsc{wp-bind} rule to hide the second part $c_3$ of the sequential composition into the post-condition as another weakest precondition.
Then, by \textsc{wp-if} rule, we need to prove the weakest preconditions of two branches, ($\dagger$) and ($\ddagger$), separately.
Then, take ($\dagger$) for example, we can further symbolically execute $c_1$ in Iris's weakest precondition calculus until it reaches a terminal value $v_1$ with the memory satisfying $P'$.
Then by \textsc{wp-val} and \textsc{wp-seq}, we can further reduce the proof goal into a weakest precondition only about the remaining $c_3$ and eventually reach a tautology after symbolic executions of $c_3$'s weakest precondition.
We can do a similar proof for the other branch ($\ddagger$) as well.
\begin{equation}
\label{eq:iris-ifseq}
\inferrule*[]{
    \inferrule*[]{
        \inferrule*[right=wp-val, rightskip=4em]{
            \inferrule*[right=wp-seq, rightskip=4em]{
                \inferrule*[]{\vdots}{
                    P' \wand \WP c_3 \progspec{Q}
                }
            }{
                P' \wand \WP (v_1 \cseq c_3) \progspec{Q}
            }
        }{
            P' \wand \WP v_1 \progspec{\lambda v. \WP (v \cseq c_3) \progspec{Q}}
        }
    }{
        \vdots
    }
}{
    (P*\doublebrackets{e}=\btrue) \wand \WP c_1 \progspec{\lambda v. \WP (v \cseq c_3) \progspec{Q}}
}
\end{equation}
As we can see, the proof of two branches ($\dagger$) and ($\ddagger$) is exactly the proof of two branches of $\cif{e}{c_1 \cseq c_3}{c_2 \cseq c_3}$.
The ability to put future computations (like $c_3$ in this example) into the post-condition and postpone their proofs essentially makes extended rules like \textsc{if-seq} for free.

Users of Iris mainly use this kind of symbolic execution for weakest precondition, instead of the symbolic execution for Hoare triples.
Therefore, most of our discussions about extended rules do not apply to the Iris framework.
In section~\ref{sec:project}, we use Iris's embedding (since they are relatively easy to use) of Hoare logic to demonstrate how our theory apply to some weakest precondition based embedding.
But it is only for demonstration, and if provers want to use Iris in reality, they may want to use Iris's weakest precondition calculus.


\subsubsection{Characteristic Formulae}
\label{sec:cf}

Chargu{\'e}raud \cite{chargueraud2010program, chargueraud2011characteristic, gueneau2017verified} uses characteristic formulae to verify programs and builds CFML above its logic for the verification of imperative Caml programs.
The main idea of characteristic formulae is to define a function $\textsf{cf}$ of type $\text{command} \rightarrow (\text{assertion} \rightarrow \text{assertion} \rightarrow \text{Prop})$.
The result $\textsf{cf}(c)$ defines the relationship between the precondition and the postcondition of command $c$'s Hoare triple.
For example, the result for the sequential composition is defined below.
\begin{equation}
\label{eq:cf}
\textsf{cf}(c_1 \cseq c_2) \triangleq \lambda P.\, \lambda Q.\, \exists R.\, \textsf{cf}(c_1)\,P\,R \land \textsf{cf}(c_2)\,R\,Q    
\end{equation}
This formalization is highly similar to the deeply embedded Hoare logic.
The proof tree for a program in a deeply embedded Hoare logic is built recursively according to the program's syntax tree.
While a characteristic formulae is also generated recursively according to the program' syntax tree in a similar fashion.
For example, the definition~\eqref{eq:cf} resembles the deeply embedded sequence rule below.
$$
\inferrule*[left=seq]{
    \vdash \triple{P}{c_1}{R}\\
    \vdash \triple{R}{c_2}{Q}
}{
    \vdash \triple{P}{c_1 \cseq c_2}{Q}
}
$$

After a characteristic formulae is generated, provers can then verify in the meta-logic whether it satisfies the specification they want to prove.
In this way, they can directly apply features in the meta-logic (usually the one directly supported by the interactive theorem prover like Coq) to do the proof.

On the other hand, the characteristic formulae generator \textsf{cf} is proved sound w.r.t. a big-step semantics,
\begin{equation}
\label{eq:cf-sound}
\forall c, P, Q.\,
\textsf{cf}(c)\,P\,Q \Rightarrow\,
\vDash_b \triple{P}{c}{Q}
\end{equation}
which makes the framework foundational.
Here, the soundness definition~\eqref{eq:cf-sound} is a simplified version using the big-step based embedding (but without control flows) introduced in section~\ref{sec:embed-big}, while the original definition in Chargu{\'e}raud's work \cite{chargueraud2011characteristic} also considers features like separation logics.



\section{Proving Extended Rules in Different Embeddings}
\label{sec:proof}


So far, we introduced three categories of extended proof rules in section~\ref{sec:rules} and reviewed four dominant embedding methods for Hoare logic in section~\ref{sec:embed}.
This section investigates into proofs or disproofs of these rules under different embedding methods.
Before that, we first demonstrate why such research is non-trivial.




\textit{\textbf{Different embeddings define different sets of triples}}.
Given a Hoare logic embedding $\vDash$, we use it to define a set of all triples that are satisfiable under the embedding as $\{S \lsep \vDash S\}$.
For two different embeddings $\vDash_1$ and $\vDash_2$, if there exists some triple $S$ such that ${\vDash_1 S} \nLeftrightarrow {\vDash_2 S}$, then we know sets $\{S \lsep \vDash_1 S\}$ and $\{S \lsep \vDash_2 S\}$ are not equal.
Typically, a property holds for elements in one set does not necessarily hold for those in a different set.
Extra elements in a different set may not satisfy the property.
As a result, \textit{\textbf{a rule admissible in one embedding is not necessarily admissible in another embedding}}.
\begin{itemize}
    \item Shallow embeddings we have considered so far define different sets of triples.
    Although it is possible to bridge the big-step based embedding and the weakest precondition based embedding by proving the equivalence of two semantics, the equivalence between these embeddings and the continuation based one is not trivial and we believe that they define different sets of triples.
    \item The deep embedding in this paper defines different sets of triples.
    This is true unless the deep one is proven both sound and complete w.r.t. the shallow one.
    However, completeness is usually not necessary for a program verification tool because users mainly require the correctness (soundness) of the tool.
    It is also too challenging to give the completeness proof for a complex program logic.
    Therefore, we assume the deeply embedded logic is not complete.
\end{itemize}

In conclusion, some extended rule admissible in one of four embeddings may not be admissible in the others.
It is necessary to check whether each extended rule holds under each of these embeddings.

\begin{table}[h]
{
\centering
\resizebox{\textwidth}{!}{
\begin{tabular}{|c|c|c|c|c|}
\hline
\textbf{Proof Rule} & \begin{tabular}{c} \textbf{Deep Embed}$^d$ \\ (section~\ref{sec:proof-deep}) \end{tabular}
& \begin{tabular}{c} \textbf{Big-step} \\ (section~\ref{sec:proof-big}) \end{tabular}
& \begin{tabular}{c} \textbf{Weakest Pre.} \\ (section~\ref{sec:proof-small}) \end{tabular}
& \begin{tabular}{c} \textbf{Continuation} \\ (section~\ref{sec:proof-cont}) \end{tabular}
    \\\hline
\textsc{seq-inv}        &  Simple   & Simple  & 
\begin{tabular}{c}
    Medium: \\
By Simulation\textsuperscript{b}
\end{tabular}   &  
\begin{tabular}{c}
    Difficult: \\
$ \begin{matrix}
    \text{By Simulation} \\
    \&\ \text{Construction}^\text{c}
\end{matrix} $
\end{tabular} \\\hline
\textsc{nocontinue} &  Simple   & Simple  &  
\begin{tabular}{c}  Medium: \\ $ \begin{matrix}
    \text{By Simulation}
\end{matrix} $ \end{tabular}   &  
\begin{tabular}{c}
        Difficult: \\
    $ \begin{matrix}
        \text{By Simulation} \\
        \&\ \text{Construction}^\text{c}
    \end{matrix} $
\end{tabular} \\\hline
\textsc{if-seq}         & Simple & Simple  & 
\begin{tabular}{c}
    Medium: \\
    By Simulation
\end{tabular}   &
\begin{tabular}{c}
    Medium: \\
    By Simulation
\end{tabular}  \\\hline
\textsc{loop-nocontinue}& Simple   & Simple  &  
\begin{tabular}{c}
    Medium: \\
    By Simulation
\end{tabular}   &
\begin{tabular}{c}
    Medium: \\
    By Simulation
\end{tabular} \\\hline
\textsc{hoare-ex}       & 
    Difficult\textsuperscript{a}
& Simple &  Simple &  Simple \\\hline
\end{tabular}
}
\footnotesize
\begin{flushleft}
(a)~\textsc{hoare-ex} holds in deeply embedded logic, but the proof is complicated and requires impredicative assertions. \linebreak
(b)~\textsc{seq-inv} holds for our toy logic with weakest precondition based embedding, but Iris's weakest precondition definition, which also uses this embedding, is special and \textsc{seq-inv} fails in Iris. We explain this in section~\ref{sec:project-iris-shallow}.\linebreak
(c)~Many proofs in the continuation based embedding are intricate. We can prove \textsc{nocontinue} and \textsc{seq-inv} with complex constructions of continuation $\kappa$.
But the feasibility of such construction and the complexity of relevant proofs remains unknown for different and more complicated programming languages.\linebreak
(d)~The deep embedding and its proofs are implemented in our deeply embedded VST, which is integrated into the VST repository (https://github.com/PrincetonUniversity/VST).
\end{flushleft}
}
\caption{Extended Proof Rules in Different Embedding Methods}
\label{fig:summary-rule-embed}
\end{table}

We find all extended rules in figure~\ref{fig:extended} can be proved for all four embeddings, but some proofs are complicated and some proof obligations may be unreasonably heavy in reality for a more complex language.
We list our techniques to overcome these difficulties in table~\ref{fig:summary-rule-embed} and make some brief comments here.
\begin{itemize}
    \item Most proofs for the \textbf{deep} embedding, except the proof of \textsc{hoare-ex}, are simple inductions over proof trees.
    Rules are proved correct mainly by definitions for the \textbf{big-step} based embedding.
    These proofs are relatively trivial.
    \item Most proofs in the \textbf{weakest precondition} based embedding involves defining a simulation relation between pairs of expressions and program states.
    The \textsc{if-seq} and \textsc{loop-nocontinue} proof in the \textbf{continuation} based embedding uses the same idea.
    The principle behind these proofs are not difficult but do requires significantly more work than those for \textbf{deep} embedding and \textbf{big-step} based embedding.
    Thus we would say they are of medium difficulty.
    \item Proofs for \textsc{seq-inv} and \textsc{nocontinue} in the \textbf{continuation} based embedding not only require this simulation technique, but also demand complex constructions of continuations.
    These proofs are very complicated.
    \item \textsc{hoare-ex}'s proofs under shallow embeddings are trivial but are relatively complicated under the deep embedding.
\end{itemize}

In following sections, we will demonstrate our proofs under different embeddings using our WhileCF toy language.
These proofs are one of the major contributions of the paper and are formalized in Coq \cite{exrule-repo}.

To review, we paste extended rules we are going to prove in figure~\ref{fig:review}.
\begin{figure}[h]
\paragraph{Extended Rules --- Transformation Rules} \
\vspace{-2ex}
\begin{mathpar}
\inferrule*[left=if-seq]{
    \vdash \triple{P}{\cif{e}{c_1 \cseq c_3}{c_2 \cseq c_3}}{Q, [\vec{R}]}
}{
    \vdash \triple{P}{(\cif{e}{c_1}{c_2}) \cseq c_3}{Q, [\vec{R}]}
} \and
\inferrule[loop-nocontinue]{
    c_1, c_2 \text{ contain no } \ccontinue \\
    \vdash \triple{P}{\cfor{c_1 \cseq c_2}{\cskip}}{Q, [\vec{R}]}
}{
    \vdash \triple{P}{\cfor{c_1}{c_2}}{Q, [\vec{R}]}
}
\end{mathpar}

\paragraph{Extended Rules --- Structural Rules} \
\vspace{-2ex}
\begin{mathpar}
    \inferrule*[lab=hoare-ex]{
        \text{forall } x.\, \vdash \triple{P(x)}{c}{Q, [\vec{R}]}
    }{
        \vdash \triple{\exists x.\, P(x)}{c}{Q, [\vec{R}]}
    }
    \and
    \inferrule*[left=nocontinue]{
        c \text{ contains no } \ccontinue \\\\
        \vdash \triple{P}{c}{Q, [R_\ekb, R_\ekc]}
    }{
        \vdash \triple{P}{c}{Q, [R_\ekb, R'_\ekc]}
    }
\end{mathpar}

\paragraph{Extended Rules --- Inversion Rules} \
\vspace{-2ex}
\begin{mathpar}
    \inferrule*[left=seq-inv]{
        \vdash \triple{P}{c_1 \cseq c_2}{Q, [\vec{R}]}
    }{
        \text{exists } Q'.\, \vdash \triple{P}{c_1}{Q', [\vec{R}]}
        \text{ and } \vdash \triple{Q'}{c_2}{Q, [\vec{R}]}
    }
\end{mathpar}

\caption{Representative extended proof rules to be proved}
\label{fig:review}
\end{figure}

\subsection{Deep Embedding}
\label{sec:proof-deep}


\paragraph{The Choice of the Deep Embedding.}
Before proving extended rules in \textit{some} deeply embedded language, we should first fix \textit{the} deep embedding we want to discuss.
Theoretically, difference choices of the primary rule set determine different deep embeddings of program logic.
We find that, if we only pick compositional rules, singleton command rules and the consequence rule, we can prove all inversion rules quite easily by proof-tree induction, and on the top of that, we can then establish most transformation rules and structural rules in a straightforward way.
We choose the deep embedding with rules in figure~\ref{fig:setting} as primary rules.
We first stick to this specific deep embedding in this subsection to demonstrate extended rules' proofs and discuss other possible choices afterwards in section~\ref{sec:choice}.

\paragraph{\textbf{Proving Inversion Rules.}}

The proofs of inversion rules in the deep embedding are mainly by induction over the original proof trees.
We prove \textsc{seq-inv} and \textsc{loop-inv} in theorem~\ref{th:seq-inv} and theorem~\ref{th:loop-inv} for demonstration.
Other inversion rules like \textsc{if-inv} have similar proofs.

\begin{thm}
\label{th:seq-inv}
For any $P$, $c_1$, $c_2$, $Q$ and $\vec{R}$, if $\vdash \triple{P}{c_1 \cseq c_2}{Q, [\vec{R}]}$, then there exists another assertion $S$ such that $\vdash \triple{P}{c_1}{S, [\vec{R}]}$
and $\vdash \triple{S}{c_2}{Q, [\vec{R}]}$.
\end{thm}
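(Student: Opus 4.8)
The plan is to proceed by induction on the derivation of $\vdash \triple{P}{c_1 \cseq c_2}{Q, [\vec{R}]}$. Since the command in the conclusion is the compound term $c_1 \cseq c_2$ rather than a bare variable, I would first generalize the command to a fresh variable $c$ while recording the equation $c = c_1 \cseq c_2$ (equivalently, use dependent induction), so that each constructor case of the inductive proof-tree definition carries a constraint on the shape of its command. Inspecting the primary rules in figure~\ref{fig:setting}, only two constructors can produce a conclusion whose command has the form $c_1 \cseq c_2$: \textsc{hoare-seq} and \textsc{hoare-consequence}. Every other constructor (\textsc{hoare-skip}, \textsc{hoare-break}, \textsc{hoare-continue}, \textsc{hoare-assign}, \textsc{hoare-loop}, and \textsc{hoare-if}) fixes the command to an incompatible shape, so the recorded equation is contradictory and these cases are discharged vacuously by inversion.

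The \textsc{hoare-seq} case is immediate: its two premises are precisely $\vdash \triple{P}{c_1}{Q', [\vec{R}]}$ and $\vdash \triple{Q'}{c_2}{Q, [\vec{R}]}$ for some intermediate assertion $Q'$, so taking the witness $S \triangleq Q'$ closes the goal. The interesting case is \textsc{hoare-consequence}, whose premise is $\vdash \triple{P'}{c_1 \cseq c_2}{Q', [R_\ekb', R_\ekc']}$ together with the entailments $P \vdash P'$, $Q' \vdash Q$, $R_\ekb' \vdash R_\ekb$, and $R_\ekc' \vdash R_\ekc$. Applying the induction hypothesis to this premise yields an assertion $S$ with $\vdash \triple{P'}{c_1}{S, [R_\ekb', R_\ekc']}$ and $\vdash \triple{S}{c_2}{Q', [R_\ekb', R_\ekc']}$. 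The key observation is that the very same $S$ works for the original goal: I would re-apply \textsc{hoare-consequence} to the first triple, threading $P \vdash P'$ on the precondition, reflexivity $S \vdash S$ on the normal postcondition, and $R_\ekb' \vdash R_\ekb$, $R_\ekc' \vdash R_\ekc$ on the control-flow postconditions, to obtain $\vdash \triple{P}{c_1}{S, [\vec{R}]}$; and symmetrically apply \textsc{hoare-consequence} to the second triple, using $S \vdash S$, $Q' \vdash Q$, and the same two control-flow entailments, to obtain $\vdash \triple{S}{c_2}{Q, [\vec{R}]}$.

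I expect the only real subtlety to be the bookkeeping in the \textsc{hoare-consequence} case --- recognizing that the intermediate assertion is untouched by the surrounding weakening and that the control-flow entailments must be propagated into both sub-triples --- rather than any deep mathematical difficulty; this matches the ``Simple'' rating recorded for \textsc{seq-inv} under the deep embedding in table~\ref{fig:summary-rule-embed}. The one mechanization pitfall worth flagging is the generalization step noted above: inducting naively on a derivation whose command is the non-atomic term $c_1 \cseq c_2$ would fail to expose the constructor constraints, so remembering the command equation (or using dependent induction) is essential for the vacuous cases to go through.
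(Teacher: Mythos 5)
Your proposal is correct and matches the paper's proof essentially verbatim: both proceed by induction over the proof tree, observe that only \textsc{hoare-seq} and \textsc{hoare-consequence} can conclude a triple for $c_1 \cseq c_2$, take the intermediate assertion directly in the former case, and in the latter case apply the induction hypothesis and then re-apply \textsc{hoare-consequence} to both resulting sub-triples. Your extra remarks about remembering the command equation for dependent induction are sound mechanization advice but do not change the argument.
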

\begin{proof}
We prove it by induction over the proof tree of shape
$\vdash \triple{ \cdot }{c_1 \cseq c_2}{\cdot, [\cdot]}$.
In fact, the last step of proving $\triple{P}{c_1 \cseq c_2}{Q, [\vec{R}]}$ is either \textsc{hoare-seq} or \textsc{hoare-consequence}.
For the former situation, the conclusion obviously holds.
For the latter situation, we can find $P'$, $Q'$ and $\vec{R'}$ such that 
$$
\vdash \triple{P'}{c_1 \cseq c_2}{Q', [\vec{R'}]},
P \vdash P', Q' \vdash Q, R'_{\ekb} \vdash R_{\ekb}, \text{ and } R'_{\ekc} \vdash R_{\ekc}.
$$
By induction hypothesis, we can find $S$ such that $\vdash \triple{P'}{c_1}{S, [\vec{R'}]}$ and 
$\vdash\triple{Q}{c_2}{Q', [\vec{R'}]}$.
Then, we get $\vdash \triple{P}{c_1}{S, [\vec{R}]}$
and $\vdash \triple{S}{c_2}{Q, [\vec{R}]}$ by \textsc{hoare-consequence}.
\end{proof}

\begin{thm}
    \label{th:loop-inv}
    For any $P, c_1, c_2, Q$ and $\vec{R}$, if $\vdash \triple{P}{\cfor{c_1}{c_2}}{Q, [R_\ekb, R_\ekc]}$, then there exists two assertions $I_1, I_2$ such that
    $$
    \vdash \triple{I_1}{c_1}{I_2, [Q, I_2]} \text{ and } \vdash \triple{I_2}{c_2}{I_1, [Q, \bot]}, \text{ and } P \vdash I_1.
    $$
\end{thm}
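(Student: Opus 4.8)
The plan is to follow exactly the template of Theorem~\ref{th:seq-inv} and argue by induction on the derivation of $\vdash \triple{P}{\cfor{c_1}{c_2}}{Q, [R_\ekb, R_\ekc]}$. The key observation is that the only primary rules whose conclusion is a Hoare triple about a for-loop command are \textsc{hoare-loop} and \textsc{hoare-consequence}; every other rule concludes a triple about a syntactically different command, so those cases are vacuous and discharged immediately. Thus there are really only two cases to treat, one of which is the true base case and the other the genuine inductive step.

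In the \textsc{hoare-loop} case the last inference already supplies a loop invariant $I$ together with the premises $\vdash \triple{P}{c_1}{I, [Q, I]}$ and $\vdash \triple{I}{c_2}{P, [Q, \bot]}$. I would then simply take $I_1 \triangleq P$ and $I_2 \triangleq I$: the two required triples are literally the premises, and $P \vdash I_1$ holds by reflexivity. In the \textsc{hoare-consequence} case I obtain a sub-derivation $\vdash \triple{P'}{\cfor{c_1}{c_2}}{Q', [R_\ekb', R_\ekc']}$ with $P \vdash P'$ and $Q' \vdash Q$ (the entailments on the break and continue post-conditions of the whole loop are not needed, since the conclusion never mentions them). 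Applying the induction hypothesis to this sub-derivation yields $I_1', I_2'$ with $\vdash \triple{I_1'}{c_1}{I_2', [Q', I_2']}$, $\vdash \triple{I_2'}{c_2}{I_1', [Q', \bot]}$, and $P' \vdash I_1'$. I would keep $I_1 \triangleq I_1'$ and $I_2 \triangleq I_2'$ and repair the mismatch with a single application of \textsc{hoare-consequence} to each triple, weakening only the break post-condition from $Q'$ to $Q$ while leaving the normal and continue post-conditions untouched; transitivity of $\vdash$ turns $P \vdash P'$ and $P' \vdash I_1'$ into the required $P \vdash I_1$.

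I do not expect a genuine obstacle here: the argument is a routine proof-tree induction whose only delicate point is the bookkeeping in the consequence case — namely, recognising that $Q'$ appears precisely in the break slots of both output triples, so that the post-conditions can be realigned by independent, per-exit-kind weakening without disturbing the invariants $I_1', I_2'$. This is exactly why the deep embedding rates this rule ``Simple'' in Table~\ref{fig:summary-rule-embed}: the invariants are read off directly from the \textsc{hoare-loop} node and merely threaded through the consequence nodes, in sharp contrast to the shallow embeddings, where one would instead have to synthesise $I_1$ and $I_2$ from the loop's semantics.
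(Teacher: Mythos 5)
Your proposal is correct and follows essentially the same route as the paper's proof: induction over the proof tree, with the \textsc{hoare-loop} case discharged by reading off $I_1 = P$, $I_2 = I$ from the premises, and the \textsc{hoare-consequence} case handled by applying the induction hypothesis and re-applying consequence to weaken the break post-condition from $Q'$ to $Q$ while chaining $P \vdash P' \vdash I_1'$. The only difference is that you spell out the witnesses in the loop case where the paper says the conclusion "trivially holds."
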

\begin{proof}
    We prove it by induction over the proof tree of shape $\vdash \triple{\cdot}{\cfor{c_1}{c_2}}{\cdot}$.
    Similar to the previous proof, the last step in the proof tree is either \textsc{hoare-loop} or \textsc{hoare-consequence}.
    For the first case, the conclusion trivially holds.
    For the second case, we can find $P'$, $Q'$ and $\vec{R'}$ such that 
    $$
    \vdash \triple{P'}{\cfor{c_1}{c_2}}{Q', [\vec{R'}]},
    P \vdash P', Q' \vdash Q, R'_{\ekb} \vdash R_{\ekb}, \text{ and } R'_{\ekc} \vdash R_{\ekc}.
    $$
    By induction hypothesis, we can find $I_1, I_2$ such that $\vdash \triple{I_1}{c_1}{I_2, [Q', I_2]}$ and 
    $\vdash\triple{I_2}{c_2}{I_1, [Q', \bot]}$ and $P' \vdash I_1$.
    This also implies $P \vdash I_1$ and by \textsc{hoare-consequence} we have $I_1, I_2$ such that $\vdash \triple{I_1}{c_1}{I_2, [Q, I_2]}$ and 
    $\vdash\triple{I_2}{c_2}{I_1, [Q, \bot]}$.
\end{proof}




\vspace{10pt}
\noindent\textit{\textbf{Proving} \textsc{hoare-ex} \textbf{rule}}.
\vspace{5pt}

\noindent
\textsc{hoare-ex} (theorem~\ref{th:hoare-ex-deep}) is hard to prove in a deeply embedded logic since the proof tree of $\vdash \triple{P(x)}{c}{Q, [\vec{R}]}$ may be different for a different $x$.
We prove it by induction on $c$'s syntax tree instead. Here, we only show one induction step in the proof---the case for sequential composition. Other induction steps are proved similarly. A complete proof can be found in our Coq development.

This proof is based on one important assumption about the assertion language: we can quantify over assertions inside an assertion and inject from the meta-logic's propositions into the assertion language, and moreover:

\begin{hypo}
    \label{hypo:ex}
    For any assertion $P, Q, \vec{R}$ and program $c$, if $\vdash \triple{P}{c}{Q, [\vec{R}]}$ holds, then we have
    $
    P \ \vdash \ \exists P_0. \ \left(\vdash \triple{P_0}{c}{Q, [\vec{R}]}\right) \wedge P_0
    $.
\end{hypo}
This hypothesis above is an entailment in the assertion logic and the right hand side is an assertion.
In this assertion, $P_0$ is an existentially quantified assertion inside another assertion, and
$\left(\vdash \triple{P_0}{c}{Q, [\vec{R}]}\right)$ is a meta-logic proposition, stating that a triple is provable, used as a conjunct in a compound assertion.
This hypothesis is obviously true since the existentially quantified $P_0$ on the right hand side can be instantiated by $P$.

\paragraph{Remark.}
This kind of assertions, where universe quantifiers and existential quantifiers can quantify over assertions and Hoare triples can be admitted as assertions, are known as impredicative assertions or impredicative polymorphism \cite{ni2006certified}.
Our proof here assumes impredicative assertions are available in the assertion language, which is indeed true as we discuss the impredicative assertions later in section~\ref{sec:discuss-impred}

\


Back to our proof, hypothesis~\ref{hypo:ex} and \textsc{seq-inv} immediately validates lemma~\ref{th:seq-post-inv}, which is used in the proof of \textsc{hoare-ex} (theorem~\ref{th:hoare-ex-deep}).

\begin{lem}
\label{th:seq-post-inv}
For any $P$, $c_1$, $c_2$, $R$ and $\vec{S}$, if $\vdash \triple{P}{c_1 \cseq c_2}{R, [\vec{S}]}$, then $$
\vdash \left\{P\right\}c_1\left\{\exists Q. \left(\vdash \triple{Q}{c_2}{R, [\vec{S}]}\right) \wedge Q, [\vec{S}]\right\}.
$$
\end{lem}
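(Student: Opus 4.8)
The plan is to derive the lemma by chaining \textsc{seq-inv} (theorem~\ref{th:seq-inv}) with hypothesis~\ref{hypo:ex} and closing the remaining gap with a single application of \textsc{hoare-consequence}. The guiding observation is that the impredicative postcondition appearing in the goal is exactly the right-hand side of hypothesis~\ref{hypo:ex}, so the only assertion-logic entailment I need is handed to me directly rather than having to be constructed by hand.

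First I would apply \textsc{seq-inv} to the hypothesis $\vdash \triple{P}{c_1 \cseq c_2}{R, [\vec{S}]}$. This produces an intermediate assertion --- call it $Q_0$, to avoid clashing with the control-flow postconditions $\vec{S}$ --- together with the two triples $\vdash \triple{P}{c_1}{Q_0, [\vec{S}]}$ and $\vdash \triple{Q_0}{c_2}{R, [\vec{S}]}$. The first of these is already the triple for $c_1$ I am after, modulo weakening its normal postcondition; the second is precisely what I will feed into the hypothesis.

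Next I would instantiate hypothesis~\ref{hypo:ex} on the second triple, taking its precondition to be $Q_0$, its command to be $c_2$, and its postconditions to be $R$ and $\vec{S}$. Since $\vdash \triple{Q_0}{c_2}{R, [\vec{S}]}$ holds, the hypothesis yields exactly the assertion-logic entailment
$$
Q_0 \ \vdash \ \exists Q.\ \left(\vdash \triple{Q}{c_2}{R, [\vec{S}]}\right) \wedge Q,
$$
which is the weakening of the normal postcondition I require. Impredicativity is essential at this step: the quantifier $\exists Q$ ranges over assertions and the provability judgment is injected into the assertion language as a conjunct. Finally I would apply \textsc{hoare-consequence} to $\vdash \triple{P}{c_1}{Q_0, [\vec{S}]}$, keeping the precondition $P$ and both control-flow postconditions in $\vec{S}$ unchanged (their premises discharged by reflexivity) and using the displayed entailment to weaken the normal postcondition from $Q_0$ to $\exists Q.\ (\vdash \triple{Q}{c_2}{R, [\vec{S}]}) \wedge Q$. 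This delivers the goal.

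Given the machinery already established, there is no genuinely hard step here: the lemma is essentially a repackaging of \textsc{seq-inv} through the impredicative witness supplied by hypothesis~\ref{hypo:ex}, where the existential over assertions in the goal is discharged by the very intermediate $Q_0$ that \textsc{seq-inv} produced. The only points demanding care are the bookkeeping of the control-flow postconditions $\vec{S}$, which must be threaded unchanged through both \textsc{seq-inv} and \textsc{hoare-consequence}, and the conceptual recognition that weakening a postcondition to an impredicative ``there exists a valid continuation precondition that holds here'' assertion is exactly what hypothesis~\ref{hypo:ex} licenses.
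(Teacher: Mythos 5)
Your proof is correct and follows exactly the route the paper intends: the paper states that hypothesis~\ref{hypo:ex} and \textsc{seq-inv} ``immediately validate'' lemma~\ref{th:seq-post-inv}, and your chain of \textsc{seq-inv}, then hypothesis~\ref{hypo:ex} applied to the second resulting triple, then \textsc{hoare-consequence} to weaken the normal postcondition is precisely that argument spelled out. No gaps.
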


\begin{thm} \label{th:hoare-ex-deep} 
    For any $T, P, c, Q$ and $\vec{R}$, if for any $x$ of type $T$, we have $\vdash \triple{P(x)}{c}{Q, [\vec{R}]}$, then $\vdash \triple{\exists x.\, P(x)}{c}{Q, [\vec{R}]}$.
\end{thm}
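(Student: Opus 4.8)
The plan is to induct on the syntax tree of $c$, rather than on the derivation of $\vdash \triple{P(x)}{c}{Q, [\vec{R}]}$. Proof-tree induction is the first thing one reaches for, but it fails here: the shape of the derivation witnessing $\vdash \triple{P(x)}{c}{Q, [\vec{R}]}$ may genuinely vary with $x$, so there is no single proof tree to recurse on. Structural induction on $c$ sidesteps this, since the command—and therefore the set of applicable primary rules—is fixed independently of $x$. I will carry out the representative sequential-composition case $c = c_1 \cseq c_2$; the singleton cases ($\cskip$, $\cbreak$, $\ccontinue$, assignment) are discharged directly from their inversion rules together with the basic fact that $\exists x.\,P(x) \vdash Q$ whenever each $P(x) \vdash Q$, while the other compound cases reuse the bundling idea described next.

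For $c = c_1 \cseq c_2$, the hypothesis supplies $\vdash \triple{P(x)}{c_1 \cseq c_2}{Q, [\vec{R}]}$ for every $x$. The decisive step is to apply lemma~\ref{th:seq-post-inv}—itself obtained from \textsc{seq-inv} and the impredicative hypothesis~\ref{hypo:ex}—to each $x$, which yields
$$\vdash \triple{P(x)}{c_1}{S, [\vec{R}]}, \qquad S \triangleq \exists Q_0.\, \left(\vdash \triple{Q_0}{c_2}{Q, [\vec{R}]}\right) \wedge Q_0.$$
What matters is that $S$ does not mention $x$: the per-$x$ intermediate assertions have been collapsed into a single $x$-independent one by folding the provability obligation for $c_2$ into the assertion itself. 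The induction hypothesis for $c_1$, instantiated at the family $P(\cdot)$ with the fixed postcondition $S$, then delivers $\vdash \triple{\exists x.\, P(x)}{c_1}{S, [\vec{R}]}$.

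It remains to establish $\vdash \triple{S}{c_2}{Q, [\vec{R}]}$ and close with \textsc{hoare-seq}. Since $S$ is again an existential over assertions, I will invoke the induction hypothesis for $c_2$ (quantifying over $Q_0$), reducing the goal to proving, for each $Q_0$,
$$\vdash \triple{\left(\vdash \triple{Q_0}{c_2}{Q, [\vec{R}]}\right) \wedge Q_0}{c_2}{Q, [\vec{R}]}.$$
Here the precondition carries the embedded meta-level fact $\vdash \triple{Q_0}{c_2}{Q, [\vec{R}]}$; extracting this pure proposition makes the needed triple available, and then \textsc{hoare-consequence} applied along $\left(\vdash \triple{Q_0}{c_2}{Q,[\vec{R}]}\right) \wedge Q_0 \vdash Q_0$ discharges the goal. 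Composing the two halves with \textsc{hoare-seq} yields $\vdash \triple{\exists x.\, P(x)}{c_1 \cseq c_2}{Q, [\vec{R}]}$.

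I expect the main obstacle to be exactly the ingredient the argument leans on twice: impredicativity. We need the assertion language to quantify over assertions and to embed the provability proposition $\vdash \triple{\cdot}{\cdot}{\cdot}$ as a conjunct (hypothesis~\ref{hypo:ex}), and then to recover and use that embedded fact when discharging the $c_2$ subgoal—including gracefully handling the case in which the embedded triple fails, which collapses the precondition to $\bot$ and demands that a false precondition prove any triple. Without this impredicative machinery there is no way to tame the $x$-dependence of the proof trees, which is precisely why \textsc{hoare-ex}, trivial under every shallow embedding, is the genuinely hard rule under the deep embedding.
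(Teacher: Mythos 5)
Your overall strategy is exactly the paper's: structural induction on $c$ (correctly rejecting proof-tree induction because the derivations vary with $x$), lemma~\ref{th:seq-post-inv} to collapse the per-$x$ intermediate assertions into the single $x$-independent $S \triangleq \exists Q_0.\, \left(\vdash \triple{Q_0}{c_2}{Q, [\vec{R}]}\right) \wedge Q_0$, the induction hypothesis for $c_1$ at the family $P(\cdot)$, and the induction hypothesis for $c_2$ to strip the outer existential over $Q_0$. Up to that point the two proofs coincide.

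The divergence — and the gap — is in the very last step, proving $\vdash \triple{\left(\vdash \triple{Q_0}{c_2}{Q, [\vec{R}]}\right) \wedge Q_0}{c_2}{Q, [\vec{R}]}$. You propose to ``extract the pure proposition'' and then apply \textsc{hoare-consequence}; when the embedded triple is not provable you note the precondition collapses to $\bot$ and that you then need a false precondition to prove any triple. This leaves two unresolved obligations: (i) case-splitting on whether $\vdash \triple{Q_0}{c_2}{Q, [\vec{R}]}$ holds requires excluded middle (or decidability of provability) in the meta-logic, which the development does not assume; and (ii) $\vdash \triple{\bot}{c_2}{Q, [\vec{R}]}$ is not a primary rule and is not derivable by \textsc{hoare-consequence} alone — it would need its own auxiliary induction over $c_2$'s syntax, which you neither state nor prove. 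The paper avoids both problems with one trick: letting $T_0$ be the type of proofs of $\vdash \triple{Q_0}{c_2}{Q, [\vec{R}]}$, it observes $\left(\vdash \triple{Q_0}{c_2}{Q, [\vec{R}]}\right) \wedge Q_0 \dashv\vdash \exists x\colon T_0.\, Q_0$ and applies the induction hypothesis for $c_2$ a second time at index type $T_0$. The premise ``for every $x\colon T_0$ the triple holds'' is a tautology (each such $x$ is itself a proof), and the case where the triple is unprovable is absorbed automatically, since the quantification over an empty $T_0$ is vacuous. You should replace your final case analysis with this second application of the induction hypothesis; the rest of your argument stands as written.
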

\begin{proof}
By induction on $c$'s syntax tree, the induction step for the sequential composition is to prove $\vdash \triple{\exists x. \ P(x)}{ c_1 \cseq c_2 }{R, [\vec{S}]}$ under the following assumptions:

\begin{tabular}{ll}
(IH1) & For any type $T$ and $P$, $Q$, $R$, \\
& if $\vdash \triple{P(x)}{ c_1}{Q, [\vec{R}]}$ holds for any $x:T$,  then $\vdash \triple{\exists x. \ P(x)}{ c_1 }{Q, [\vec{R}]}$. \\
(IH2) & For any type $T$ and $P$, $Q$, $R$, \\
& if $\vdash \triple{P(x)}{ c_2}{Q, [\vec{R}]}$ holds for any $x:T$, then $\vdash \triple{\exists x. \ P(x)}{ c_2 }{Q, [\vec{R}]}$. \\
(Assu) & $\vdash \triple{P(x)}{ c_1 \cseq c_2 }{R, [\vec{S}]}$ holds for any $x:T$.
\end{tabular}

First, by (Assu) and lemma~\ref{th:seq-post-inv},
$ \vdash \left\{P(x)\right\}{ c_1 }\left\{\exists Q.\, \left(\vdash \triple{Q}{c_2}{R, [\vec{S}]}\right) \wedge Q, [\vec{S}]\right\}$
holds for any $x$ of type $T$. Then, using (IH1), we can get:
$$ \vdash \left\{\exists x. \ P(x)\right\}{ c_1 }\left\{\exists Q. \ \left(\vdash \triple{Q}{c_2}{R, [\vec{S}]}\right) \wedge Q, [\vec{S}]\right\}.$$
Now, if we can prove
$ \vdash \triple{\exists Q. \ \left(\vdash \triple{Q}{c_2}{R, [\vec{S}]}\right) \wedge Q}{ c_2 }{R, [\vec{S}]}$,
then our conclusion will immediately follow according to \textsc{hoare-seq}. In fact, it is straightforward.
By (IH2), we only need to prove that for any assertion $Q$,
$$ \vdash \triple{\left(\vdash \triple{Q}{c_2}{R, [\vec{S}]}\right) \wedge Q}{ c_2 }{R, [\vec{S}]}.$$
Suppose $T_0$ is the set of proofs of $\vdash \triple{Q}{c_2}{R, [\vec{S}]}$, then
$$\left(\vdash \triple{Q}{c_2}{R, [\vec{S}]}\right) \wedge Q \dashv \vdash \exists x: T_0. \ Q$$
Thus, by \textsc{hoare-consequence}, we only need to prove $\vdash \triple{\exists x: T_0. \ Q}{c_2}{R, [\vec{S}]}$.
By (IH2) again, we only need to prove: if $\vdash \triple{Q}{c_2}{R, [\vec{S}]}$ has a proof, then $\vdash \triple{Q}{c_2}{R, [\vec{S}]}$; this is tautology.
Now we complete the induction step for sequential composition.
\end{proof}

Above proof of the sequential composition branch in proving theorem~\ref{th:hoare-ex-deep} is based on \textsc{seq-inv} through lemma~\ref{th:seq-post-inv}.
In fact, the induction over the program $c$'s syntax tree has to consider all possible constructors of a command.
Therefore, it is necessary to develop inversion rules and inversion lemmas like lemma~\ref{th:seq-post-inv} for each constructor.
Moreover, only careful design of primary rules for the deeply embedded Hoare logic can establish these inversion rules: \textbf{for each program constructor, there exists exactly one primary rule corresponds to it}.
Otherwise, the indeterminacy of the Hoare triple's proof tree will obstruct the proof of inversion rules\footnote{
    Taking Iris for example, sequential composition is defined by lambda function application.
    As a result, we cannot have both \textsc{hoare-seq} and \textsc{hoare-app} for function application when developing its deeply embedded version in section~\ref{sec:project-iris-shallow}.
    Otherwise, it is impossible to have \textsc{app-inv} with both primary rule presented, because a Hoare triple about function application may be derived by \textsc{hoare-seq} whose premises cannot derive those of \textsc{hoare-app} and its inversion will fail.
}.

\paragraph{\textbf{Proving Other Rules.}}

The proof of other extended rules in the deep embedding follows a common pattern.
\textit{They first extract information from Hoare triples in premises using inversion rules, and then use primary rules to combine them to establish new triples.}
This is exactly the proof scheme presented in figure \ref{fig:loop-nocont-eg} in section \ref{sec:rules-struct} when we proves \textsc{nocontinue} using inversion rules.
The proof for other rules are similar to this one and is omitted here.

This proof technique using inversion rules can also be applied to shallowly embedded logics, if these inversion rules are true.
However, we would like to explore proofs without assuming the correctness of inversion rules in the following sections.
Even though they are true for our toy language and shallow embeddings for their logics, they may not be true in reality when some other features are added to the logic.
In these cases, a proof to circumvent inversion rules would be helpful.




\subsection{Big-step based Shallow Embedding}
\label{sec:proof-big}

\paragraph{\textbf{Proving Transformation Rules.}}
There is a general proof scheme for transformation rules under shallow embeddings.
\textit{We can first prove some notion of ``semantic similarity'' of two programs in the goal and premise, then lift this ``semantic similarity'' to the logic level and derive a relation between Hoare triples, i.e., the extended rule.}
For example, the ``semantic similarity'' in the big-step based embedding is exactly the refinement of the big-step reduction.
\begin{definition}
    We say program $c_1$ refines $c_2$ ($c_1 \sqsubseteq c_2$) in big-step semantics iff. 
    \begin{itemize}
        \item for any $s_1, s_2, \ek$, if $(c_1, s_1) \Downarrow (\ek, s_2)$, then $(c_2, s_1) \Downarrow (\ek, s_2)$;
        \item and for any $s$, if $(c_1, s) \Uparrow$, then $(c_2, s) \Uparrow$.
    \end{itemize}
\end{definition}
Lemma~\ref{lem:big-triple-trans} lifts the bit-step refinement to the implication between Hoare triples using the big-step embedding.
\begin{lem}
    \label{lem:big-triple-trans}
    For any $c_1, c_2$, if $c_2 \sqsubseteq c_1$, then for any $P, Q, \vec{R}$, $\vDash_b \triple{P}{c_1}{Q, [\vec{R}]}$ implies $\vDash_b \triple{P}{c_2}{Q, [\vec{R}]}$.
\end{lem}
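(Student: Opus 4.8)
The plan is to prove the lemma by directly unfolding the definition of big-step validity and transporting each proof obligation across the refinement relation. First I would fix an arbitrary initial state $\sigma_1$ with $\sigma_1 \vDash P$ and establish the two conjuncts required by $\bigvalid \triple{P}{c_2}{Q, [\vec{R}]}$ separately: the safety clause (no error) and the correctness clause (the appropriate post-condition holds at every terminal state).

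For safety, I would argue by contradiction. Suppose $(c_2, \sigma_1) \Uparrow$. The error clause of $c_2 \sqsubseteq c_1$ then yields $(c_1, \sigma_1) \Uparrow$. But the hypothesis $\bigvalid \triple{P}{c_1}{Q, [\vec{R}]}$ applied to $\sigma_1 \vDash P$ gives $\lnot (c_1, \sigma_1) \Uparrow$, a contradiction. Hence $\lnot (c_2, \sigma_1) \Uparrow$, as required.

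For correctness, I would take any exit kind $\ek$ and state $\sigma_2$ with $(c_2, \sigma_1) \Downarrow (\ek, \sigma_2)$. The reduction clause of $c_2 \sqsubseteq c_1$ lifts this to $(c_1, \sigma_1) \Downarrow (\ek, \sigma_2)$, and the correctness part of $\bigvalid \triple{P}{c_1}{Q, [\vec{R}]}$ applies to this very derivation. A case split on $\ek$ (normal $\epsilon$, break $\ekb$, continue $\ekc$) then delivers $\sigma_2 \vDash Q$, $\sigma_2 \vDash R_\ekb$, or $\sigma_2 \vDash R_\ekc$ respectively, which is exactly the obligation that validity of $c_2$ imposes for the same $\ek$ and $\sigma_2$.

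This proof is essentially a routine unfolding, so there is no deep obstacle; the single point demanding care is the direction of the refinement. Validity flows from the program with more behaviors ($c_1$) to the one with fewer ($c_2$), so it is crucial that the hypothesis reads $c_2 \sqsubseteq c_1$ and not the reverse: the refining program $c_2$ may only exhibit terminations and errors already present for $c_1$, so every obligation discharged for $c_1$ automatically covers all of $c_2$'s behaviors. When this lemma is later used to justify transformation rules such as \textsc{if-seq} and \textsc{loop-nocontinue}, the two programs involved are big-step equivalent, so refinement holds in both directions and the lemma can be applied freely to whichever side is more convenient.
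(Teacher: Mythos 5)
Your proposal is correct and matches the paper's own proof essentially step for step: safety by contradiction via the error clause of the refinement, and correctness by transporting each terminating execution of $c_2$ to one of $c_1$ and reusing the validity of $c_1$'s triple. Nothing further is needed.
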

\begin{proof}
    The proof of the theorem is straightforward.
    Assume the pre-state is $s_1$, which satisfies $P$.
    \begin{itemize}
        \item We use contradiction to prove $c_2$ has no error. 
        If it can cause error, $(c_2, s_1) \Uparrow$, then according to $c_2 \sqsubseteq c_1$, we can also construct an error in $c_1$'s execution, $(c_1, s_1) \Uparrow$, which contradicts with $\vDash_b \triple{P}{c_1}{Q, [\vec{R}]}$.
        \item For any ending configuration $(\ek, s_2)$ which $c_2$ can reach, we construct $(c_1, s_1) \Downarrow (\ek, s_2)$ from $(c_2, s_1) \Downarrow (\ek, s_2)$ according to $c_2 \sqsubseteq c_1$.
        By $\vDash_b \triple{P}{c_1}{Q, [\vec{R}]}$, we can show $(\ek, s_2)$ satisfies the post-condition and thus prove $\vDash_b \triple{P}{c_2}{Q, [\vec{R}]}$
    \end{itemize}
\end{proof}

With lemma~\ref{lem:big-triple-trans}, proofs for transformation rules \textsc{if-seq} and \textsc{loop-nocontinue} become apparent with the following two refinements.
$$
\begin{aligned}
(\cif{e}{c_1}{c_2}) \cseq c_3 &\sqsubseteq \cif{e}{c_1 \cseq c_3}{c_2 \cseq c_3}\\
\cfor{c_1}{c_2} &\sqsubseteq \cfor{c_1 \cseq c_2}{\cskip}
\end{aligned}
$$
They are also easy to prove.
For example, we can destruct the big-step reduction
$$
((\cif{e}{c_1}{c_2}) \cseq c_3, s_1) \Downarrow (\epsilon, s_2)
$$
into the following proposition, which essentially encodes the reduction segments of two execution paths.
$$
\begin{aligned}
& (\text{eval}(e, s_1) = \btrue \land (c_1, s_1)\Downarrow (\epsilon, s_3) \land (c_3, s_3) \Downarrow (\epsilon, s_2)) \\
\lor& (\text{eval}(e, s_1) = \bfalse \land (c_2, s_1)\Downarrow (\epsilon, s_3) \land (c_3, s_3) \Downarrow (\epsilon, s_2))
\end{aligned}
$$
We can easily reconstruct a reduction of $\cif{e}{c_1 \cseq c_3}{c_2 \cseq c_3}$ from $s_1$ to $(\epsilon, s_2)$ using this proposition.
Other branches of \textsc{if-seq}'s refinement proof are similar, which can be easily enumerated by discussing four different reductions (three different exit control flow and one error case) of the program.

The proof for the loop-nocontinue is similar but involves induction over the loop iteration.
The reduction of each iteration of $\cfor{c_1}{c_2}$ will be destructed into two reduction segments of $c_1$ and $c_2$, which can be easily combined into a reduction of $c_1 \cseq c_2$ since both contains no $\ccontinue$.
And the reduction of $c_1 \cseq c_2$ is exactly one iteration of $\cfor{c_1 \cseq c_2}{\cskip}$, and combined with induction hypothesis, it reproduces the reduction of $\cfor{c_1 \cseq c_2}{\cskip}$ and proves the refinement.

This is also the approach taken in Software Foundations Vol. 6 \cite{SF} to prove a similar transformation rule in the big-step based embedding.
We will see later in following sections that this approach can also be applied to the weakest precondition based embedding and the continuation based embedding but can have quite different definitions of ``semantic similarity''.


\paragraph{\textbf{Proving Structural Rules.}}
We then show the proof for \textsc{hoare-ex} here and prove other rules in appendix~\ref{sec:Abigproof}.
The conclusion for \textsc{hoare-ex} holds for another two shallow embeddings with similar proofs, which we omit in following sections. 

\begin{thm}
    For all $c, P$, and $Q$, if for all $x$, $\vDash_b \triple{P(x)}{c}{Q}$, then $\vDash_b \triple{\exists x.\, P(x)}{c}{Q}$.
\end{thm}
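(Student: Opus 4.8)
The plan is to unfold the definition of big-step validity on both sides and observe that the existential quantifier in the precondition commutes with the universal quantification over pre-states that validity performs. Recall that under the shallow embedding of the assertion language (figure~\ref{fig:shallow-assn}), a state satisfies an existential exactly when some witness makes the body hold, i.e. $\sigma \vDash \exists x.\, P(x)$ iff there is some $x_0$ with $\sigma \vDash P(x_0)$. Since the whole post-condition $Q, [\vec{R}]$ is fixed and does not depend on $x$, once a witness is chosen the obligations for the two triples line up verbatim.

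Concretely, first I would fix an arbitrary pre-state $\sigma_1$ with $\sigma_1 \vDash \exists x.\, P(x)$ and, by the remark above, extract a witness $x_0$ such that $\sigma_1 \vDash P(x_0)$. Then I would instantiate the hypothesis at this $x_0$, obtaining $\bigvalid \triple{P(x_0)}{c}{Q, [\vec{R}]}$. Feeding $\sigma_1$ into this instantiated triple discharges both proof obligations of the goal at once: its safety clause gives $\lnot\,(c,\sigma_1)\Uparrow$, and its correctness clause says that for every $\ek$ and $\sigma_2$ with $(c,\sigma_1)\Downarrow(\ek,\sigma_2)$ the ending state $\sigma_2$ satisfies $Q$, $R_\ekb$, or $R_\ekc$ according to the exit kind $\ek$. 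These are precisely the clauses required for $\bigvalid \triple{\exists x.\, P(x)}{c}{Q, [\vec{R}]}$, so the proof closes.

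There is essentially no obstacle here: this is the easy direction that contrasts sharply with the deep-embedding proof (theorem~\ref{th:hoare-ex-deep}), where the quantifier lives inside proof trees rather than inside a semantic predicate and one must resort to induction on $c$ and impredicative assertions. Under the big-step embedding, validity is a universally quantified statement over pre-states, and a universal commutes past the existential witness extraction with no work; the only point requiring care is reading the existential through the assertion-language interpretation rather than treating it syntactically. The same one-line argument transfers unchanged to the weakest-precondition and continuation based embeddings, which is why the paper records \textsc{hoare-ex} as \emph{Simple} for all three shallow embeddings in table~\ref{fig:summary-rule-embed}.
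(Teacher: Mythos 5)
Your proof is correct and follows exactly the same route as the paper's: extract a witness $x_0$ from $\sigma_1 \vDash \exists x.\, P(x)$, instantiate the hypothesis at $x_0$, and observe that the resulting validity statement for $\sigma_1$ is verbatim what the conclusion requires. The paper's version is just a terser rendering of the same two steps.
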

\begin{proof}
    For $\sigma \vDash \exists x.\, P(x)$, we know there exists some $x$ such that $\sigma \vDash P(x)$.
    By the definition of $\vDash_b$, we can instantiate the premise with initial state $\sigma$ and the conclusion immediately follows by definition. 
\end{proof}

\paragraph{\textbf{Proving Inversion Rules.}}
Inversion rules' proofs of the big-step based embedding is similar to those of the deep embedding.
The big-step semantics is defined inductively w.r.t. the structure of a program, and we can reorganize the big-step relation underlying the triple just like what we do for the deeply embedded logic.
As a result, we can easily prove inversion rules.
We only demonstrate the proof for \textsc{seq-inv} here.
\begin{thm}
    For all $c_1, c_2, P, Q$, and $R$, if $\vDash_b \triple{P}{c_1 \cseq c_2}{Q, [\vec{R}]}$, then there exists $Q'$ such that (1) $\vDash_b \triple{P}{c_1}{Q', [\vec{R}]}$ and (2) $\vDash_b \triple{Q'}{c_2}{Q, [\vec{R}]}$.
\end{thm}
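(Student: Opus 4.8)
The plan is to exhibit an explicit witness for $Q'$ and then verify the two resulting triples directly from the definition of $\bigvalid$. The natural choice is the strongest postcondition of $c_1$ along normal exit, namely
\[
Q' \;\triangleq\; \lambda \sigma_3.\; \exists \sigma_1.\; \sigma_1 \vDash P \,\land\, (c_1, \sigma_1) \Downarrow (\epsilon, \sigma_3),
\]
i.e. the set of states reachable by a normally-terminating run of $c_1$ from some $P$-state. Here I rely on the shallow assertion language of figure~\ref{fig:shallow-assn} being able to express such a predicate, which it can, since assertions are just state predicates. With this $Q'$, clause (1) becomes almost definitional in the normal-exit case, and both clauses reduce to re-assembling or dismantling sequencing derivations, using the full big-step rules for $\cseq$ (Seq1, Seq2, and the corresponding error-propagation rules from appendix~\ref{sec:Abigs}).

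For clause (1), $\bigvalid \triple{P}{c_1}{Q', [\vec{R}]}$, I would fix $\sigma_1 \vDash P$. Safety of $c_1$ follows by contraposition: an error of $c_1$ propagates to an error of $c_1 \cseq c_2$ by the sequencing error rule, contradicting the safety half of the hypothesis $\bigvalid \triple{P}{c_1 \cseq c_2}{Q, [\vec{R}]}$. For the postconditions, suppose $(c_1, \sigma_1) \Downarrow (\ek, \sigma_2)$. If $\ek = \epsilon$ then $\sigma_2 \vDash Q'$ holds immediately, witnessed by $\sigma_1$. If $\ek \in \{\ekb, \ekc\}$, rule Seq2 lifts this to $(c_1 \cseq c_2, \sigma_1) \Downarrow (\ek, \sigma_2)$, so the hypothesis forces $\sigma_2 \vDash R_\ekb$ or $\sigma_2 \vDash R_\ekc$ respectively.

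For clause (2), $\bigvalid \triple{Q'}{c_2}{Q, [\vec{R}]}$, I would fix $\sigma_3 \vDash Q'$ and unpack the witness $\sigma_1 \vDash P$ with $(c_1, \sigma_1) \Downarrow (\epsilon, \sigma_3)$. If $c_2$ could err from $\sigma_3$, then composing this normal run of $c_1$ with the error of $c_2$ via the sequencing error rule yields $(c_1 \cseq c_2, \sigma_1) \Uparrow$, again contradicting the hypothesis; hence $c_2$ is safe. For any $(c_2, \sigma_3) \Downarrow (\ek, \sigma_2)$, rule Seq1 combines it with $(c_1, \sigma_1) \Downarrow (\epsilon, \sigma_3)$ to give $(c_1 \cseq c_2, \sigma_1) \Downarrow (\ek, \sigma_2)$, and the three exit-kind cases of the hypothesis deliver exactly $\sigma_2 \vDash Q$, $\sigma_2 \vDash R_\ekb$, or $\sigma_2 \vDash R_\ekc$ as required.

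There is no deep obstacle here; the content lies entirely in guessing $Q'$ correctly. The one point demanding care is the treatment of the control-flow exit kinds: a $\cbreak$ or $\ccontinue$ occurring inside $c_1$ never reaches $c_2$, so it must be routed through Seq2 in clause (1) rather than being absorbed into $Q'$, and $Q'$ is therefore defined to record only normal ($\epsilon$) exits of $c_1$. The proof also tacitly uses the standard error-propagation rules for sequencing (that $c_1$ erring, or $c_1$ exiting normally into an erring $c_2$, both make $c_1 \cseq c_2$ err), which are part of the omitted full semantics in appendix~\ref{sec:Abigs}.
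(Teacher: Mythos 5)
Your proof is correct, but it picks the opposite witness from the paper. The paper instantiates $Q'$ as the \emph{weakest precondition of $c_2$} with respect to $Q$ (the set of states from which $c_2$ runs safely and lands in the postconditions), whereas you instantiate it as the \emph{strongest postcondition of $c_1$} along normal exit from $P$-states. The two choices are dual: with the paper's $Q'$, clause (2) is immediate by definition and the work sits in clause (1), where one must show that every normal-exit state of $c_1$ satisfies $c_2$'s weakest precondition (by extending any error or run of $c_2$ to one of $c_1 \cseq c_2$); with your $Q'$, the normal-exit case of clause (1) is immediate and the work sits in clause (2), where you unpack the existential witness and re-compose via \textsc{Seq1} and the error-propagation rule. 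Both arguments have the same total content, and your handling of the control-flow exit kinds (routing $\cbreak$/$\ccontinue$ through \textsc{Seq2} rather than into $Q'$) is exactly right. One incidental advantage of your choice: the paper's written definition of $Q'$ only constrains the normal exits of $c_2$, so as literally stated it does not quite justify the $[\vec{R}]$ components of clause (2) without silently strengthening the definition, whereas your re-composition argument covers all exit kinds uniformly. It is also worth noting that the paper itself uses your strongest-postcondition witness when it proves the analogous \textsc{seq-inv} result in the continuation-based embedding, so both routes appear in the paper, just attached to different embeddings.
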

\begin{proof}
    We use the weakest pre-condition of $c_2$ as $Q'$: $\sigma \vDash Q'$ iff. it never cause error, i.e., not $(c_2, \sigma) \Uparrow$, and $\forall \sigma_1. (c_2, \sigma) \Downarrow (\epsilon, \sigma_1)$ implies $\sigma_1 \vDash Q$.

    (1) The safety of $c_1$ is guaranteed by the safety of $c_1 \cseq c_2$ and we only need to show the state after $c_1$ terminates satisfies post-conditions.
    If $c_1$ exits normally, we combine the premise in the weakest pre. and have $(c_1 \cseq c_2, \sigma) \Downarrow (\epsilon, \sigma_1)$.
    By premise, we have $\sigma_1 \vDash Q$ and the intermediate state satisfies $Q'$
    the conclusion holds by the definition of strongest post.
    If $c_1$ exits by control flow, say $(c_1, \sigma) \Downarrow (\ek, \sigma')$, we have $(c_1 \cseq c_2, \sigma) \Downarrow (\ek, \sigma')$ and by premise, we prove the conclusion.

    (2) The conclusion is obvious by the definition of weakest precondition $Q'$.
\end{proof}

\subsection{Weakest precondition based Shallow Embedding}
\label{sec:proof-small}

\paragraph{\textbf{Proving Transformation Rules: Refinement.}}
The principle for proving transformation rules in section~\ref{sec:proof-big} still applies to small-step based embeddings (both weakest precondition based embedding and continuation based embedding).
But we need to find an appropriate definition of ``semantic similarity''.
A first attempt is to use the refinement again, but under the small-step semantics.
\begin{definition}
    We say program $c_1$ refines $c_2$ (denoted $c_1 \sqsubseteq_s c_2$) in small-step semantics iff. for any initial states $\sigma_1$ and terminal configuration $(c, \kappa, \sigma)$, the following is true.
    $$
    ((c_1, \epsilon, \sigma_1) \rightarrow (c, \kappa, \sigma)) \Rightarrow ((c_2, \epsilon, \sigma_1) \rightarrow (c, \kappa, \sigma))
    $$
\end{definition}
And we also have lemma~\ref{lem:small-triple-trans} to lift the semantic refinement to the implication between weakest preconditions, which immediately implies the implication between Hoare triples.
\begin{lem}
    \label{lem:small-triple-trans}
    For any $c_1, c_2$, if $c_2 \sqsubseteq_s c_1$, then for any $\sigma, P, Q, \vec{R}$, $\sigma \vDash \WP (c_1, \epsilon) \progspec{Q, [\vec{R}]}$ implies $\sigma \vDash \WP (c_2, \epsilon) \progspec{Q, [\vec{R}]}$.
\end{lem}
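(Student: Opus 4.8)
The plan is to prove Lemma~\ref{lem:small-triple-trans} by exploiting the co-inductive (greatest fixed point) nature of $\WP$ together with the determinism of $\rightarrow_c$, transferring \emph{safety} and \emph{partial correctness} from $c_1$ to $c_2$ separately. First I would restate the weakest precondition semantically: unfolding the ``largest set'' definition shows that $\sigma \vDash \WP(c,\epsilon)\progspec{Q,[\vec{R}]}$ holds exactly when the reduction graph rooted at $(c,\epsilon,\sigma)$ satisfies two properties --- (safety) every configuration reachable by $\rightarrow_c^*$ is either terminal or further reducible, so no reachable configuration is stuck; and (correctness) every reachable \emph{terminal} configuration $(c',\epsilon,\sigma')$ satisfies the matching clause, i.e. $\sigma'\vDash Q$ if $c'=\cskip$, $\sigma'\vDash R_\ekb$ if $c'=\cbreak$, and $\sigma'\vDash R_\ekc$ if $c'=\ccontinue$. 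I would also record that $\rightarrow_c$ is deterministic, so every configuration has a unique maximal reduction sequence.

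Granting this characterization, the correctness half transfers immediately. Assume $\sigma \vDash \WP(c_1,\epsilon)\progspec{Q,[\vec{R}]}$ and let $(c',\epsilon,\sigma')$ be any terminal configuration with $(c_2,\epsilon,\sigma)\rightarrow_c^*(c',\epsilon,\sigma')$. The hypothesis $c_2 \sqsubseteq_s c_1$ yields $(c_1,\epsilon,\sigma)\rightarrow_c^*(c',\epsilon,\sigma')$, and the correctness half of $\WP(c_1,\epsilon)$ then forces the matching postcondition clause at $(c',\sigma')$. Hence every terminal configuration reachable from $(c_2,\epsilon,\sigma)$ satisfies its postcondition --- precisely the correctness half needed for $\sigma \vDash \WP(c_2,\epsilon)\progspec{Q,[\vec{R}]}$. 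Folding the two halves back into the co-inductive definition then delivers the conclusion.

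The main obstacle is the remaining \emph{safety} half: I must show that no configuration reachable from $(c_2,\epsilon,\sigma)$ is stuck. The refinement $\sqsubseteq_s$ as defined constrains only \emph{terminal} configurations, so on its own it says nothing about a stuck, non-terminal configuration that $c_2$ might reach but $c_1$ avoids; a direct argument therefore cannot rule this out. This mirrors the big-step development, where the analogous lemma relied on the refinement carrying an explicit error-preservation clause ($(c_1,s)\Uparrow \Rightarrow (c_2,s)\Uparrow$). The cleanest remedy --- and the reason Table~\ref{fig:summary-rule-embed} classifies these proofs as ``by simulation'' --- is to replace the terminal-only refinement with a genuine simulation relation that pairs each configuration reachable from $(c_2,\epsilon,\sigma)$ with a configuration reachable from $(c_1,\epsilon,\sigma)$, and to run a single co-induction that maintains both reducibility (hence safety) and the eventual terminal match step-for-step.

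For the concrete instances this lemma is applied to, e.g. the two \textsc{if-seq} programs $(\cif{e}{c_1}{c_2})\cseq c_3$ and $\cif{e}{c_1\cseq c_3}{c_2\cseq c_3}$, this strengthening is cheap: after a bounded number of steps both programs reduce to a common configuration (here $(c_1,\kseq{c_3}\cdot\epsilon,\sigma)$ or $(c_2,\kseq{c_3}\cdot\epsilon,\sigma)$ according to $\doublebrackets{e}$) and thereafter reduce identically, so safety and the set of reachable terminals are inherited verbatim and the simulation is essentially the identity beyond the common point. Establishing the simulation in general, however, is exactly where the small-step argument becomes genuinely more laborious than its big-step counterpart, which is what places these rules at ``medium'' difficulty.
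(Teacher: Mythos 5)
Your overall strategy---characterize $\WP$ by the reachable configurations and transfer safety and correctness separately, which is just an unrolled form of the co-induction over $\sigma \vDash \WP (c_2, \epsilon) \progspec{Q, [\vec{R}]}$ that the paper actually performs---is the right one, and your correctness half is fine. The problem is your claim that the safety half cannot be obtained from $\sqsubseteq_s$ as defined and that the hypothesis must be strengthened to a simulation. That claim rests on reading ``terminal configuration'' as meaning only the good terminals $(\cskip,\epsilon,\sigma)$, $(\cbreak,\epsilon,\sigma)$, $(\ccontinue,\epsilon,\sigma)$. In this development a terminal configuration is any \emph{irreducible} one, and that explicitly includes stuck error configurations such as $(\ccontinue, \kloops{c_1}{c_2}{2}\cdot\kappa, \sigma)$ (the paper says exactly this in the footnote accompanying the refinement proofs in the weakest-precondition section). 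Under that reading the refinement already plays the role of the big-step error-preservation clause you felt was missing: if $(c_2,\epsilon,\sigma)$ could reach a stuck non-good configuration, then by $c_2 \sqsubseteq_s c_1$ the configuration $(c_1,\epsilon,\sigma)$ reaches the very same stuck configuration, contradicting $\sigma \vDash \WP(c_1,\epsilon)\progspec{Q,[\vec{R}]}$. With that observation both halves close, and the co-induction (take as candidate set all configurations reachable from $(c_2,\epsilon,\sigma)$; reducible elements stay in the set, irreducible ones are handed to $c_1$ by the refinement) goes through without any simulation.

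So the lemma is provable exactly as stated; your proposal instead proves a different lemma with a stronger premise. You are right that the paper later prefers simulation relations, but that is for a different reason: not because this lifting lemma fails, but because \emph{establishing} the refinements $(\cif{e}{c_1}{c_2})\cseq c_3 \sqsubseteq_s \cif{e}{c_1\cseq c_3}{c_2\cseq c_3}$ and $\cfor{c_1}{c_2} \sqsubseteq_s \cfor{c_1\cseq c_2}{\cskip}$ requires decomposing small-step reduction sequences into segments, which is tedious (especially for loops). That is the ``medium by simulation'' difficulty in the summary table, and it concerns the premise of this lemma, not its proof.
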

The proof for lemma~\ref{lem:small-triple-trans} is a simple and straightforward co-induction over $\sigma \vDash \WP (c_2, \epsilon) \progspec{Q, [\vec{R}]}$.

To use the lemma, we need to again prove the following two refinements.
$$
\begin{aligned}
(\cif{e}{c_1}{c_2}) \cseq c_3 &\sqsubseteq_s \cif{e}{c_1 \cseq c_3}{c_2 \cseq c_3}\\
\cfor{c_1}{c_2} &\sqsubseteq_s \cfor{c_1 \cseq c_2}{\cskip}
\end{aligned}
$$
A possible proof idea is similar to the one in section~\ref{sec:proof-big}.
We need to destruct the reduction of the left-hand-side program into several reduction segments along all execution paths.
However, the notion of ``reduction segments'' is not for free in the small-step semantics as it is in the big-step semantics.
For example, a direct inversion of the following reduction does not yield the reduction segments of $c_1$, $c_2$, and $c_3$.
$$
((\cif{e}{c_1}{c_2}) \cseq c_3, \epsilon, \sigma_1) \rightarrow^* (c, \kappa, \sigma)
$$
We need to manually prove that the sequential composition can be split into two reductions and the if-statement can be split into two reductions of two branches, i.e., there exists an ending configuration\footnote{
    While the ending configuration of the big-step semantics can only be $(\ek, s)$ for any $s$, in the small-step semantics, it can be $(\cskip, \epsilon, s)$, $(\cbreak, \epsilon, s)$, $(\ccontinue, \epsilon, s)$, or any $(c, \epsilon, s)$ that cannot be reduced any more due to some error, e.g., $(\ccontinue, \kloops{c_1}{c_2}{2}, s)$ since it has no reduction defined.
} $(c', \kappa', \sigma')$, such that
$$
\begin{aligned}
&\bigl(((e, \sigma_1) \rightarrow^* (\btrue, \sigma_1) \land (c_1, \empty, \sigma_1) \rightarrow^* (c', \kappa', \sigma')) \\
&\lor ((e, \sigma_1) \rightarrow^* (\bfalse, \sigma_1) \land (c_2, \empty, \sigma_1) \rightarrow^* (c', \kappa', \sigma'))\bigr) \\
\land& (\kappa' = \epsilon \Rightarrow (c', \kappa' \cdot \kseq{c_3}, \sigma') \rightarrow^* (c, \kappa, \sigma))
\end{aligned}.
$$
The proof needs to be split into two separate lemmas for the sequential composition and the if-statement.
Both are proved by induction over the reduction length.
We can then reorganize these reduction segments to form a reduction of $\cif{e}{c_1 \cseq c_3}{c_2 \cseq c_3}$.

While it is still a reasonable proof for the \textsc{if-seq}, the proof for the loop is chaotic.
In section~\ref{sec:proof-big}, we can directly induct over the number of loop iteration, since it is how the big-step semantic for loops is defined, we cannot do the same thing for loops in small-step semantics.
A direct induction can only be performed to induct over the small-step reduction length!
We need to have another inductive definition for reduction segments of a loop, which we present informally in definition~\ref{def:loop-seg}.
\begin{definition}
    \label{def:loop-seg}
    For any ending configuration $(c, \kappa, s)$, the reduction from $(\cfor{c_1}{c_2}, \epsilon, s_1)$ to $(c, \kappa, s)$ can be decomposed into an $n-$iteration segment iff.
    \begin{itemize}
        \item $n = 0$, then $(c, \kappa, s)$ is reached by some $\cbreak$ from $(c_1 \cseq c_2, \epsilon, s_1)$, or it is some stuck configuration reachable from $(c_1 \cseq c_2, \epsilon, s_1)$, or $c_2$ eventually reduces to $\ccontinue$ and $(c, \kappa, s) = (\ccontinue, \kloops{c_1}{c_2}{2}, s)$ gets stuck.
        \item $n = n' + 1$, then $(c_1 \cseq c_2, \epsilon, s_1) \rightarrow^* (\cskip, \epsilon, s')$, and $(\cfor{c_1}{c_2}, \epsilon, s') \rightarrow^* (c, \kappa, s)$ can be decomposed into an $n'-$iteration segment.
    \end{itemize}
\end{definition}
Lots of details of handling control flows have been neglected in this definition.
To prove such decomposition is possible, we use induction over the reduction length, and it involves many auxiliary definition, which we prefer not to demonstrate here.
Instead, we present another more organized proof technique: simulation.

\paragraph{\textbf{Proving Structural Rules: Simulation.}}
The idea is to directly induct over the length of reductions for the left-hand-side program in a refinement, and for each head reduction, we find zero or finitely many head reductions of the right-hand-side program that corresponds to it.
This is exactly the idea of simulation, and we use the simulation relation (definition~\ref{def:sim}) to record this kind of correspondence, which is used for both the weakest precondition based embedding and the continuation based embedding.
\begin{definition}[Simulation]
    \label{def:sim}
    A relation $\sim$ between two programs is a simulation relation iff. for any $c_1, \kappa_1, c_2, \kappa_2$,  if the \textit{source} program $(c_1, \kappa_1)$ can be simulated by the \textit{target} program $(c_2, \kappa_2)$, i.e. $(c_1, \kappa_1) \sim (c_2, \kappa_2)$, then
    \begin{itemize}
        \item \textbf{Termination:} if $(c_1, \kappa_1)$ has safely terminated, i.e., $\kappa_1$ is empty $\epsilon$ and $c_1$ is $\cskip$, $\cmdbreak\!$, or $\cmdcontinue$, then $(c_2, \kappa_2)$ can reduce to $(c_1, \kappa_1)$ in zero or finite steps without modifying the program state, (or it can stuck in an infinite loop without modifying the state)\footnote{
            We add the disjunction in parentheses to aid the proof for rules only under the continuation based embedding, and we remove it from the definition in this section.
        };
        \item \textbf{Preservation:} if for some program state $\sigma$, $(c_1, \kappa_1, \sigma)$ reduces to $(c_1', \kappa_1', \sigma')$, then there exists $(c_2', \kappa_2')$ simulates $(c_1', \kappa_1')$ by $(c_1', \kappa_1') \sim (c_2', \kappa_2')$, and $(c_2, \kappa_2)$ can reduce to $(c_2', \kappa_2')$ in zero or finite steps with same state modifications, i.e., $(c_2, \kappa_2, \sigma) \rightarrow_c^* (c_2', \kappa_2', \sigma')$;
        \footnote{
            Notice that \textit{stuttering} might occur when $(c_2, \kappa_2)$ always takes zero step, and $(c_1, \kappa_1)$ may not terminate in this case.
            This is a common problem people want to avoid when proving simulations in compiler verifications.
            However, we are considering partial correctness in this paper and only want certain properties will hold when the program terminates.
            Stuttering is not a problem here.
        }
        \item \textbf{Error:} if for some state $\sigma$, $(c_1, \kappa_1, \sigma)$ does not belong to either of the above cases, i.e., it is stuck by an error, then $(c_2, \kappa_2, \sigma)$ will also reduce into some error in zero or finite steps.
    \end{itemize}
\end{definition}
Intuitively, in the \textbf{termination} case, when the \textit{source} terminates, the \textit{target} that simulates the source will also terminate in a few steps and these steps have no virtual effect on the program state.
In the \textbf{preservation} case, any reduction step of the \textit{source} can be simulated by zero or finite steps of the \textit{target}, which mimic source's modifications to the program state, and program pairs they reduce to preserve the simulation relation.
In the \textbf{error} case, when the \textit{source} causes an error, the \textit{target} will also raise an error along certain execution trace.

One may expect that the simulation directly implies the refinement, but the simulation definition here is more general than the refinement because it does not require two programs reduce to the same error.
However, we can directly use the simulation as the notion of ``semantic similarity'' for embeddings using small-step semantics.
And we can use a simulation lemma (lemma~\ref{th:wp-sim} and lemma~\ref{th:guard-sim}) to lift simulations to the logic level.
The simulation lemma here (lemma~\ref{th:wp-sim}) lifts the simulation between two programs to the implication between their weakest preconditions.
\begin{lem}[Weakest Pre. Simulation]
    \label{th:wp-sim}
    If there is a simulation relation $\sim$ and $(c_1, \kappa_1)$ simulates $(c_2, \kappa_2)$, i.e., $(c_2, \kappa_2) \sim (c_1, \kappa_1)$, then
    for all $\sigma, c_1, c_2, \kappa_1, \kappa_2, Q$, and $\vec{R}$, $\sigma \vDash \textsf{WP}\, (c_1, \kappa_1) \progspec{Q, [\vec{R}]}$ implies $\sigma \vDash \textsf{WP}\, (c_2, \kappa_2) \progspec{Q, [\vec{R}]}$.
\end{lem}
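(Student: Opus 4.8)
The plan is to prove the lemma by coinduction on the definition of $\textsf{WP}$ for the \emph{source} program $(c_2, \kappa_2)$, using the simulation relation itself as the coinductive invariant. Recall from section~\ref{sec:embed-small} that $\textsf{WP}$ is the greatest fixed point of the functional whose one-step unfolding is given by the terminal and preservation cases; hence it suffices to exhibit a post-fixed point contained in $\textsf{WP}$. Concretely, I would define the candidate set
$$
X \triangleq \{(\sigma, c, \kappa) \mid \exists c', \kappa'.\ (c, \kappa) \sim (c', \kappa') \ \land\ \sigma \vDash \textsf{WP}\,(c', \kappa')\progspec{Q, [\vec{R}]}\}
$$
and show that every element of $X$ satisfies the unfolding of $\textsf{WP}$ with all successor configurations again lying in $X$. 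By the Tarski greatest-fixed-point characterization this yields $X \subseteq \textsf{WP}$, and instantiating the source with $(c_2,\kappa_2)$ and the witness target with $(c_1,\kappa_1)$, using the hypotheses $(c_2, \kappa_2) \sim (c_1, \kappa_1)$ and $\sigma \vDash \textsf{WP}\,(c_1, \kappa_1)\progspec{Q, [\vec{R}]}$, gives the conclusion.

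Before the case analysis I would record one routine auxiliary fact: $\textsf{WP}$ is preserved along reductions of the \emph{target}, i.e. if $\sigma \vDash \textsf{WP}\,(c', \kappa')\progspec{Q, [\vec{R}]}$ and $(c', \kappa', \sigma) \rightarrow_c^* (c'', \kappa'', \sigma'')$, then $\sigma'' \vDash \textsf{WP}\,(c'', \kappa'')\progspec{Q, [\vec{R}]}$; this follows by iterating the preservation case, using that every configuration appearing strictly before the last one is reducible. To check that $X$ is a post-fixed point, fix $(\sigma, c, \kappa) \in X$ with witness target $(c', \kappa')$ and split on the status of the \emph{source} configuration $(c, \kappa, \sigma)$, which is exhaustively terminal, reducible, or error-stuck. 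If it is terminal, the \textbf{Termination} clause reduces the target to $(c, \kappa, \sigma)$ without changing the state; by the preservation fact $\sigma \vDash \textsf{WP}\,(c, \kappa)\progspec{Q,[\vec{R}]}$, and since this configuration is terminal it already satisfies the appropriate postcondition, so the terminal case of the unfolding holds. If $(c, \kappa, \sigma)$ is reducible, then for any step $(c, \kappa, \sigma) \rightarrow_c (c_s, \kappa_s, \sigma_s)$ the \textbf{Preservation} clause supplies $(c_t, \kappa_t)$ with $(c_s, \kappa_s) \sim (c_t, \kappa_t)$ and $(c', \kappa', \sigma) \rightarrow_c^* (c_t, \kappa_t, \sigma_s)$; the preservation fact gives $\sigma_s \vDash \textsf{WP}\,(c_t, \kappa_t)\progspec{Q,[\vec{R}]}$, so $(\sigma_s, c_s, \kappa_s) \in X$ and the preservation case of the unfolding holds. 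Finally, the error-stuck case is vacuous: the \textbf{Error} clause would force the target to reach an error configuration, but an error configuration is neither terminal nor reducible, contradicting that $\textsf{WP}$ holds along the target's reduction (via the preservation fact).

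I expect the main obstacle to be the subtle treatment of \emph{stuttering} in the preservation step together with closing the coinduction rigorously. When the target takes zero steps for a step of the source, the witness target is unchanged and the invariant is re-established trivially — this is exactly why coinduction (the greatest fixed point), rather than induction on a step-count, is the right tool: no decreasing measure is required, and a non-terminating target cannot invalidate the partial-correctness conclusion. The one genuinely delicate point is reconciling the \emph{multi-step} target reductions delivered by the simulation clauses with the \emph{single-step} unfolding of $\textsf{WP}$, which is precisely what the target-preservation auxiliary fact absorbs; once that fact is in place the three cases are mechanical. I would note that the identical skeleton, with $\textsf{WP}$ replaced by $\text{safe}$, will underlie the continuation-based simulation lemma (lemma~\ref{th:guard-sim}) in section~\ref{sec:proof-cont}.
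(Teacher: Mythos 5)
Your proof is correct and follows essentially the same route as the paper's: the same three-way case split on the source configuration (terminal, reducible, error-stuck), the same use of the three simulation clauses in each case, and the same absorption of the target's multi-step reductions by iterating the preservation case of $\textsf{WP}$. The only difference is presentational: you close the argument by exhibiting a post-fixed point in the Tarski formulation, whereas the paper phrases it as induction over the reduction of $(c_2,\kappa_2,\sigma)$ (i.e., over the step-index in the Bourbaki--Witt formulation of $\textsf{WP}$), and the paper itself remarks that these are interchangeable descriptions of the same coinductive object.
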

\begin{proof}
    We prove the lemma by induction over the reduction of $(c_2, \kappa_2, \sigma)$.
    \begin{itemize}
        \item If $(c_2, \kappa_2)$ has terminated, the simulation implies $(c_1, \kappa_1)$ will terminate with the same exit kind and we prove the conclusion by reflexivity.
        \item If $(c_2, \kappa_2)$ has error, the simulation implies $(c_1, \kappa_1)$ will also have an error, which contradicts with the safety guarantee (the program never cause error) in its weakest precondition.
        \item If $(c_2, \kappa_2)$ further reduces, the simulation relation guide us how to reduce $(c_1, \kappa_1)$ while preserving the simulation. We then unfold the weakest pre.'s definition and forward the execution accordingly and prove the lemma by the induction hypothesis.
    \end{itemize}
\end{proof}



To prove \textsc{if-seq} and \textsc{loop-nocontinue}, we only need to construct two simulations $\sim_1$ and $\sim_2$ with following properties respectively.
$$
\begin{array}{c}
    ((\cif{e}{c_1}{c_2}) \cseq c_3, \epsilon) \sim_1 (\cif{e}{c_1 \cseq c_3}{c_2 \cseq c_3}, \epsilon)\\
    (\cfor{c_1}{c_2}, \epsilon) \sim_2 (\cfor{c_1 \cseq c_2}{\cskip}, \epsilon)
\end{array}
$$
For example, we can define the simulation for \textsc{loop-nocontinue} as the smallest relation with following properties for any $c_1, c_2$ containing no $\ccontinue$.
{
\footnotesize
\begin{gather*}
(\forall c,\kappa. (c, \kappa) \sim_2 (c, \kappa)) \land \\
\left(
\forall \kappa.
\left(\begin{aligned}
    &(\cfor{c_1}{c_2}, \kappa) \sim_2 (\cfor{c_1\cseq c_2}{\cskip}, \kappa) \\
    \land&
    \bigl((c_1 \cseq \ccontinue, \kloops{c_1}{c_2}{1} \cdot \kappa) \sim_2 
    (c_1, c_2 \cdot \ccontinue \cdot \kloops{c_1 \cseq c_2}{\cskip}{1} \cdot \kappa)\bigr) \\
    \land&
    \left(
        \begin{aligned}
            &\forall c_0, \kappa_0. (c_0, \kappa_0) \text{ has no continue} \Rightarrow\\
            &(c_0, \kappa_0 \cdot \ccontinue \cdot \kloops{c_1}{c_2}{1} \cdot \kappa) \sim_2 \\
            &\quad(c_0, \kappa_0 \cdot c_2 \cdot \ccontinue \cdot \kloops{c_1 \cseq c_2}{\cskip}{1} \cdot \kappa)
        \end{aligned}
    \right) \\
    \land&
    \left(
        (\ccontinue, \kloops{c_1}{c_2}{1} \cdot \kappa) \sim_2
        (c_2, \ccontinue \cdot \kloops{c_1 \cseq c_2}{\cskip}{1} \cdot \kappa)
    \right) \\
    \land&
    \left(
        \begin{aligned}
            &\forall c_0, \kappa_0. (c_0, \kappa_0) \text{ has no continue} \Rightarrow\\
            &(c_0, \kappa_0 \cdot \kloops{c_1}{c_2}{2} \cdot \kappa) \sim_2 
            (c_0, \kappa_0 \cdot \ccontinue \cdot \kloops{c_1 \cseq c_2}{\cskip}{1} \cdot \kappa)
        \end{aligned}
    \right) \\
    \land&
    \left(
        (\cskip, \kloops{c_1}{c_2}{2} \cdot \kappa) \sim_2
        (\cskip, \kloops{c_1 \cseq c_2}{\cskip}{2} \cdot \kappa)
    \right) \\
    \land&
    \left(
        (\cbreak, \kloops{c_1}{c_2}{2} \cdot \kappa) \sim_2
        (\cbreak, \kloops{c_1 \cseq c_2}{\cskip}{2} \cdot \kappa)
    \right)
\end{aligned}\right)
\right)
\end{gather*}
}
This seemingly daunting definition is simply a conjunction of all possible intermediate configurations of two programs that need to be related together.
It is also easy to prove this relation $\sim_2$ is a simulation.
We simply prove that in each conjunction branch, after one small-step reduction of the left-hand-side program, we can make zero or finite steps in the right-hand-side program and arrive at another conjunction branch in this relation.
For example, in the first branch,
$$
(\cfor{c_1}{c_2}, \kappa) \sim_2 (\cfor{c_1\cseq c_2}{\cskip}, \kappa)
$$
after one step of $(\cfor{c_1}{c_2}, \kappa)$, we take three steps of $(\cfor{c_1\cseq c_2}{\cskip}, \kappa)$ and arrive at the second branch
$$
(c_1 \cseq \ccontinue, \kloops{c_1}{c_2}{1} \cdot \kappa) \sim_2 
    (c_1, c_2 \cdot \ccontinue \cdot \kloops{c_1 \cseq c_2}{\cskip}{1} \cdot \kappa).
$$
This is just a simple proof that can be automated thourgh small-step symbolic execution, while the complicated induction proofs are all hidden in the simulation lemma.
And by directly application of lemma~\ref{th:wp-sim}, we prove \textsc{loop-nocontinue}.

In fact, the construction of these simulation relations $\sim_2$ can guide the proof using pure refinement.
All conjunction branches will corresponds to some intermediate proofs using definition~\ref{def:loop-seg}.
Strictly speaking, there is no obvious advantages using either approach, but we believe the simulation approach is more suitable for proving small-step properties by thinking in small-step semantics.
And it helps proofs of other extended rules as we will see soon.

\paragraph{\textbf{Proving Structural Rules.}}
We prove \textsc{nocontinue} by showing the correctness of lemma~\ref{th:wp-nocontinue}.


\begin{lem}
    \label{th:wp-nocontinue}
    For all $\sigma, c, \kappa, Q, R_\ekb, R_\ekc$, and $R_\ekc'$, if $\sigma \vDash \textsf{WP}\, (c, \kappa) \progspec{Q, [R_\ekb, R_\ekc]}$ and $c$ and $\kappa$ contains no $\ccontinue$, then $\sigma \vDash \textsf{WP}\, (c, \kappa) \progspec{Q, [R_\ekb, R_\ekc']}$.
\end{lem}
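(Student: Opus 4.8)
The plan is to prove the lemma by co-induction on the greatest-fixed-point definition of $\textsf{WP}$ (equivalently, by induction on the step index $n$ of the Bourbaki--Witt presentation used in our Coq development). Observe first that the simulation lemma (lemma~\ref{th:wp-sim}) does \emph{not} apply directly: it transports a weakest precondition along a simulation while keeping the post-conditions $Q, [\vec{R}]$ \emph{fixed}, whereas here the program $(c,\kappa)$ is unchanged but the continue post-condition is replaced from $R_\ekc$ to $R_\ekc'$. What we keep from the simulation methodology is its co-inductive shape: we exhibit a set of configurations that is closed under $\rightarrow_c$ and contained in the target $\textsf{WP}$. Concretely, I would consider
$$X = \{(\sigma, c, \kappa) \mid \sigma \vDash \textsf{WP}\,(c,\kappa)\progspec{Q, [R_\ekb, R_\ekc]} \text{ and } \mathrm{NC}(c,\kappa)\},$$
where $\mathrm{NC}$ is a ``no reachable continue exit'' invariant defined below, and show $X$ is a post-fixed point of the functional generating $\textsf{WP}\,(\cdot)\progspec{Q, [R_\ekb, R_\ekc']}$; by the Tarski fixed point theorem $X$ then lies below the corresponding greatest fixed point, which yields the conclusion once the hypotheses are seen to place $(\sigma,c,\kappa)$ in $X$.

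The crux, and the main obstacle, is choosing $\mathrm{NC}$ correctly. The naive choice ``$c$ and $\kappa$ contain no $\ccontinue$ token'' is \emph{not} preserved by $\rightarrow_c$: rule~\eqref{eq:for-intro} unfolds $\cfor{c_1}{c_2}$ into the focused term $c_1 \cseq \ccontinue$, manufacturing a $\ccontinue$ out of a continue-free loop. The saving observation is that this manufactured continue is always guarded by the freshly pushed $\kloops{c_1}{c_2}{1}$ frame, which absorbs it via rule~\eqref{eq:continue-for} before it can ever reach the empty continuation. I would therefore define $\mathrm{NC}(c,\kappa)$ to capture \emph{loop-guardedness}: every $\ccontinue$ occurring at the top level of the focus or of a stored $\kseq$/loop body is separated from $\epsilon$ by some $\kloops{\cdot}{\cdot}{1}$ frame. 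Technically this is phrased through a ``guard status'' of the continuation, namely its innermost non-$\kseq$ frame: when that frame is a $\kloops{\cdot}{\cdot}{1}$ the focus may carry top-level continues, whereas when it is $\kloops{\cdot}{\cdot}{2}$ or $\epsilon$ the focus must be continue-free, and the bodies buried inside $\kseq$ and loop frames are constrained analogously.

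With $\mathrm{NC}$ in hand the remaining steps are routine. First I would verify \textbf{preservation}: if $\mathrm{NC}(c,\kappa)$ and $(c,\kappa,\sigma)\rightarrow_c(c',\kappa',\sigma')$ then $\mathrm{NC}(c',\kappa')$, by case analysis on the reduction rule. The only delicate cases are the loop rules: \eqref{eq:for-intro} and \eqref{eq:skip-for} introduce a top-level $\ccontinue$ but simultaneously push a guarding $\kloops{\cdot}{\cdot}{1}$ frame; \eqref{eq:continue-for} consumes a guarded continue and hands control to the increment step under $\kloops{\cdot}{\cdot}{2}$, where $\mathrm{NC}$ already forced the increment body to be continue-free; and \eqref{eq:continue-seq} merely peels a $\kseq$ frame, which cannot turn a guarded continue into an unguarded one. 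Second, \textbf{terminal exclusion}: when $\kappa=\epsilon$ the guard status is ``unguarded'', so $\mathrm{NC}(c,\epsilon)$ forces $c \neq \ccontinue$; hence in the terminal clause only the $\cskip$/$Q$ and $\cbreak$/$R_\ekb$ sub-cases can fire, and these are supplied verbatim by $\sigma\vDash\textsf{WP}\,(c,\epsilon)\progspec{Q,[R_\ekb,R_\ekc]}$ since $Q$ and $R_\ekb$ are untouched. Third, the hypothesis that $c$ and $\kappa$ contain no $\ccontinue$ at all trivially entails $\mathrm{NC}(c,\kappa)$, placing $(\sigma,c,\kappa)\in X$. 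Assembling these: for a reducible member of $X$, the preservation clause of the \emph{source} $\textsf{WP}$ gives $\sigma'\vDash\textsf{WP}\,(c',\kappa')\progspec{Q,[R_\ekb,R_\ekc]}$ and preservation of $\mathrm{NC}$ gives $\mathrm{NC}(c',\kappa')$, so every successor lies again in $X$; the terminal clause never invokes $R_\ekc$, which is precisely why substituting $R_\ekc'$ is sound. Thus $X$ is a post-fixed point and the co-induction closes. I expect essentially all the effort to lie in pinning down $\mathrm{NC}$ so that both preservation at the loop rules and the initial-hypothesis step go through cleanly.
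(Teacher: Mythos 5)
Your proposal follows the same overall route as the paper's proof: isolate a preservation lemma stating that the no-continue invariant is closed under $\rightarrow_c$ (the paper's lemma~\ref{th:nocont-preserve}), then induct on the step index of the weakest precondition, using the fact that the terminal clause never consults $R_\ekc$ so that $Q$ and $R_\ekb$ transport unchanged. The one genuine difference is the invariant itself, and it is a substantive one: the paper states its preservation lemma for the literal property ``$c$ and $\kappa$ contain no $\ccontinue$'' and dismisses it as obvious by case analysis, but as you correctly observe, that literal property is \emph{not} closed under reduction --- rule~\eqref{eq:for-intro} rewrites a continue-free $\cfor{c_1}{c_2}$ into the focused term $c_1 \cseq \ccontinue$ under a freshly pushed $\kloops{c_1}{c_2}{1}$ frame. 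Your loop-guardedness predicate $\mathrm{NC}$, which permits a top-level $\ccontinue$ exactly when the innermost non-sequential frame of the continuation is a $\kloops{\cdot}{\cdot}{1}$, is the strengthening required for the preservation step to actually go through (and is presumably what the paper's Coq development quietly uses in place of the naive syntactic reading). The remaining ingredients --- exclusion of the $c=\ccontinue$ sub-case of the terminal clause when $\kappa=\epsilon$, and the co-inductive/step-indexed closure of the set $X$ --- match the paper's argument step for step. In short, your version buys an explicit and correct statement of the only delicate point, which the paper's prose elides; otherwise the two proofs coincide.
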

\begin{proof}
    We first prove in lemma~\ref{th:nocont-preserve} that command reduction preserves the no-continue property, which is obvious by discussing all constructors of the small-step semantics.
    \begin{lem}
        \label{th:nocont-preserve}
        For all $c, c', \kappa, \kappa', \sigma$, and $\sigma'$, if $c$ and $\kappa$ contains no $\ccontinue$ and $(c, \kappa, \sigma) \rightarrow_c (c', \kappa', \sigma')$, then $c$ and $\kappa$ contains no $\ccontinue$.
    \end{lem}

    We then induct over the number of steps $(c, \kappa)$ takes to terminate.
    \begin{itemize}
        \item For the base case where $(c, \kappa)$ is a terminal, the conclusion is obvious.
        \item For the case where $(c, \kappa, \sigma)$ can step into $(c', \kappa', \sigma')$, we forward the execution of $(c, \kappa, \sigma)$ in both weakest preconditions and only need to prove
        $$
        \sigma' \vDash \textsf{WP}\, (c', \kappa') \progspec{Q, [R_\ekb, R_\ekc]} \text{ implies } \sigma' \vDash \textsf{WP}\, (c', \kappa') \progspec{Q, [R_\ekb, R_\ekc']}
        $$
        which follows from the induction hypothesis and lemma \ref{th:nocont-preserve}. 
    \end{itemize}
\end{proof}

\paragraph{\textbf{Proving Inversion Rules}}
We prove \textsc{seq-inv}, a representative inversion rule, by lemma~\ref{th:small-seq-inv}, where the simulation lemma helps again.

\begin{lem}
    \label{th:small-seq-inv}
    For all $c_1, c_2, Q$, and $\vec{R}$, there exists $Q'$ such that for all $\sigma$,
    (1) if $\sigma \vDash \textsf{WP}\, (c_1 \cseq c_2, \epsilon) \progspec{Q, [\vec{R}]}$ then $\sigma \vDash \textsf{WP}\, (c_1, \epsilon) \progspec{Q', [\vec{R}]}$,
    and (2) if $\sigma \vDash Q'$ then $\sigma \vDash \textsf{WP}\, (c_2, \epsilon) \progspec{Q, [\vec{R}]}$.
\end{lem}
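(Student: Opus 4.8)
The plan is to take $Q' \triangleq \textsf{WP}\,(c_2, \epsilon)\progspec{Q, [\vec{R}]}$, viewing the weakest precondition of $c_2$ itself as a (shallowly embedded) assertion, i.e. the state predicate $\lambda \sigma.\ \sigma \vDash \textsf{WP}\,(c_2,\epsilon)\progspec{Q,[\vec{R}]}$. With this choice, part~(2) is immediate, since $\sigma \vDash Q'$ unfolds by definition to $\sigma \vDash \textsf{WP}\,(c_2, \epsilon)\progspec{Q,[\vec{R}]}$. So essentially all the work is in part~(1). Note that because assertions here are plain state predicates, no impredicativity is required, in contrast with the deep embedding of Section~\ref{sec:proof-deep}.

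For part~(1) I would first peel off one reduction step. Since $(c_1 \cseq c_2, \epsilon, \sigma)$ is reducible and its only successor is $(c_1, \kseq{c_2}, \sigma)$, unfolding the preservation clause of $\sigma \vDash \textsf{WP}\,(c_1 \cseq c_2,\epsilon)\progspec{Q,[\vec{R}]}$ yields $\sigma \vDash \textsf{WP}\,(c_1, \kseq{c_2})\progspec{Q,[\vec{R}]}$. The goal then reduces to a \emph{continuation-absorption} statement: moving the $\kseq{c_2}$ frame out of the continuation and into the normal postcondition is sound. Concretely I would prove, for all $c, \kappa, \sigma$, that $\sigma \vDash \textsf{WP}\,(c, \hat\kappa)\progspec{Q,[\vec{R}]}$ implies $\sigma \vDash \textsf{WP}\,(c, \kappa)\progspec{Q',[\vec{R}]}$, where $\hat\kappa$ denotes $\kappa$ with the frame $\kseq{c_2}$ appended at its bottom; instantiating at $\kappa = \epsilon$ gives exactly what is needed. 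This is established coinductively, equivalently by induction on the step index of the Bourbaki--Witt definition, in the same spirit as the simulation lemma~\ref{th:wp-sim}.

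The coinductive step splits on the defining cases of the target $\textsf{WP}\,(c,\kappa)\progspec{Q',[\vec{R}]}$. In the terminal case $\kappa = \epsilon$: if $c = \cskip$ we must show $\sigma \vDash Q'$, which holds because in the source $(\cskip, \kseq{c_2}, \sigma)$ is \emph{not} terminal and steps to $(c_2, \epsilon, \sigma)$, so the source WP gives precisely $\sigma \vDash \textsf{WP}\,(c_2,\epsilon)\progspec{Q,[\vec{R}]} = Q'$; if $c = \cbreak$ or $c = \ccontinue$, the $\kseq{c_2}$ frame is simply popped by rule~\eqref{eq:break-seq} or~\eqref{eq:continue-seq}, so the control-flow postconditions $R_\ekb, R_\ekc$ are inherited unchanged, which is exactly why $[\vec{R}]$ is identical on both sides of the rule. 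In the preservation case, any step $(c,\kappa,\sigma)\rightarrow_c(c',\kappa_1,\sigma')$ of the target is mirrored by the corresponding step of the source with $\kseq{c_2}$ still sitting at the bottom of the continuation, so the preservation clause of the source WP together with the coinduction hypothesis discharges the obligation $\sigma' \vDash \textsf{WP}\,(c',\kappa_1)\progspec{Q',[\vec{R}]}$.

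The main obstacle is the continuation-framing bookkeeping underlying the preservation case: I must show that appending $\kseq{c_2}$ at the bottom of a continuation commutes with $\rightarrow_c$ at \emph{every} reduction except the single ``junction'' step where $c_1$ has finished normally and control actually passes into $c_2$. This demands a careful rule-by-rule inspection of the small-step semantics of figure~\ref{fig:note-smallstep} (and its appendix), verifying that each rule acts only on the top of the continuation stack so that the bottom frame stays inert, and isolating $\cskip$ with empty continuation as the one configuration where the two executions diverge (there the source steps while the target terminates). Getting this framing lemma and the coinduction/step-index discipline right, rather than any individual case, is where the real effort lies; once it is in place, \textsc{seq-inv} follows from the one-step peeling together with the immediate part~(2).
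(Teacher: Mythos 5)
Your proposal is correct and follows essentially the same route as the paper: the paper also takes $Q' \triangleq \textsf{WP}\,(c_2,\epsilon)\progspec{Q,[\vec{R}]}$, gets part~(2) for free, and proves part~(1) by first stepping into $(c_1, \kseq{c_2})$ and then applying an inversion of the bind/continuation-framing rule, arguing exactly as you do that break/continue exits pop the $\kseq{c_2}$ frame to reach $\vec{R}$ while a normal exit steps into $c_2$ and lands in $Q'$. Your coinductive ``continuation-absorption'' lemma is precisely the \textsc{bind-inv} step the paper leaves to its proof-scheme figure, just spelled out in more detail.
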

\begin{proof}
    To prove \textsc{seq-inv}, we need to find an intermediate assertion $Q'$, for which we use
    $\textsf{WP}\, (c_2, \epsilon) \progspec{Q, [\vec{R}]}$.
    This immediately gives us (2) and we can prove (1) following the scheme proposed in figure~\ref{fig:wp-seq-inv}, where \textsc{bind-inv} is the inversion of bind rule.
    This is true because we know when $c_1$ exits by break or continue, we will skip $c_2 \cdot \epsilon$ and reach post-conditions in $\vec{R}$; when $c_1$ exits normally, we will step into $c_2 \cdot \epsilon$ and the weakest precondition for it should be satisfied.
    \begin{figure}
    \centering
    \includegraphics[width=0.8\linewidth]{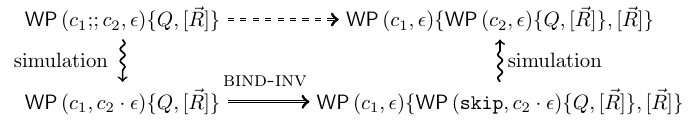}
    \caption{Proof Scheme of \textsc{seq-inv} with Weakest Precondition}
    \label{fig:wp-seq-inv}
    \end{figure}
\end{proof}



\subsection{Continuation based Shallow Embedding}
\label{sec:proof-cont}

\paragraph{\textbf{Proving Transformation Rules.}}
Similar to weakest precondition based embedding, we prefer using the simulation as the ``semantic similarity'' with the simulation lemma (lemma~\ref{th:guard-sim}) that lifts a simulation relation to relations between guard predicates.

Here, we use another definition of the simulation relation.
We add the disjunction in parentheses in definition~\ref{def:sim}, which asserts that when the \textit{source} terminates, the \textit{target} will no longer change the program state and will either terminates in zero or finite steps, or stuck in an infinite loop.
Because under the continuation based embedding, we only require the simulation lemma to provide the derivation between the safety of two programs, i.e., they can proceed without error.
The simulation do not need to simulate their terminations, which they may never reach.
This is an important property which we use in proofs.

\begin{lem}[Guard Simulation]
    \label{th:guard-sim}
    For all $c_1, c_2, \kappa_1, \kappa_2$, and $P$, if $\guard{P}{c_1}{\kappa_1}$ and we have the simulation $(c_2, \kappa_2) \sim (c_1, \kappa_1)$, then $\guard{P}{c_2}{\kappa_2}$.
\end{lem}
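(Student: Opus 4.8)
The plan is to prove the lemma by coinduction on the definition of $\text{safe}$, exploiting that $\text{safe}$ is defined as a greatest fixed point. Fix an arbitrary state $\sigma \vDash P$; unfolding the hypothesis $\guard{P}{c_1}{\kappa_1}$ gives $\text{safe}(c_1, \kappa_1, \sigma)$, so it suffices to establish $\text{safe}(c_2, \kappa_2, \sigma)$, from which $\guard{P}{c_2}{\kappa_2}$ follows by definition. I would introduce the candidate relation
\[
X = \{(c, \kappa, \sigma) \mid \exists c', \kappa'.\ (c, \kappa) \sim (c', \kappa') \land \text{safe}(c', \kappa', \sigma)\}
\]
and show that $X$ is closed under the one-step unfolding of $\text{safe}$ (a post-fixed point). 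Since $\text{safe}$ is the largest such set, this yields $X \subseteq \text{safe}$, and $(c_2, \kappa_2, \sigma) \in X$ holds by the two hypotheses above, giving the result.

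Before the main coinduction I would record the auxiliary fact that $\text{safe}$ is preserved by reduction: if $\text{safe}(c, \kappa, \sigma)$ and $(c, \kappa, \sigma) \rightarrow_c^* (c', \kappa', \sigma')$, then $\text{safe}(c', \kappa', \sigma')$. This follows immediately from the preservation clause of $\text{safe}$ by induction on the length of the reduction sequence, and it is needed because a single step of the source may be matched by several steps of the target.

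The core of the argument is the consistency check for $X$. Take $(c_2, \kappa_2, \sigma) \in X$ witnessed by a target $(c_1, \kappa_1)$ with $(c_2, \kappa_2) \sim (c_1, \kappa_1)$ and $\text{safe}(c_1, \kappa_1, \sigma)$, and case split on the source configuration following definition~\ref{def:sim}. If the source is terminal, i.e.\ $\kappa_2 = \epsilon$ and $c_2 \in \{\cskip, \cbreak, \ccontinue\}$, it already satisfies the terminal clause of $\text{safe}$; note that the termination clause of the simulation is never invoked here, which is precisely why the relaxed (infinite-loop-permitting) variant of definition~\ref{def:sim} suffices for this direction. If the source can reduce, say $(c_2, \kappa_2, \sigma) \rightarrow_c (c_2', \kappa_2', \sigma')$, then the preservation clause of the simulation supplies $(c_1', \kappa_1')$ with $(c_2', \kappa_2') \sim (c_1', \kappa_1')$ and $(c_1, \kappa_1, \sigma) \rightarrow_c^* (c_1', \kappa_1', \sigma')$; the auxiliary fact then gives $\text{safe}(c_1', \kappa_1', \sigma')$, so $(c_2', \kappa_2', \sigma') \in X$, discharging the preservation clause of $\text{safe}$.

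The main obstacle is the remaining possibility that the source is neither terminal nor reducible, that is, stuck by an error. In this case the error clause of the simulation forces the target $(c_1, \kappa_1, \sigma)$ to reduce in finitely many steps to a stuck, non-terminal configuration; but by the auxiliary preservation fact that configuration would inherit $\text{safe}$, contradicting the fact that a safe configuration is always either terminal or reducible. Hence this case cannot occur and the case analysis is exhaustive. This is the step that genuinely uses the safety of the target $(c_1, \kappa_1)$ and fixes the correct direction of the implication. Assembling the cases shows that $X$ is a post-fixed point, which completes the coinduction and thereby the proof.
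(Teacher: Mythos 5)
Your proposal is correct and follows essentially the same route as the paper: the paper proves this "by induction over the reduction of $(c_2,\kappa_2)$" (i.e., by establishing membership in the greatest fixed point defining $\text{safe}$), with the same three-way case analysis — the terminal case needing nothing from the target, the error case refuted by the target's safety, and the preservation case discharged by matching one source step with finitely many target steps. Your explicit coinduction via a candidate post-fixed point $X$, together with the auxiliary fact that $\text{safe}$ is closed under $\rightarrow_c^*$, is just a cleaner packaging of the same argument, and your remark that the relaxed termination clause of Definition~\ref{def:sim} suffices matches the paper's own observation.
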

\begin{proof}
    The proof is similar to the one for lemma \ref{th:wp-sim} by induction over the reduction of $(c_2, \kappa_2)$.
    The only difference is the \textbf{termination} case, where we no longer require any information from $(c_1, \kappa_1)$ since the conclusion holds by definition.
\end{proof}


With lemma~\ref{th:guard-sim}, we can easily prove \textsc{if-seq} and \textsc{loop-nocontinue}.
The proof idea is similar to the one in section~\ref{sec:proof-small} and we omit it here.

\paragraph{Proving Structural Rules}
Lemma \ref{th:cont-nocont} 
is the key step for proving \textsc{nocontinue} in continuation based embedding, where complex constructions of continuations are involved.

\begin{lem}
    \label{th:cont-nocont}
    For all $c, \kappa_0, P, Q, R_\ekb$, and $R_\ekc$, we have $\guard{Q}{\cskip}{\kappa_0}$ and
    $\guard{R_\ekb}{\cbreak}{\kappa_0}$ 
    imply $\guard{P}{c}{\kappa_0}$,
    if (1) $c$ has no $\ccontinue$; and
    (2) for all $\kappa$, $\guard{Q}{\cskip}{\kappa}$ and 
    $\guard{R_\ekb}{\cbreak}{\kappa}$ and
    $\guard{R_\ekc}{\ccontinue}{\kappa}$ imply $\guard{P}{c}{\kappa}$.
\end{lem}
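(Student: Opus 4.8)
The plan is to prove the lemma by instantiating hypothesis~(2) at a carefully constructed continuation and then transferring the resulting guard back to $\kappa_0$ through the guard simulation lemma (lemma~\ref{th:guard-sim}). The obstruction to a one-line proof is that we cannot simply instantiate~(2) at $\kappa = \kappa_0$: doing so would require $\guard{R_\ekc}{\ccontinue}{\kappa_0}$, which is not among our hypotheses and may genuinely fail (for instance, when $\kappa_0$ exposes a $\kloops{\cdot}{\cdot}{2}$ frame, a top-level $\ccontinue$ gets stuck). We are only given $\guard{Q}{\cskip}{\kappa_0}$ and $\guard{R_\ekb}{\cbreak}{\kappa_0}$, so the continue dispatch of $\kappa_0$ must be side-stepped rather than used.

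The first ingredient is a preservation lemma, analogous to lemma~\ref{th:nocont-preserve}: since $c$ syntactically contains no $\ccontinue$, no reduction sequence starting from $(c, \kappa_0, \sigma)$ ever reaches a configuration of the form $(\ccontinue, \kappa_0, \sigma')$ in which the base continuation $\kappa_0$ is exposed to a top-level continue. Every $\ccontinue$ that arises dynamically (e.g. the one introduced when a loop inside $c$ unfolds by the loop rule) is absorbed by the matching $\kloops{\cdot}{\cdot}{1}$ frame that sits above $\kappa_0$. Consequently the continue behaviour of $\kappa_0$ is never exercised by the execution of $c$.

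Given this, I would construct a modified continuation $\hat\kappa$ from $\kappa_0$ whose $\cskip$ and $\cbreak$ dispatch agrees with that of $\kappa_0$, but which routes any (never-reached) top-level $\ccontinue$ into a safe, state-preserving divergence --- a no-op loop that keeps reducing forever without touching $\sigma$. For this $\hat\kappa$ the three premises of~(2) hold: $\guard{Q}{\cskip}{\hat\kappa}$ and $\guard{R_\ekb}{\cbreak}{\hat\kappa}$ follow from the given $\kappa_0$ guards (by the construction together with the same preservation reasoning, since the two continuations are indistinguishable on skip and break), while $\guard{R_\ekc}{\ccontinue}{\hat\kappa}$ holds by construction because the continue now diverges safely. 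Hypothesis~(2) then yields $\guard{P}{c}{\hat\kappa}$. Finally I would define a simulation $\sim$ (in the sense of definition~\ref{def:sim}, using the relaxed termination clause available in the continuation embedding) that pairs each reachable $(c', \ell \cdot \kappa_0)$ with its counterpart $(c', \ell \cdot \hat\kappa)$, where $\ell$ is the local continuation generated by $c$. The preservation lemma guarantees that the branch on which $\kappa_0$ and $\hat\kappa$ disagree is never entered by the source, and the relaxed termination clause lets the target diverge safely whenever the source would terminate; hence $(c, \kappa_0) \sim (c, \hat\kappa)$ is a genuine simulation. Applying lemma~\ref{th:guard-sim} to $\guard{P}{c}{\hat\kappa}$ delivers $\guard{P}{c}{\kappa_0}$, which is the goal.

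The main obstacle is the joint design of $\hat\kappa$ and the simulation relation. No single continuation frame forwards $\cskip$, $\cbreak$ and $\ccontinue$ all at once --- each of $\kseq{\cdot}$, $\kloops{\cdot}{\cdot}{1}$ and $\kloops{\cdot}{\cdot}{2}$ intercepts at least one exit kind --- so $\hat\kappa$ cannot be obtained by merely prepending a continue-handler; it must be defined so that continue is diverted only along the path the source never takes, which is exactly why the preservation lemma and the relaxed termination clause are needed. Verifying that the resulting relation satisfies the termination, preservation, and error conditions of definition~\ref{def:sim} is a delicate frame-by-frame argument, and its feasibility is sensitive to the precise small-step rules; this is the \emph{complex construction of continuations} that table~\ref{fig:summary-rule-embed} flags as making this proof difficult.
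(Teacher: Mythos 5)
Your proposal is correct and follows essentially the same route as the paper's proof: both construct a modified continuation that behaves like $\kappa_0$ on normal and break entries but diverts a (never-reached) top-level $\ccontinue$ into a safe state-preserving infinite loop, apply hypothesis~(2) to that continuation, and transfer the resulting guard back to $\kappa_0$ via lemma~\ref{th:guard-sim} using the no-continue property of $c$ and the relaxed termination clause of definition~\ref{def:sim}. The paper additionally makes the construction explicit as $c_{AB}\cdot\cbreak\cdot\kloops{\cskip}{\texttt{dead}}{1}\cdot K_C$ via the continuation-to-command conversion (lemma~\ref{th:cont-to-com}), which is precisely the ``delicate frame-by-frame'' design you flag as the remaining work.
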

\begin{proof}
    Here, we set $R_\ekc'$ in \textsc{nocontinue} to $\bot$, which can derives any other $R_\ekc'$ by \textsc{hoare-consequence} rule.
    As a result, we only know the execution of $\kappa_0$ is guarded by $Q$ and $R_\ekb$ but not $R_\ekc$.
    However, we can construct a new $\kappa$ from $\kappa_0$ such that the behaviors of two continuations are ``similar'' when entering by normal exit and break exit, but $\kappa$ ``terminates'' immediately when entering by continue exit to guarantee $\guard{R_\ekc}{\ccontinue}{\kappa}$.
    Also, since $c$ has no $\ccontinue$, we know the behaviors of $(c, \kappa_0)$ and $(c, \kappa)$ are ``similar'' so that they can simulate.

    For such construction to be viable, we first need lemma \ref{th:cont-to-com} to enable conversion from continuations to commands, whose proof is trivial and omitted.

    \begin{lem}
        \label{th:cont-to-com}
        For any continuation $\kappa$, we can construct a command $c_\kappa$, such that $(\cskip, \kappa) \sim (c_\kappa, \epsilon)$, by chaining expressions corresponds to each level in $\kappa$ using sequencing commands and virtual loops\footnote{
            Virtual loops takes the form of $c_3 :: \texttt{for}_i(;;c_2) c_1$, where $c_3$ is the command remaining in the previous iteration and $i = 1,2$.
            Its semantics is defined by the reduction $(c_3 :: \texttt{for}_i(;;c_2) c_1, \kappa, \sigma) \rightarrow_c (c_3, \kloops{c_1}{c_2}{i} \cdot \kappa, \sigma)$.
            Assume $c_\kappa$ is constructed from $\kappa$, we transform $\kappa \cdot \kloops{c_1}{c_2}{i}$ into $c_\kappa :: \texttt{for}_i(;;c_2) c_1$.
        }.
    \end{lem}

    We then show the existence of such construction from $\kappa_0$ to $\kappa$.

    We define program $\texttt{dead} \triangleq \cfor{\cskip}{\cskip}$.
    It will always cause $c \cseq \texttt{dead}$ stuck in an infinite loop after the execution of $c$ and it is always safe to execute $\texttt{dead}$ from any program state.

    We assume $\kappa_0$ takes the form of 
    $$
    \begin{array}{ccccc}
        \underbrace{K_A} & \cdot & \underbrace{\kloops{c_1}{c_2}{i}} & \cdot & \underbrace{K_C} \\
        A & & B & & C
    \end{array}
    $$
    where $B$ is the first (innermost) loop continuation in $\kappa_0$.
    The case where $\kappa_0$ contains no loop, i.e., $B$ and $C$ is empty $\epsilon$ , is trivial and is not discussed here.
    The normal entry into it executes all $ABC$; the break entry into it executes $C$ only; the continue entry into it executes $BC$.
    We use $c_{AB}$ to denote command transformed from continuation $AB$ by lemma~\ref{th:cont-to-com}.
    Now, we construct\footnote{We abbreviate $\kseq{c}\cdot \kappa$ as $c \cdot \kappa$.}
    $$
    \begin{array}{rccccc}
        \kappa \triangleq & \underbrace{c_{AB} \cdot \cbreak} & \cdot & \underbrace{\kloops{\cskip}{\texttt{dead}}{1}} & \cdot & \underbrace{K_C} \\
        & A' & & B' & & C'
    \end{array}
    $$
    The \textbf{normal} entry will execute $c_{AB}$ and then skip the loop in $B'$ to execute $K_C$, i.e., it will execute all continuations $ABC$ in $\kappa_0$;
    the \textbf{break} entry will skip $A'$
    \footnote{Although there is a loop in $c_{AB}$, it is hidden in a sequential continuation and the $\cbreak$ will ignore the loop in $c_{AB}$. Such ill-formed continuation with loop commands in a sequential continuation is only possible in fabricated ones and will not occur in normal small-step reductions.}
    and is scoped by the loop in $B'$ and enter $K_C$ through normal entry, which is equivalent to first skipping loop in $B$ then executing $K_C$;
    the \textbf{continue} entry will skip $A'$ and stuck in a dead loop in $B'$.
    As a result, $\guard{Q}{\cskip}{\kappa_0}$
    will imply $\guard{Q}{\cskip}{\kappa}$, $\guard{Q_\ekb}{\cbreak}{\kappa_0}$ will imply $\guard{Q_\ekb}{\ccontinue}{\kappa}$, and 
    $\guard{Q_\ekc}{\ccontinue}{\kappa}$ unconditionally holds because a dead loop always make progress.

    By premise, we have $\guard{P}{c}{\kappa}$.
    Notice that $(c, \kappa)$ simulates $(c, \kappa_0)$ because $\kappa$ simulates $\kappa_0$ when $c$ exit normally or by break, but $c$ can only exit in these ways instead of continue exit.
    By lemma \ref{th:guard-sim}, we prove $\guard{P}{c}{\kappa_0}$.
\end{proof}


\paragraph{Proving Inversion Rules}
Similar to the proof of \textsc{nocontinue}, the proof of \textsc{seq-inv} (lemma \ref{th:cont-seq-inv}) is also based on constructions of continuations.

\begin{lem}
    \label{th:cont-seq-inv}
    For all $c_1, c_2$ and $P$, 
    if for all $\kappa$, $\guard{Q}{\cskip}{\kappa}$ and $\guard{R_\ekb}{\cbreak}{\kappa}$ and \linebreak
    $\guard{R_\ekc}{\ccontinue}{\kappa}$ imply $\guard{P}{c_1 \cseq c_2}{\kappa}$,
    then there exists $Q'$ such that for all $\kappa_0$,
    \begin{enumerate}
        \item[(1)] $\guard{Q'}{\cskip}{\kappa_0}$ and $\guard{R_\ekb}{\cbreak}{\kappa_0}$ and $\guard{R_\ekc}{\ccontinue}{\kappa_0}$ imply $\guard{P}{c_1}{\kappa_0}$,
        \item[(2)] $\guard{Q}{\cskip}{\kappa_0}$ and $\guard{R_\ekb}{\cbreak}{\kappa_0}$ and $\guard{R_\ekc}{\ccontinue}{\kappa_0}$ imply $\guard{P}{c_2}{\kappa_0}$.
    \end{enumerate}
\end{lem}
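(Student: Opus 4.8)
The plan is to reuse the template of lemma~\ref{th:cont-nocont}: fix a suitable intermediate assertion $Q'$, dispatch one direction by unfolding the guard predicate, and reduce the other to the guard-simulation lemma (lemma~\ref{th:guard-sim}) after an explicit construction of continuations. For $Q'$ I take the continuation-style weakest precondition of $c_2$,
\[
Q' \triangleq \lambda\sigma.\ \forall \kappa.\ \big(\guard{Q}{\cskip}{\kappa}\land\guard{R_\ekb}{\cbreak}{\kappa}\land\guard{R_\ekc}{\ccontinue}{\kappa}\big)\Rightarrow \mathsf{safe}(c_2,\kappa,\sigma).
\]
This is an impredicative assertion (it quantifies over continuations and mentions $\mathsf{safe}$), which is available under the same assumption used for \textsc{hoare-ex}. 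Direction (2) is then immediate: given a $\kappa_0$ with $\guard{Q}{\cskip}{\kappa_0}$, $\guard{R_\ekb}{\cbreak}{\kappa_0}$ and $\guard{R_\ekc}{\ccontinue}{\kappa_0}$, any $\sigma\vDash Q'$ can be instantiated at this very $\kappa_0$ to yield $\mathsf{safe}(c_2,\kappa_0,\sigma)$, which is exactly $\guard{Q'}{c_2}{\kappa_0}$.

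The real work is direction (1): from the hypothesis (the unfolded validity of $\vdash\triple{P}{c_1\cseq c_2}{Q,[\vec{R}]}$) and a $\kappa_0$ guarded by $Q'$, $R_\ekb$, $R_\ekc$, derive $\guard{P}{c_1}{\kappa_0}$. First I would record a frame property of continuations: a reduction of $c_1$ never inspects the tail of its continuation until $c_1$ itself exits, and since $(c_1\cseq c_2,\kappa,\sigma)\rightarrow_c(c_1,\kseq{c_2}\cdot\kappa,\sigma)$, studying $c_1$ under $\kappa_0$ reduces to studying $c_1$ under tails of the form $\kseq{c_2}\cdot\kappa$. The execution of $(c_1,\kappa_0,\sigma)$ for $\sigma\vDash P$ then falls into three terminal situations, each dispatched by a different instantiation of the hypothesis. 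On a normal exit of $c_1$ at $\sigma'$, instantiating the hypothesis at every $Q,R_\ekb,R_\ekc$-guarded $\kappa$ and pushing the reduction past $(\cskip,\kseq{c_2}\cdot\kappa,\sigma')\rightarrow_c(c_2,\kappa,\sigma')$ shows $\mathsf{safe}(c_2,\kappa,\sigma')$ for all such $\kappa$, i.e.\ $\sigma'\vDash Q'$, whence $\guard{Q'}{\cskip}{\kappa_0}$ closes the case. On a break or continue exit at $\sigma'$, the $\kseq{c_2}$ frame is skipped, so I build a witness continuation $\kappa$ that is guarded by $Q$, $R_\ekb$, $R_\ekc$ — using the dead-loop command $\texttt{dead}$ of lemma~\ref{th:cont-nocont} and the continuation-to-command conversion of lemma~\ref{th:cont-to-com} to make its normal branch vacuously safe — while replicating the break/continue routing of $\kappa_0$; the hypothesis at this $\kappa$ gives $\mathsf{safe}(\cbreak,\kappa,\sigma')$, resp.\ $\mathsf{safe}(\ccontinue,\kappa,\sigma')$, and identical routing transfers safety to $\kappa_0$. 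Error or divergence of $c_1$ is ruled out, resp.\ trivially safe, by the same guarded witness. I would assemble these cases into a simulation $(c_1,\kappa_0)\sim(c_1,\kseq{c_2}\cdot\kappa)$ fed to lemma~\ref{th:guard-sim}, with the normal-exit obligation discharged through the $Q'$ characterization rather than by step-matching.

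The main obstacle is exactly this construction and the hybrid character of (1). The terminal exit is governed by a weakest precondition $Q'$ that we are free to define so as to match the safety semantics, and is therefore absorbed by the universal-continuation trick; but the break and continue exits are governed by the fixed assertions $R_\ekb,R_\ekc$ and cannot be so absorbed — they must be threaded through a continuation that shares $\kappa_0$'s control-flow routing yet is $Q$-guarded on its normal branch. Reconciling these two mechanisms inside a single simulation, and proving the accompanying routing-preservation and frame lemmas across all continuation constructors ($\kseq{\cdot}$, $\kloops{c_1}{c_2}{1}$, $\kloops{c_1}{c_2}{2}$), is the delicate step, precisely as the analogous dead-loop construction was the hard part of lemma~\ref{th:cont-nocont}.
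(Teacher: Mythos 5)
Your proposal is correct, and its skeleton matches the paper's: both proofs hinge on the same dead-loop continuation (your ``$Q$-guarded witness replicating $\kappa_0$'s break/continue routing'' is exactly the paper's $\kappa \triangleq \texttt{dead}\cdot\kappa_0$, with the break/continue entries skipping the $\kseq{\texttt{dead}}$ frame and the normal entry vacuously safe), the same guard-simulation lemma, and the same hybrid treatment where break/continue exits are step-matched while the normal exit is discharged through the hypothesis $\guard{Q'}{\cskip}{\kappa_0}$. The one genuine divergence is the choice of $Q'$: you take the continuation-style weakest precondition of $c_2$, whereas the paper takes the strongest postcondition of $c_1$, namely $\sigma\vDash Q'$ iff $\exists\sigma_0\vDash P$ with $(c_1,\epsilon,\sigma_0)\rightarrow_c^*(\cskip,\epsilon,\sigma)$. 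These are dual, and the choice determines where the reachability/frame argument lives: with the strongest post, claim (1)'s normal-exit case is trivial ($\sigma'\vDash Q'$ holds by construction) and the work of pushing $c_1$'s standalone execution through $\kseq{c_2}\cdot\kappa_0$ lands in claim (2); with your weakest precondition, claim (2) is a one-line instantiation at $\kappa_0$ and the same pushing argument lands in claim (1)'s normal-exit case. Neither is strictly lighter, though the paper's choice keeps all the continuation surgery confined to (1), where the dead-loop construction is needed anyway, and its $Q'$ avoids the (harmless, since assertions are shallow) quantification over continuations inside an assertion. Two small remarks: lemma~\ref{th:cont-to-com} is not actually needed here --- prepending a single $\kseq{\texttt{dead}}$ frame suffices, unlike in lemma~\ref{th:cont-nocont} where the continuation must be rebuilt around a loop frame; and your final framing as a single simulation $(c_1,\kappa_0)\sim(c_1,\kseq{c_2}\cdot\kappa)$ cannot literally satisfy the termination clause of definition~\ref{def:sim} at a normal exit (the two sides genuinely diverge there), so the explicit carve-out you already make for that case is essential, not optional.
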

\begin{proof}
    We construct $Q'$ as the strongest post. of $c_1$: $\sigma \vDash Q'$ iff. $\exists \sigma_0.\, (c_1, \epsilon, \sigma_0) \rightarrow_c (\cskip, \epsilon, \sigma)$ and $\sigma_o \vDash P$.

    (1) Similar to the proof of \textsc{nocontinue}, we need to construct a $\kappa$ from $\kappa_0$ to utilize the premise by satisfying $\guard{Q}{\cskip}{\kappa}$.
    We take $\kappa \triangleq \texttt{dead} \cdot \kappa_0$.
    $\kappa$ simulates $\kappa_0$ through break and continue entry and we have $\guard{R_\ekb}{\cbreak}{\kappa}$ and $\guard{R_\ekc}{\ccontinue}{\kappa}$ by lemma \ref{th:guard-sim}.
    $\kappa$ always loops through the normal entry and we have $\guard{Q}{\cskip}{\kappa}$.
    By premise, we have $\guard{P}{c_1 \cseq c_2}{\kappa}$.
    This guarantees $c_1$ can execute safely from $P$, and we only need to show that $\kappa_0$ can execute safely after $c_1$.

    When $c_1$ exits through break and continue, $(c_1 \cseq c_2) \cdot \kappa$ simulates $c_1 \cdot \kappa_0$ and we have $\guard{P}{c_1}{\kappa_0}$;
    $\guard{Q'}{\cskip}{\kappa_0}$ tells us when $c_1$ exits normally, we can keep executing $\kappa_0$ safely.
    Together, we have $\guard{P}{c_1}{\kappa_0}$ in all cases.

    (2) By specializing the premise with $\kappa_0$, we can reduce the proof goal to $\guard{P}{c_1 \cseq c_2}{\kappa_0}$ implies $\guard{Q'}{c_2}{\kappa_0}$.
    This is obvious because
    $\guard{P}{c_1 \cseq c_2}{\kappa_0}$ shows the safety $c_2 \cdot \kappa_0$ after we exit $c_1$ normally, and $Q'$, program states after $c_1$'s normal exit, which apparently guards $c_2 \cdot \kappa_0$.
\end{proof}

\paragraph{Summary.}
In summary, we have proved representatives (figure~\ref{fig:review}) of extended rules in three categories under four different embeddings.
\begin{itemize}
    \item The deep embedding allows most of the proofs to be simple, but it has a relative complicated proof for the \textsc{hoare-ex} rule.
    \item The big-step based embedding has relatively simple proofs for all extended rules.
    And there is a general proof principle for transformation rules under all shallow embeddings: first prove a relationship between two programs' semantics, then lift it to a relationship between two programs' Hoare triple.
    \item In the weakest precondition based embedding, proofs are of medium difficulty.
    We can apply the general proof principle for it but we prefer using the simulation relation as the relationship at the semantic level for embeddings based on small-step semantics.
    The simulation relation is relatively easy to construct and is used throughout all extended rule's proof in this embedding, but it could be tedious to define when the program is complicated.
    \item In the continuation based embedding, complexities for transformation rules' proofs are still acceptable with the general proof principle, but is very difficult for other extended rules.
    We need complex constructions for new continuations to utilize information in the premise.
\end{itemize}
We prefer the deep embedding because it manages proofs mainly at the logic level, instead of the semantic level.
We do not need to worry about simulations and defining continuations in this embedding.
The only problem is the \textsc{hoare-ex} rule, which we have proved in this section and the proof can be maintained when extending the language and logic, as long as there exists exactly one primary rule corresponds to it for each program constructor as mentioned in section~\ref{sec:proof-deep}.




\section{From Shallow Embedding to Deep Embedding}
\label{sec:d2s}

In section~\ref{sec:proof}, we have seen proofs for extended proof rules under different program logic embeddings.
The proofs under shallow embeddings (except the big-step based one) is relatively complicated for certain extended rules, e.g., \textsc{if-seq} and \textsc{nocontinue}.
Proofs under both the weakest precondition based one and the continuation based one involves the notion of simulation.
Proofs under the continuation based one even require complex constructions of continuations.
Although it may be possible to formalize these complicated proofs in real verification tools, we propose an easier way to work around them by building a deeply embedded logic above the existing shallowly embedded logic and prove these extended rules under the deeply embedded one.

\begin{figure}[h]
    \centering
    \includegraphics[width=\linewidth]{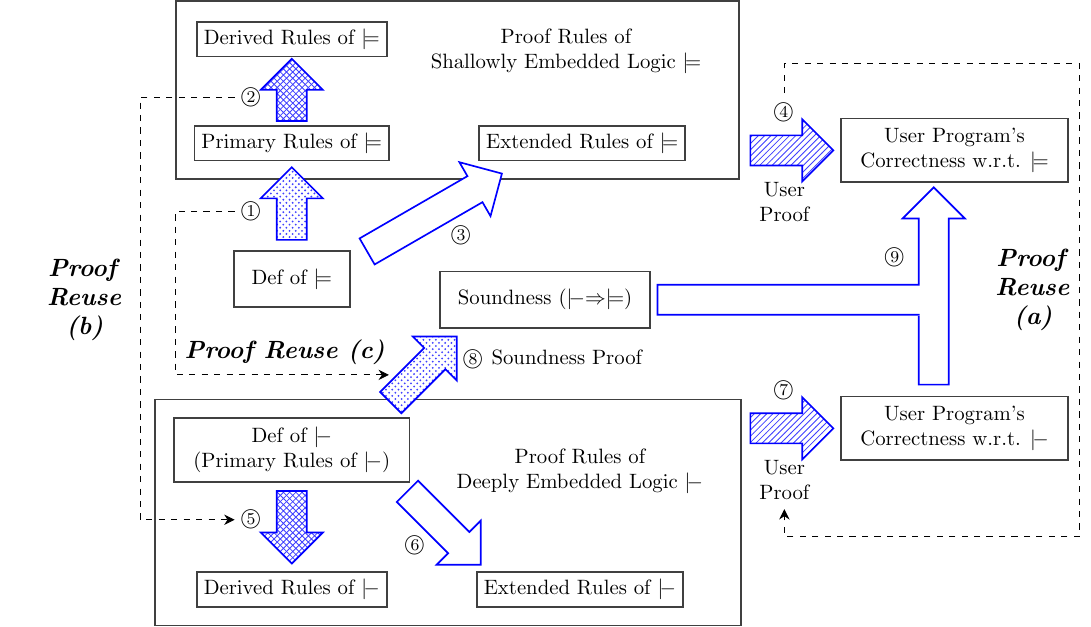}
    \caption{The Framework for Building Deep Embedding from Shallow Embedding. Proofs are represented by arrows labelled with numbers. Proofs of arrows with matching texture can be reused in place of the other.}
    \label{fig:d2s-frame}
\end{figure}

Figure~\ref{fig:d2s-frame} presents our framework for builduing a deeply embedded logic by reusing most of the proofs from an existing shallowly embedded one.
The existing shallow embedding contains the followings:
\begin{itemize}
    \item The definition of the shallow embedding $\vDash \triple{P}{c}{Q, \vec{R}}$.
    \item A set of primary rules and their proofs $\circnum{1}$ directly based on the definition.
    \item A set of derived rules and their proofs $\circnum{2}$ built on primary rules.
    \item A set of extended rules and their proofs $\circnum{3}$ directly based on the definition.
    \item User's proof $\circnum{4}$ of their programs using only\footnote{
        The embedding's definition is not exposed to users as part of its interface.
        User's proof only consists of application of these proof rules and proofs irrelevant to the program logic.
    } rules of these three types.
\end{itemize}
The deeply embedded verification tool we want needs to provide the followings:
\begin{itemize}
    \item The definition of the embedding, i.e., the set of primary rules for proving $\vdash \triple{P}{c}{Q, \vec{R}}$.
    \item A set of derived rules and a set of extended rules along with their proofs $\circnum{5}\circnum{6}$ directly built on primary rules.
    \item User's proof $\circnum{7}$ of their own programs using rules of these three types.
    \item To guarantee the correctness of the deep embedding, a soundness theorem and its proof $\circnum{8}$ is required.
\end{itemize}

We use same sets of proofs rule in the deep embedding as the shallow one.
The primary rules, derived rules, and extended rules in the shallow embedding and the deep embedding have same forms.
In this way, many proofs in the shallow embedding side can be reused to assist the construction of the deep embedding as indicated in figure~\ref{fig:d2s-frame}.
\begin{description}
    \item[\textbf{Proof reuse (a)}.] Since all proof rules in two embedding have same forms, users can upgrade their proofs in the shallow embedding $\circnum{4}$ to those in the deep one $\circnum{7}$ with only one change in proofs.
    They only need to replace occurances of proof rules with their counterparts in the deep one.
    The new proof establishes the program correctness as deeply embedded Hoare triples.
    \item[\textbf{Proof reuse (b)}.] As mentioned in section~\ref{sec:intro}, derived rules are directly derived from primary rules with trivial proofs. Since primary rules in two embeddings have same forms, derived rules' proofs $\circnum{2}$ and $\circnum{5}$ are in the same situation as case \textbf{\textit{(a)}} and we can reuse them.
    \item[\textbf{Proof reuse (c)}.] The most important part in a deeply embedded logic is its soundness proof $\circnum{8}$, which guarantees the logic's correctness.
    However, we can construct the soundness proof for free by reusing existing primary rules' proofs $\circnum{1}$ in the shallow embedding.
    To prove the soundness, we start by induction over the proof tree and obtain subgoals that are identical to shallowly embedded primary rules, which are already proved.

    Moreover, given the soundness and user program correctness in the deep embedding, we can also derive program correctness in the shallow embedding $\circnum{9}$, if it is of user's desire.
    It is a trivial instantiation of the soundness theorem on deeply embedded Hoare triples of user programs, where both are generated by reusing proofs in the shallow embedding.
\end{description}

\

However, we do not need to reuse proofs of extended rules.
As we have shown in section~\ref{sec:proof},
their proofs $\circnum{3}$ in the shallow embedding are often complex constructions, while in the deep embedding, their proofs $\circnum{6}$ are mostly simple inductions over the proof tree.
Proving extended rules are simpler in a deep embedding than in a shallow embedding.
Moreover, to our knowledge, many shallowly embedded verification tools (e.g. Iris) do not provide extended rules and their proofs.
There is no way to reuse what does not exist, but we can easily equip them with these extended rules when lifted to deep embeddings.

In conclusion, building a deeply embedded verification tools from an existing shallowly embedded one is mainly simple reuses of existing proofs and supports easier proofs of extended rules.
Existing user's proofs can also be easily upgraded to the deeply embedded version.

\section{Choices of Primary Rules in Deep Embedding}
\label{sec:choice}

At the beginning of section~\ref{sec:proof-deep}, we have fixed our deeply embedded logic with only primary rules in figure~\ref{fig:setting}.
In fact, we may choose deep embeddings with other admitted primary rules.
For example, we may choose \textsc{frame} rule as a primary rule as discussed in section~\ref{sec:frame-dep}.
In this section, we want to discuss the choices of making extended rules as primary rules.

\vspace{10pt}
\noindent\textit{\textbf{Choosing}} \textsc{hoare-ex} \textit{\textbf{as a Primary Rule.}}
\vspace{5pt}

\noindent In section~\ref{sec:proof}, we notice that the proof of \textsc{hoare-ex} is simpler in shallow embeddings than its proof in the deep embedding.
It seems a good choice to make it a primary rule and use the technique in section~\ref{sec:d2s} to lift its simple shallow embedding proof to the deep embedding's soundness proof.
However, we still prefer not choosing it as primary proof rule in the deeply embedded logic and prove it as an extended rule.
This is also the design choice in our deeply embedded VST.
Adding \textsc{hoare-ex} as a primary rule causes problems.


Adding more primary rules to the proof system would invalidate the original proofs of other extended rules.
In deep embedding, we mainly use induction over proof trees when proving extended rules, and now we need to discuss one more case in such induction where the Hoare triple is constructed by \textsc{hoare-ex}.
Taking \textsc{seq-inv} as an example, the extra case is that $\vdash \triple{P}{c_1\cmdseq c_2}{Q, [\vec{R}]}$ is constructed by \textsc{hoare-ex}, i.e., $P$ must takes the form of $\exists x: T.P'(x)$.
And the induction hypothesis is for any $x$ of type $T$ we can find an $S$ such that $\vdash \triple{P'(x)}{c_1}{S, [\vec{R}]}$ and $\vdash \triple{S}{c_2}{Q, [\vec{R}]}$.
We then need to use the axiom of choice (AC) to complete the proof: we need to find a function $S'$
from $T$ to assertions,
such that for any $x$ of type $T$, $\vdash \triple{P'(x)}{c_1}{S'(x), [\vec{R}]}$ and $\vdash \triple{S'(x)}{c_2}{Q, [\vec{R}]}$.
It is controversial whether we should accept AC when we prove meta-theorems.
Original VST does not need AC.
Moreover, to complete this proof in Coq, we actually need the computational version of AC.
To prevent such problems when we want more extended rule, we treat \textsc{hoare-ex} as an extended rule and prove it as a meta-theorem.

\vspace*{10pt}
\noindent\textit{\textbf{Choosing Transformation Rules as Primary Rules.}}
\vspace*{5pt}

Transformation rules can not treated as primary rules either.
Otherwise, some program may have multiple proofs through the original proof rules and through the transformation rule.
This invalidates our previous proof of inversion rules.

For example, if we admit \textsc{loop-nocontinue} as a primary rule, then in the proof of \textsc{loop-inv}, we need to consider one more induction case:
given $\vdash \triple{P}{\cfor{c_1\cseq c_2}{\cskip}}{Q, [\vec{R}]}$, we need to show that there exists $I_1$ and $I_2$ such that $\vdash \triple{I_1}{c_1}{I_2, [Q, I_2]}$ and $\vdash \triple{I_2}{c_2}{I_1, [Q, \bot]}$ and $P \vdash I_1$.
Different from the case in section~\ref{sec:proof-big}, the induction hypothesis cannot extract any information from the premise.
The loop must take the form of $\cfor{c_1}{c_2}$ to apply the induction hypothesis.
As a result, we fail to prove \textsc{loop-inv}.

Moreover, admitting transformation rules as primary rules acquires developers to give their soundness proof in some shallow embedding.
As we have seen in section~\ref{sec:proof}, their proofs under weakest precondition based embedding and continuation based embedding is not as simple as other primary rules' proofs.

\vspace*{10pt}
\noindent\textit{\textbf{Choosing other Structural Rules as Primary Rules.}}
\vspace*{5pt}

Both \textsc{frame} rule and \textsc{hypo-frame} rule are preferred not chosen as primary rules for the same reason as transformation rules.
They make proof tree induction complicated.
Moreover, a complex validity definition is required to support the soundness of the logic with \textsc{hypo-frame} rule as a primary rule.



\vspace*{10pt}
\noindent\textit{\textbf{Choosing Inversion Rules as Primary Rules.}}
\vspace*{5pt}

In fact, choosing inversion rules as primary rules is not even an option.
Conclusions of primary rules must take the form of Hoare triples,
while those of inversion rules are often propositions lead by meta-language's existential quantifier.
We can not define the provable relation for these propositions.

\

\vspace*{10pt}
\noindent\textit{\textbf{Summary.}}
\vspace*{5pt}

It is possible to choose structural rules and transformation rules as primary rules, but it makes proof tree induction and soundness proof complicated.
It is impossible to formalize inversion rules as primary rules.
As a result, our current choice of primary rules (compositional rules, the consequence rule, and singleton command rules) is the most suitable one for instantiate a deeply embedded verification tool.

\section{Discussion}
\label{sec:hypo}



In this section, we discus various extensions of the Hoare logic discussed in previous sections and their potential influence on the logic embedding.
Specifically, we will first consider program logics with load and store to discuss some basics of separation logic embeddings.
Then we introduce rules related to procedure calls above it and further discuss embeddings for the logic with procedure calls.
We will also discuss various features of the program logic and assertion language like total correctness, non-determinism, and step-indexed definitions, along with the soundness proof technique mentioned in section~\ref{sec:nomen-logic}.

\subsection{Embeddings of Separation Logic}
\label{sec:discuss-sep}

When reasoning about memory operations, it is useful to extend the Hoare logic to a separation logic \cite{reynolds02}.
A separation logic adds a logic connective, ``$*$'', separating conjunction, to the assertion language. The assertion $P * Q$ asserts two disjoint memory satisfying $P$ and $Q$ respectively.
Typically, a separation logic adds proof rules for memory related assignments (rules for load and store) and the \textsc{frame} rule to Hoare logic.
Also, when function call is involved in a programming language, separation logic allows verifiers to prove a Hoare triple about global memory while the callee function specification only states its local effects.

Figure~\ref{fig:sep} extends our previous toy WhileCF language with two memory operations.
\begin{itemize}
\item Simple load, $x = [y]$, where $x$ and $y$ are program variables, and this command loads the value from the location with address $y$ into variable $x$.
\item Simple store, $[x] = y$, where $x$ and $y$ are program variables, and this command stores the value of $y$ into the location with address $y$.
\end{itemize}

\begin{figure}[h]
\paragraph{While-CF with Memory Operations}
$$
\begin{array}{rcl}
c \in \text{command} &:=& \cdots \ \mid \ x = [y] \ \mid \ [x] = y
\end{array}
$$
\paragraph{Separation Logic Proof Rules}
\begin{mathpar}
    \inferrule[hoare-load]{}{
        \vdash \triple{l \mapsto v \land \doublebrackets{y} = l}{\ \ x = [y] \ \ }{l \mapsto v \land  \doublebrackets{x} = v \land \doublebrackets{y} = l, [\bot, \bot]}}
    \and
    \inferrule[hoare-store]{}{
        \vdash \triple{l \mapsto v \land \doublebrackets{x} = l \land \doublebrackets{y} = u}{\ \ [x] = y \ \ }{l \mapsto u \land \doublebrackets{x} = l \land \doublebrackets{y} = u, [\bot, \bot]}}
    \and
    \inferrule[frame]{
        c \text{ does not modify program variables freely occurring in } $F$ \\\\
        \vdash \triple{P}{c}{Q, [Q_{\text{brk}}, Q_{\text{con}}]}
    }{
        \vdash \triple{P * F}{c}{Q * F, [Q_{\text{brk}} * F, Q_{\text{con}} * F]}
    }
\end{mathpar}
    \caption{Additional Proof Rules for Separation Logics}
    \label{fig:sep}
\end{figure}

In addition, we assume that there is no memory load in an expression.
Every memory operation is performed as a load or store command.
This is similar to the Clight program in CompCert \cite{krebbers2014formal} and the C program in VST-Floyd \cite{VST-Floyd}.

To verify these memory operations, figure~\ref{fig:sep} adds two standard primary rules, \textsc{hoare-load} and \textsc{hoare-store}, to our previous logic in section~\ref{sec:rules}.
The \textsc{frame} rule is the main focus of this section.

\subsubsection{Frame Rule and More Potential Shallow Embeddings}
\label{sec:frame}

The \textsc{frame} rule allows prover to remove unused separating conjuncts (assertions in separation logics) of memory from both pre-/post-conditions and prove the specification with remaining memories.
Gotsman \textit{et al.} \cite{gotsman2011precision} proved: if the operational semantics satisfies the frame property, then the \textsc{frame} rule will always generate valid conclusions from valid assumptions.
The following hypotheses state the frame property via a big-step semantics and via a small-step semantics.
We use $\sigma' \oplus \sigma$ to represent the disjoint union of two pieces of memory, $\sigma'$ and $\sigma$.

\begin{hypo}
\label{hyp:frame-big}
For any $c$, $\text{ek}$, $\sigma_1$, $\sigma'_2$ and $\sigma$,
\begin{itemize}
\item if $(c, \sigma_1 \oplus \sigma) \Uparrow$, then $(c, \sigma_1) \Uparrow$;
\item if $(c, \sigma_1 \oplus \sigma) \Downarrow (\text{ek}, \sigma'_2)$, then either $(c, \sigma_1) \Uparrow$ or
   $\sigma'_2 = \sigma_2 \oplus \sigma $ for some $\sigma_2$ and  $(c, \sigma_1) \Downarrow (\text{ek}, \sigma_2)$.
\end{itemize}
\end{hypo}

\begin{hypo}
\label{hyp:frame-small}
For any $c_1$, $c_2$, $\kappa_1$, $\kappa_2$, $\sigma_1$, $\sigma'_2$ and $\sigma$,
\begin{itemize}
\item if $(c_1, \kappa_1, \sigma_1 \oplus \sigma) \not\rightarrow_c$, then $(c_1, \kappa_1, \sigma_1) \not\rightarrow_c$;
\item if $(c_1, \kappa_1, \sigma_1 \oplus \sigma) \rightarrow_c (c_2, \kappa_2, \sigma'_2)$, then either $(c_1,\kappa_1,  \sigma_1)  \not\rightarrow_c$ or
   $\sigma'_2 = \sigma_2 \oplus \sigma $ for some $\sigma_2$ and  $(c_1, \kappa_1, \sigma_1) \rightarrow_c (c_2, \kappa_2, \sigma_2)$.
\end{itemize}
\end{hypo}

These two properties are true for most reasonable languages (including realistic languages like C). And Gotsman's conclusion holds for $\vDash_b$,  $\vDash_w$, and $\vDash_c$ (respectively stand for the validity definition under big-step based embedding, weakest precondition based embedding, and continuation-based embedding).
For example, theorem~\ref{thm:frame-big} proves the \textsc{frame} rule in the big-step based embedding.

\begin{thm}
\label{thm:frame-big}
    The \textsc{frame} rule holds for the big-step based embedding defined in section~\ref{sec:embed-big}, if the hypothesis~\ref{hyp:frame-big} is true.
\end{thm}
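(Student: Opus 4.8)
The plan is to prove Theorem~\ref{thm:frame-big} by unfolding the definition of $\vDash_b$ and using Hypothesis~\ref{hyp:frame-big} to transport executions of $c$ on the small memory $\sigma_1$ up to executions on the combined memory $\sigma_1 \oplus \sigma$, and conversely to decompose combined executions. Concretely, assume $\vDash_b \triple{P}{c}{Q, [Q_\ekb, Q_\ekc]}$ and take any state $\sigma_1'$ satisfying the framed precondition $P * F$. By the definition of separating conjunction, $\sigma_1'$ splits as $\sigma_1 \oplus \sigma$ where $\sigma_1 \vDash P$ and $\sigma \vDash F$. I must establish the two conjuncts of $\vDash_b \triple{P * F}{c}{Q * F, [Q_\ekb * F, Q_\ekc * F]}$: safety on $\sigma_1 \oplus \sigma$, and that every terminating execution lands in the correspondingly framed postcondition.

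First I would prove safety. Suppose for contradiction that $(c, \sigma_1 \oplus \sigma) \Uparrow$. The first bullet of Hypothesis~\ref{hyp:frame-big} gives $(c, \sigma_1) \Uparrow$, which contradicts the safety conjunct of $\vDash_b \triple{P}{c}{Q, [\vec{R}]}$ applied to $\sigma_1 \vDash P$. Next I would handle correctness: suppose $(c, \sigma_1 \oplus \sigma) \Downarrow (\ek, \sigma_2')$ for some exit kind $\ek$ and state $\sigma_2'$. The second bullet of Hypothesis~\ref{hyp:frame-big} yields a dichotomy. The case $(c, \sigma_1) \Uparrow$ is again ruled out by the safety we already have from the premise on $\sigma_1$. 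In the remaining case the hypothesis hands us a decomposition $\sigma_2' = \sigma_2 \oplus \sigma$ together with $(c, \sigma_1) \Downarrow (\ek, \sigma_2)$. Feeding this into the premise $\vDash_b \triple{P}{c}{Q, [\vec{R}]}$ gives, by a case split on $\ek$, that $\sigma_2 \vDash Q$ when $\ek = \epsilon$, $\sigma_2 \vDash Q_\ekb$ when $\ek = \ekb$, and $\sigma_2 \vDash Q_\ekc$ when $\ek = \ekc$. Since the frame $\sigma$ was untouched and $\sigma \vDash F$ throughout, in each branch $\sigma_2' = \sigma_2 \oplus \sigma$ witnesses $\sigma_2' \vDash Q * F$, respectively $Q_\ekb * F$ or $Q_\ekc * F$, which is exactly what is required.

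The one genuine subtlety — the step I expect to be the main obstacle — is the side condition that $c$ does not freely modify program variables occurring in $F$. In this formalization program states combine a store over program variables with a heap, and the disjoint union $\oplus$ operates on the heap component while the frame $F$ may also constrain the store. The frame property in Hypothesis~\ref{hyp:frame-big} controls only the heap footprint; to conclude $\sigma \vDash F$ still holds in the post-state I must additionally argue that the store-relevant part of $F$ is preserved, which is precisely what the variable side condition buys. I would therefore either fold the store-preservation into the statement of $\oplus$-disjointness (treating modified variables as part of the footprint) or invoke the side condition explicitly to show that the projection of $\sigma_2'$ onto the variables free in $F$ agrees with that of $\sigma_1 \oplus \sigma$. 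Everything else is a direct unfolding of $\vDash_b$ and mechanical case analysis on the exit kind; the argument is the big-step analogue of Gotsman \emph{et al.}'s result \cite{gotsman2011precision}, and the analogous proofs for $\vDash_w$ and $\vDash_c$ would replace Hypothesis~\ref{hyp:frame-big} by Hypothesis~\ref{hyp:frame-small} and thread the frame preservation through a simulation-style induction over reduction steps rather than over the big-step derivation.
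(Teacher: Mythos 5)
Your proposal is correct and follows essentially the same route as the paper's proof: split the framed state as $\sigma_1 \oplus \sigma'$, use the first bullet of Hypothesis~\ref{hyp:frame-big} contrapositively for safety, and use the second bullet's dichotomy plus the unframed premise to place the terminal state in the framed postcondition. Your closing remark about the program-variable side condition is a point the paper's proof silently glosses over, but it does not change the structure of the argument.
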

\begin{proof}
To prove $\bigvalid \triple{P*F}{c}{Q*F, [R_\ekb*F, R_\ekc*F]}$, we need to show that for any $\sigma \vDash P*F$ (1) $\lnot\,(c, \sigma)\Uparrow$, and (2) $\text{for all } \ek, \sigma_2, \text{if }(c, \sigma) \Downarrow (\ek, \sigma_2)$, $\sigma_2$ satisfies corresponding post-conditions with the frame $F$ according to $\ek$.

According to the premise $\bigvalid \triple{P}{c}{Q, [R_\ekb, R_\ekc]}$, we know for any $\sigma_1 \vDash P$ (i) $\lnot\,(c, \sigma_1)\Uparrow$, and (ii) $\text{for all } \ek, \sigma_2, \text{if }(c, \sigma_1) \Downarrow (\ek, \sigma_2)$, $\sigma_2$ satisfies corresponding post-conditions without the frame $F$ according to $\ek$.

We can split $\sigma$, according to $\sigma \vDash P*F$, into $\sigma_1$ and $\sigma'$ satisfying
$$
\sigma = \sigma_1 \oplus \sigma' \quad\text{and}\quad \sigma_1 \vDash P \quad\text{and}\quad \sigma' \vDash F.
$$
By (i), we know $\lnot\,(c, \sigma_1)\Uparrow$. And by hypothesis~\ref{hyp:frame-big}, we have $\lnot\,(c, \sigma_1 \oplus \sigma')\Uparrow$ because otherwise we have a contradiction and we prove (1).

To prove (2), by hypothesis~\ref{hyp:frame-big}, for any $\ek, \sigma_2$ with $(c, \sigma_1 \oplus \sigma') \Downarrow (\ek, \sigma_2)$, we know that
\begin{itemize}
    \item either $(c, \sigma_1)\Uparrow$, which contradicts with the premise (i) and finishes the proof;
    \item or there exists $\sigma_2'$ with $\sigma_2 = \sigma_2' \oplus \sigma'$ and $(c, \sigma_1) \Downarrow (\ek, \sigma_2')$.
    By (ii), we know $\sigma_2'$ satisfies corresponding post-conditions without the frame $F$ according to $\ek$ (i.e., $\sigma_2' \vDash Q$, $\sigma_2' \vDash R_\ekb$, or $\sigma_2' \vDash R_\ekc$).
    Since $\sigma' \Vdash F$, we know $\sigma_2 = \sigma_2' \oplus \sigma'$ satisfies corresponding post-conditions with the frame $F$ according to $\ek$ (i.e., $\sigma_2 \vDash Q*F$, $\sigma_2 \vDash R_\ekb*F$, or $\sigma_2 \vDash R_\ekc*F$), which finishes the proof.
\end{itemize}
\end{proof}

Interestingly, a different style of shallow embeddings for separation logic \cite{torp2005semantics} are widely used and the \textsc{frame} rule will hold intrinsically.
The \textsc{frame} rule in this formalization is called the "baked-in frame rule" in some literatures.
For any shallow embedding $\vDash_x$ among $\vDash_b$,  $\vDash_w$, and $\vDash_c$, it defines the Hoare triple as
$$\vDash_{\text{sep}-x} \triple{P}{c}{Q, [R_\ekb, R_\ekc]} \ \text{iff. for any} \ F, \vDash_x \triple{P*F}{c}{Q*F, [R_\ekb * F, R_\ekc * F]}. $$
Taking continuation based shallow embedding for example, we can use another definition: $\vDash_{\text{sep}-c} \triple{P}{c}{Q, [R_\ekb, R_\ekc]}$ iff. for arbitrary continuation $\kappa$ and \textit{arbitrary frame $F$} (also a separation logic assertion),
$$
\text{if } \quad \begin{cases}
    \quad\guard{Q * F}{\cskip}{\kappa} \\
    \quad\guard{Q_\ekb * F}{\cbreak}{\kappa} \\
    \quad\guard{Q_\ekc * F}{\ccontinue}{\kappa} 
\end{cases}
\text{then } \quad \guard{P * F}{c}{\kappa}.
$$
This is the way how shallowly embedded VST defines its Hoare triples.
Under such embedding, we can easily verify the \textsc{frame} rule without using the frame property of small-step operational semantics.
The premise $\vDash_{\text{sep}-c}~\!\triple{P}{c}{Q, [R_\ekb, R_\ekc]}$ means that for arbitrary frame $F''$, $$\vDash_{c} \triple{P * F''}{c}{Q * F'', [R_\ekb * F'', R_\ekc * F'']}.$$
Let $F'' = F * F'$. Then we know $\vDash_{c} \triple{(P * F) * F'}{c}{(Q * F) * F', [(R_\ekb * F) * F', (R_\ekc * F) * F']}$ for any $F$ and $F'$, which exactly states the conclusion of \textsf{frame}:
$$\vDash_{\text{sep}-c} \triple{P*F}{c}{Q*F}.$$
Although such sep$-x$ embedding does not invalidate any primary rules or extended rules we have mentioned so far, it does make other proofs more complicated since we need to deals with this extra frame $F$.
In comparison, both embedding styles are feasible.
The original one makes proofs of other primary rules and extended rules relatively concise, while the sep$-x$ version makes the proof of the frame rule simple.

To summarize, in a shallow embedding, we can either directly prove it if hypothesis~\ref{hyp:frame-big} and hypothesis~\ref{hyp:frame-small} are true, or use the baked-in frame rule and reprove other rules using the new shallow embedding $\vDash_{\text{sep-x}}$.
It does not matter whether this is a primary rule or an extended rule in a shallow embedding because both type of proof rules are just lemmas.

\subsubsection{Frame Rule and Deep Embeddings}
\label{sec:frame-dep}
In deeply embedded program logics, we can either choose to make \textsc{frame} a primary rule or treat it as an extended rule.

The former option means that the separation logic's soundness is based on the fact that \textsc{frame} preserves Hoare triple's validity.
This proof is very similar to the verification of \textsc{frame} in shallowly embedded logics.
When developing a deeply embedded logic based on a shallowly embedded logic (see section~\ref{sec:d2s}), one could directly reuse the proof of \textsc{frame} rule in the shallow embedding.
However, as discussed in previous sections, adding one more primary rule to the proof system may invalidate the other proofs because every case analysis and inductions on proof trees will have one more branch to prove.
In the case of \textsc{frame} rule, although extended rules we have discussed may not be invalidated by this extra case, such choice would cause more uncertainty when we want to add more extended rules where this extra case may be unprovable.
Keeping the proof system concise with only singleton rules, compositional rules, and the consequence rule will make the logic more extensible and thus we give up this design choice and prove frame rule as an extended rule.

Nevertheless, we are not yet able to prove it with our previous deeply embedded program logic by simple induction over the proof tree of $\triple{P}{c}{Q}$.
The induction steps of \textsc{hoare-load} and \textsc{hoare-store} are not provable.
We should allow these rules to be compatible with extra frames.
For example, we use \textsc{hoare-load-frame} to replace \textsc{hoare-load}.
\begin{mathpar}
    \inferrule[hoare-load-frame]{}{
        \vdash \triple{F * l \mapsto v \land \doublebrackets{y} = l}{\ \ x = [y] \ \ }{F * l \mapsto v \land  \doublebrackets{x} = v \land \doublebrackets{y} = l, [\bot, \bot]}}
\end{mathpar}
Then, we can easily prove the \textsc{frame} rule by induction over the proof tree.
Similarly, for a language with more commands, we only need to guarantee proof rules for memory related operations are compatible with extra frames.

To summarize, there are also two ways to formalize the \textsc{frame} rule in a deep embedding: either directly admit it as a primary rule, or bake the \textsc{frame} rule into singleton rules for memory operations.
In both way, we need to admit the \textsc{frame} rule as part of the primary rules.

\subsection{Procedure Calls \& Hypothetical Frame Rule}
\label{sec:hypo-frame}

In this section, we further extend the language and logic in figure~\ref{fig:sep} with procedure calls.
Figure~\ref{fig:sep-call} adds the command for procedure calls, $\text{call} \ f()$, where procedure calls have no arguments or return value.
Function arguments and return values are written to specific locations to be transferred between the caller and the callee.

Different from logics in previous sections, the judgement of the logic in figure~\ref{fig:sep-call} takes the form of $\Delta \vdash \triple{P}{c}{Q, [\vec{R}]}$.
Here, $\Delta$ is the function hypothesis (or type-context in VST), containing a list of Hoare triples $\triple{P_i}{k_i}{Q_i}$ specifying the pre/post-condition of the procedure with identifier $k_i\in\text{FuncID}$.
With the function hypothesis, \textsc{hoare-call} can determine the effect of procedure call by looking up the callee's specification in $\Delta$.

\begin{figure}[h]
\paragraph{While-CF with Memory Operations and Procedure Calls}
$$
\begin{array}{rcl}
c \in \text{command} &:=& \cdots \ \mid \ x = [y] \ \mid \ [x] = y \ \mid \ \text{call} \ f()
\end{array}
$$
\paragraph{Procedure Call Proof Rules}
\begin{mathpar}
    \inferrule[hoare-call]{}{
        \triple{P_i}{k_i}{Q_i}_{\text{for } i \leq n} \vdash \triple{P_i}{ \ \ \text{call} \ k_i \ \ }{Q_i, [\bot, \bot]}
    }
    \and
    \inferrule[hypo-frame]{
        c \text{ modifies program variables freely occuring in } R \text{ only through } k_i \\\\
        \triple{P_i}{k_i}{Q_i}_{\text{for } i \leq n} \vdash \triple{P}{c}{Q, [Q_{\text{brk}}, Q_{\text{con}}]}
    }{
        \triple{P_i*R}{k_i}{Q_i*R}_{\text{for } i \leq n} \vdash \triple{P*R}{c}{Q*R,  [Q_{\text{brk}} * R, Q_{\text{con}} * R]}
    }
\end{mathpar}
    \caption{Additional Proof Rules for Separation Logics and Procedure Calls}
    \label{fig:sep-call}
\end{figure}

Shallow embeddings of judgement involving procedure calls\cite{o2004separation} are different from those in section~\ref{sec:embed}.
Taking the denotational semantics based shallow embedding $\vDash_b$ for example, we introduce $\eta$ in \eqref{eq:eta}, a mapping from procedures $k_i$ to their denotations, to avoid circularity.
\begin{equation}
    \eta\in \text{FuncID} \rightarrow \text{state} \rightarrow \text{state} \rightarrow \text{exit\_kind} \rightarrow \text{Prop}
\label{eq:eta}    
\end{equation}
We use $\eta(k_i)$ to lookup procedure's denotation and $c\doublebrackets{\eta}$ to represent the program after substituting call to procedures by their denotations.
The shallow embedding $\triple{P_i}{k_i}{Q_i}_{\text{for } i \leq n} \vDash_b \triple{P}{c}{Q, [\vec{R}]}$ 
is defined as:\linebreak
for any $\eta$, $\vDash_b \triple{P_i}{\eta(k_i)}{Q_i, [\bot, \bot]}$ for all $i \leq n$ can imply $\vDash_b \triple{P}{c\doublebrackets{\eta}}{Q, [\vec{R}]}$.



In a logic combining separation logic and procedure call, we can generalize the \textsc{frame} rule to the \textsc{hypothetical-frame} rule in figure~\ref{fig:sep-call}, which was first proposed by O'Hearn \textit{et. al} \cite{o2004separation}.
This rule allows extend specifications of both the main program $c$ and procedures $k_i$ it invokes by composing an extra frame $R$ to their preconditions ($P_i$ and $P$) and postconditions ($Q_i$ and $Q$).
It is easily derivable in a deeply embedded logic by induction over the proof tree with modified atomic rules in section~\ref{sec:frame}, and it is proved for the shallow embedding we introduced in this section by O'Hearn \textit{et. al} \cite{o2004separation}.


\subsection{Total Correctness}
\label{sec:discuss-total}
Throughout the paper, we have been discussing the Hoare logic for partial correctness, where we only require the program to produce no error and its output and ending state satisfying certain assertions.
Another correctness property that people care about a program is its total correctness.
The total correctness adds one more condition on top of the partial correctness: the execution of the program terminates.

In a simple While language with only assignment, sequential composition, if-statement, and while-loop, the only way to cause non-termination is executing an infinite-loop.
A famous approach to prove the termination of a loop is through loop variant, and it is easy to instantiate in a Hoare logic \cite{manna1974axiomatic}.
For example, the \textsc{while-total} rule below uses an expression $e$ as the loop-variant, which evaluates to some nature number.
The value of $e$ is guaranteed to decrease and when it reaches 0, the loop terminates.
With the \textsc{while-total} rule and other standard proof rules for other language constructs (e.g., those in figure~\ref{fig:setting}), a program logic can reason about the total correctness of the While language.
\begin{mathpar}
    \inferrule*[left=while-total]{
        I \land \doublebrackets{b} = \btrue \vdash \doublebrackets{e} > 0 \\
        \forall n. \vdash_{\text{total}} \triple{I \land \doublebrackets{b} = \btrue \land \doublebrackets{e} = n}{c}{I \land \doublebrackets{e} < n}
    }{
        \vdash_{\text{total}} \triple{I}{\cwhile{b}{c}}{I \land \doublebrackets{b} = \bfalse}
    }
\end{mathpar}

In practice, people may care how long does a program terminates rather than just whether a program terminates.
Time complexity analysis and cost analysis can also be integrated into a Hoare logic.
One approach to accomplish that is to reason about time credits, a resource that each step of execution needs to consume, in the pre-/post-conditions of a Hoare triple.
Chargu{\'e}raud \textit{et. al} \cite{chargueraud2019verifying} formalize a separation logic framework with time credits and use it to verify the amortized complexity of a union-find implementation.
We may equipped the proof rule for the while-loop with time credits as \textsc{while-credit} rule\footnote{
    Different from previous loop rules, \textsc{while-credit} is more like the \textsc{while-unroll1} rule which verifies one iteration of the loop per application of the rule.
    Provers need to induct over the number of total iteration in the meta-logic to use this proof rule.
} below.
The time credit is represented by $\$n$ in the assertion for some positive integer $n$.
To verify each iteration of a loop, the user of this rule need to split the time credit resource into two parts ($\$n_1$ and $\$n_2$), and this iteration will consumes one part of the credit $\$n_1$ while the remaining iteration will consume the other $\$n_2$.
A loop can finish within certain steps represented by a time credit, if the prover can find a way to dispense the time credit correctly to each iteration.
\begin{mathpar}
    \inferrule*[left=while-credit]{
        \vdash_{\text{credit}} \triple{I \land \doublebrackets{b} = \btrue \land \$n_1 * \$n_2}{c}{I * \$n_2} \\
        \vdash_{\text{credit}} \triple{I * \$n_2}{\cwhile{b}{c}}{I \land \doublebrackets{b} = \bfalse}
    }{
        \vdash_{\text{credit}} \triple{I * \$n_1 * \$n_2}{\cwhile{b}{c}}{I \land \doublebrackets{b} = \bfalse}
    }
\end{mathpar}
As we can see, the time credit approach is also a kind of the loop-variant.
But it is more expressive and can assert the upper bound of a program's execution time.

This paper mainly focuses on partial correctness, but we believe many extended rules also apply to total correctness verifications, which is left as a future work.

\subsection{Non-Determinism}
\label{sec:discuss-nondeterm}

Non-determinism is also a feature people want to support with Hoare logic, especially when verifying non-deterministic programs like concurrent programs.
There are mainly two types of non-determinism related to a programming language: the demonic non-determinism and the angelic non-determinism \cite{back2012refinement}.

In a program, demonic non-determinism is caused by commands that may non-deterministically yield different behaviors that could falsify the specification, while angelic non-determinism is caused by commands that may choose among different behaviors the one that could satisfy the specification.
For example, the simple programming language below consists of both demonic ($\sqcap$) and angelic ($\sqcup$) non-determinism.
$$
c \in \text{ command } := \cskip \lsep x = e \lsep c_1 \cseq c_2 \lsep c_1 \sqcap c_2 \lsep c_1 \sqcup c_2
$$
The \textsc{demonic} and \textsc{angelic{$_i$}} rule clearly explains the difference between two types of non-determinism.
When verifying a demonic choice $c_1 \sqcap c_2$ satisfying some Hoare triple, the prover need to prove that both choices satisfies it, otherwise, the program can non-deterministically choose to execute the one that does not satisfy the triple.
But when verifying an angelic choice $c_1 \sqcup c_2$, the prover only need to show that one of the choices are correct, and the program will non-deterministically choose to execute the one that satisfies the triple.
\begin{mathpar}
\inferrule*[left=demonic]{
    \vdash \triple{P}{c_1}{Q} \\
    \vdash \triple{P}{c_2}{Q}
}{
    \vdash \triple{P}{c_1 \sqcap c_2}{Q}
}
\and
\inferrule*[left=angelic{$_i$}]{
    \vdash \triple{P}{c_i}{Q}
}{
    \vdash \triple{P}{c_1 \sqcup c_2}{Q}
}
\end{mathpar}

However, in most of realistic programming languages, there are only counterparts for demonic non-deterministic, e.g., random number generator and concurrent interleaving.
The angelic non-determinism is often integrated when developing a program logic for the programming language.
For example, when verifying concurrent programs, verifiers use auxiliary states (also known as ghost states) to support their proof and they will insert auxiliary code at certain lines of the original code \cite{}.
These auxiliary code can only update auxiliary states and their purpose is to maintain certain invariant of the entire concurrent system.
Before proving concurrent programs, these auxiliary code are non-deterministic (no one knows what code will be inserted at each line).
When proving these programs, provers will choose auxiliary code so that these code will help establish specifications, and therefore, they are angelic non-deterministic.

Iris \cite{jung2018iris} takes one step further by embedding these ghost updates as the so-called frame preserving updates in the assertion language.
Informally, the frame preserving update $\fpupdate P$ means that after some ghost transitions admitted by the underlying state, the new state will satisfy the assertion $P$.
Below shows Iris's weakest precondition in section~\ref{sec:embed-small} equipped with the frame preserving update.
Different from the previous definition, here, when given the iProp of the pre-state $S(\sigma)$, we can first make some frame preserving updates, after which the expression $e$ executes one step on the physical memory.
After this one step execution, we can also make some frame preserving updates to the ghost state so that the weakest precondition for the remaining expression $\WPRE\, e'\, \{\Phi\}$ is satisfied on the new physical state and ghost state.
Provers can choose angelically which ghost transition to make when proving $\fpupdate P$, and thus making it angelic non-deterministic.
$$
\begin{aligned}
    \WPRE\, e\,  \{\Phi\}&\triangleq
           (e \in \textit{Val} \land {\color{blue}\fpupdate} \Phi(e)) \\
    &\lor \bigl(\forall \sigma.\, e \notin \textit{Val} \land S(\sigma) \wand {\color{blue}\fpupdate} \bigl(\text{reducible}(e, \sigma) \notag \\
    &\quad \land \later
        \forall e', \sigma'.\, \left((e, \sigma) \tred (e', \sigma')\right) \wand
    {\color{blue}\fpupdate} \left(S(\sigma') * \WPRE\, e'\, \{\Phi\}\right)
    \bigr)
    \bigr)
\end{aligned}
$$

Although this paper's logic formalization does not consider angelic non-determinism, we do allow demonic non-determinism.
Firstly, three shallow Hoare logic embeddings in section~\ref{sec:embed} requires any post-state reachable in the execution satisfy certain properties\footnote{
    Any post-state need to satisfy the post-condition, the weakest precondition, and the $\text{safe}(-,-,-)$ predicate respectively in three shallow embeddings.
}.
Therefore, if a program has demonic non-determinism, the logic will ask provers to verify all possible executions of the program to be correct.
Secondly, all extended proof rules discussed in the paper do not require the program to be deterministic, and moreover, our proofs of extended rules under all embeddings do not rely on the determinism of the program.


\subsection{Impredicative Assertions}
\label{sec:discuss-impred}

Impredicative assertions are assertions whose universe quantifiers and existential quantifiers can quantify over assertions variables.
And these assertion languages should also admits Hoare triples as assertions.
This feature is required for our \textsc{hoare-ex} proof in section~\ref{sec:proof-deep} to work.

For a shallowly embedded assertion language, which we use for our proofs in section~\ref{sec:proof}, it is easy to formalize these impredicative assertions because we can inject meta-logic propositions into assertions.
Definitions \eqref{eq:inter-ex} and \eqref{eq:inter-tripl} shows how to define the interpretation (shallow embeddings) of these assertions.
\begin{gather}
\sigma \vDash \exists P_0. S(P_0) \ \text{iff.} \ \text{there exists an assertion } P_0 \ \text{s.t.} \ \sigma \vDash S(P_0)
\label{eq:inter-ex}
\\
\sigma \vDash M \ \text{iff.} \ M \ \text{where} \ M \ \text{is a meta-logic proposition like } \vdash\triple{P}{c}{Q, [\vec{R}]}
\label{eq:inter-tripl}
\end{gather}

If we would use a deeply embedded assertion language, we need to construct the counterpart of the provable judgement and higher-order quantifiers in the assertion language.
This is possible, and defining such a syntax system is not hard. The following definition could be a reasonable candidate:
$$
\begin{array}{rcl}
x & \in & \text{individual-variables-name} \\
A & \in & \text{assertion-variable-name} \\
t & ::= & \dots \ \mid \ x \\
P & ::= & t = t \ \mid \ \neg P \ \mid \ P \wedge P \ \mid \ \exists x. \ P \ \mid \ \exists A. \ P \ \mid \ \ \vdash \triple{P}{c}{P, [P, P]}
\end{array}$$

Technical problems may appear when defining assertions' interpretation $\sigma \vDash P$.
A simple syntactical substitution of the quantified variable for its value will result in a non-decreasing recursive interpretation function.
For example, in the definition below, the assertion $P'$ that substitutes $A$ in $P$ may be larger than $P$, which makes the interpretation function non-terminating.
$$
\sigma \vDash \exists A. P \iff \text{there exists assertion } P' \text{ such that } \sigma \vDash P[A/P']
$$

Instead of a direct syntactical substitution, we can use an interpretation assignment $J$ to store interpretations of free variables.
$$
J \in \text{assertion-variable-name} \rightarrow (\text{state} \rightarrow \text{Prop})
$$
We use the judgement $\sigma, J \vDash P$ to help define the interpretation of an assertion.
For higher-ordered existential quantifiers, we choose in the meta-level the interpretation $d$ of the assertion variable $A$.
And then assign $d$ to $A$ in the assignment $J$.
$$
\sigma, J \vDash \exists A. P \iff \text{there exists } d \in \text{state} \rightarrow \text{Prop} \text{ such that } \sigma, J[A/d] \vDash P
$$
This makes an assertion to be potentially an open term with free variable $A$ occurring in $P$.
When we need to interpret a free assertion variable $A$ on a state $\sigma$, we simply query $J$ for $A$'s interpretation function and apply it to $\sigma$.
$$
\sigma, J \vDash A \iff J[A](\sigma)
$$
It is also easy to interpret Hoare triple assertions because we can directly interpret them as their validities (shallow embeddings).
Taking big-step based embedding for example, we simply copy and paste the definition in section~\ref{sec:embed-big} and add assignment $J$ when interpreting pre-/post-conditions.
$$
\sigma, J \vDash (\vdash \triple{P}{c}{Q, [\vec{R}]}) \iff
\left(
\begin{aligned}
& \text{for all $\sigma_1,J \vDash P$, } \lnot\,(c, \sigma_1)\Uparrow\\
& \text{and for all } \ek, \sigma_2, \text{if }(c, \sigma_1) \Downarrow (\ek, \sigma_2) \\
& \quad\text{then }\ek = \epsilon \text{ implies } \sigma_2,J \vDash Q\\
& \quad\text{and }\ek = \ekb \text{ implies } \sigma_2,J \vDash R_\ekb\\
& \quad\text{and }\ek = \ekc \text{ implies } \sigma_2,J \vDash R_\ekc
\end{aligned}
\right)
$$
For a closed assertion $P$, we will interpret it on any assignment because it should not depend on it.
$$
\sigma \vDash P \iff P \text{ is closed and for any } J,  \sigma, J \vDash P
$$

Another approach is to inductively define the interpretation of assertions as a deep embedding, which is proposed by Zhaozhong and Zhong \cite{ni2006certified}.
For example, the interpretation rule for the existential quantifier is the following.
\begin{mathpar}
\inferrule*[]{
    P' \in \text{assertion} \and
    \sigma \vdash P[A/P']
}{
    \sigma \vdash \exists A. P
}
\end{mathpar}
This definition only defines the interpretations for assertions that have finite derivation trees.
Assertions with unbounded size (brought by higher-ordered quantifiers) will not have a finite derivation tree as well as an interpretation.
This deeply embedded interpretation is then proved sound w.r.t. assertions true meaning.
Interested readers may refer to their work \cite{ni2006certified} for more details.

\subsection{The Soundness Proof Technique}
\label{sec:discuss-sound}

Section~\ref{sec:nomen-logic} mentioned a soundness proof technique, where logic developers first prove the logic is sound w.r.t. some auxiliary validity definition $\VDash S$, i.e.,
\begin{equation}
\label{eq:aux-sound}
\text{forall } \vdash S \text{ implies } \VDash S,
\end{equation}
and then prove this auxiliary validity implies the real validity $\vDash S$, i.e..
\begin{equation}
\label{eq:real-sound}
\text{forall } \VDash S \text{ implies } \vDash S.
\end{equation}
The auxiliary validity $\VDash S$ differs from the real validity $\vdash S$ in that $\VDash S$ usually has extra information about programs execution which makes proving \eqref{eq:aux-sound} easier, while this information is not available when proving the real validity $\vDash S$ and therefore a lemma \eqref{eq:real-sound} that erases these information is required.

For example, in Brookes's soundness proof for the concurrent separation logic \cite{brookes2007semantics}, they use the thread local enabling as the semantics that defines the auxiliary validity.
The thread local enabling semantics is annotated with available resource invariants, and whenever an action needs access to shared resources, it will access these invariant annotations and make sure it can still preserve the invariant when it finishes.
In this way, all threads will respect the invariant and cooperate property when executing concurrently under the thread local enabling semantics.
These invariant notations come directly from the concurrent separation logic judgement, and therefore, it is easy to prove that any provable judgement with invariant $\Gamma$ is valid under the thread local enabling semantics with the invariant annotation $\Gamma$.
However, this annotation is not available in the machine semantics, the real semantics of a program's execution on a machine, since the machine will not tell any thread what invariant they should respect.
Therefore, Brookes uses a connection property to show that erasing the invariant annotation in a thread local enabling reduction produce an equivalent machine semantics reduction, and thus the auxiliary validity can induce the real validity.

In some works, e.g., in Iris \cite{jung2018iris}, adequacy theorem serves the purpose of proving the logic's soundness w.r.t. the real validity.
Iris uses shallowly embedded weakest precondition to embed the Hoare logic, which allows angelic updates to ghost states when the program takes a step in the physical memory.
Only with both ghost states and physical memory can a weakest precondition be established, while a program is supposed to be correct with only access to physical memories.
Therefore, they use an adequacy theorem, rephrased under the sequential execution in theorem~\ref{thm:adequacy}, to derive the real validity of a whole program $e$ with no ghost component in the precondition $\text{True}$ and the post-condition $\Phi$ (since it is a first-order predicate).
As a result, theorem~\ref{thm:adequacy} indicates that any whole program Hoare triple proved in Iris is valid under the machine semantics and without any ghost updates.
\begin{thm}[Iris Adequacy]
\label{thm:adequacy}
Let $\Phi$ be a first-order predicate. If $\text{True} \wand \WP\,e\,\{\Phi\}$ is true and for any $\sigma, e', \sigma'$ such that $(e, \sigma) \rightarrow_t^* (e', \sigma')$, then
\begin{itemize}
    \item Either $(e', \sigma')$ can be further reduced.
    \item Or $e'$ is some terminal value $v$, and $\Phi(v)$ is true.
\end{itemize}
\end{thm}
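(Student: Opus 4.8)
The plan is to reduce the theorem to the standard \emph{progress} and \emph{preservation} lemmas for the weakest precondition \eqref{eq:iris-term}--\eqref{eq:iris-pres} and then iterate them along the reduction, keeping careful track of the later modality $\later$. Writing $S(\sigma)$ for the state interpretation that injects $\sigma$ into an iProp, I would first prove two lemmas by unfolding the definition of $\WPRE\,e\,\{\Phi\}$. Preservation: whenever $(e,\sigma)\tred(e_1,\sigma_1)$,
\[
S(\sigma) * \WPRE\,e\,\{\Phi\} \ \vdash\ \later\bigl(S(\sigma_1) * \WPRE\,e_1\,\{\Phi\}\bigr).
\]
Since $e$ takes a step it is not a value, so the second disjunct \eqref{eq:iris-pres} applies; consuming $S(\sigma)$ yields $\text{reducible}(e,\sigma)$ together with the guarded continuation, and instantiating that continuation at the actual step (the pure step fact can be pushed under the $\later$) produces the right-hand side under one $\later$. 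Progress:
\[
S(\sigma) * \WPRE\,e\,\{\Phi\} \ \vdash\ \text{reducible}(e,\sigma) \lor \bigl(e \in \textit{Val} \land \Phi(e)\bigr),
\]
by a case split on the two disjuncts of the weakest precondition: the value disjunct gives the right conjunct directly, and the non-value disjunct gives reducibility after feeding in $S(\sigma)$.

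Next I would induct on the length $n$ of the reduction $(e,\sigma)\tred^*(e',\sigma')$. Iterating Preservation $n$ times gives
\[
S(\sigma) * \WPRE\,e\,\{\Phi\} \ \vdash\ \later^{\,n}\bigl(S(\sigma') * \WPRE\,e'\,\{\Phi\}\bigr),
\]
and composing with Progress underneath $\later^{\,n}$ (by monotonicity of $\later$) yields
\[
S(\sigma) * \WPRE\,e\,\{\Phi\} \ \vdash\ \later^{\,n}\bigl(\text{reducible}(e',\sigma') \lor (e' \in \textit{Val} \land \Phi(e'))\bigr).
\]
The base case $n=0$ is exactly Progress at $(e,\sigma)$, and the inductive step glues one Preservation step onto the induction hypothesis transported under a $\later$.

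Finally I would extract the meta-level conclusion. Because $\Phi$ is a first-order predicate, $\Phi(v)$, membership $e'\in\textit{Val}$, and $\text{reducible}(e',\sigma')$ are all pure meta-level facts, so their disjunction is a pure proposition, call it $\psi$. Starting from the hypothesis $\text{True}\vdash\WPRE\,e\,\{\Phi\}$ and allocating the initial state interpretation $S(\sigma)$ for the starting state, the chain above gives $\text{True}\vdash\later^{\,n}\,\lceil\psi\rceil$. I would then invoke the soundness of the Iris base logic: a pure proposition that is provable under $n$ laters with no hypotheses holds at every step-index at least $n$, hence holds at the meta level. This discharges $\psi$ and yields exactly the two-way disjunction in the theorem.

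The main obstacle is the $\later$/step-index bookkeeping, concentrated in the final extraction: the $n$ accumulated laters can be stripped \emph{only because} the trace took $n$ physical steps, and this is sound solely relative to Iris's step-indexed model via the base-logic soundness statement that $\text{True}\vdash\later^{\,n}\lceil\psi\rceil$ implies $\psi$. A secondary subtlety is the allocation of the initial $S(\sigma)$, which requires exhibiting a valid initial ghost state. The statement here is the sequential specialization, so I can avoid threading a thread-pool and the fancy/basic update modalities (the $\fpupdate$ of section~\ref{sec:discuss-nondeterm}) that the full concurrent adequacy theorem must carry through every step; with those removed, the later-stripping argument remains the only genuinely delicate ingredient.
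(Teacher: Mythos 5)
The paper does not prove this theorem: it is stated as a rephrasing, for the sequential case, of the adequacy result from the Iris paper, and is cited rather than derived. Your sketch reproduces the standard argument used in the Iris development itself --- a progress lemma and a later-guarded preservation lemma obtained by unfolding the two disjuncts of the weakest precondition, iterated along the $n$-step trace to yield $\later^{\,n}$ of a pure disjunction, then discharged by the base-logic soundness principle that a pure proposition provable under $n$ laters holds at the meta level --- and it is correct in its essentials, including the two subtleties you flag (allocating the initial state interpretation and dropping the update modalities in the sequential specialization).
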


To conclude, proving soundness of a program logic with advanced features like concurrency can be challenging.
And one may need to introduce a layer of auxiliary validity in the soundness proof.

\section{Real Projects}
\label{sec:project}




Besides the program logic for our WhileCF toy language discussed so far, we also implement proposed theories in two real verification projects.
Specifically, in section~\ref{sec:project-vst}, we use the scheme in section~\ref{sec:d2s} to lift originally shallowly embedded VST\cite{VST} into a deeply embedded VST, where extended rules are easily proved.
In section~\ref{sec:project-iris-shallow}, we reproduce the weakest precondition based shallow embedding with control flows (section~\ref{sec:embed-small}) in Iris\cite{jung2018iris} as the logic Iris-CF.
We then apply the lifting scheme in section~\ref{sec:d2s} to this shallowly embedded Iris-CF to obtain a deeply embedded logic, Iris-Imp\footnote{
    Iris-CF and Iris-Imp are only for demonstration of our theory.
    Users of Iris may prefer using the weakest precondition reasoning style instead of the Hoare logic discussed in this paper.
    But Iris-CF and Iris-Imp are still good examples to show how to lift weakest precondition based shallow embeddings to deep embeddings, and may benefit developers of other framework using this embedding.
}.

\subsection{Deeply Embedded VST}
\label{sec:project-vst}


We build deeply embedded VST\footnote{
    Our deeply embedded VST has been integrated into the VST repository (https://github.com/PrincetonUniversity/VST) and we include the version when it was first integrated.
    It consists of two files, \texttt{SeparationLogicAsLogic.v} and \texttt{SeparationLogicAsLogicSoundness.v} in the \texttt{floyd} folder.
} based on the shallowly embedded VST, a verification framework for C programs.
It is originally designed to use continuation based shallow embedding to formalize the Hoare logic.
Different from our toy language and toy logic, VST is designed to support named function invocations as well and we demonstrate the details they add to the triple.


VST uses the continuation based shallow embedding with small-step semantics defined as $ge \vdash (c, \kappa, m) \rightarrow (c', \kappa', m')$.
Different from our small-step semantics in section~\ref{sec:embed}, it contains a global environment $ge$ that stores external variables and functions.
When external call occurs, it will look up the function body from $ge$ and add it to the continuation $\kappa$.

VST's triples take the form of $\Delta \vDash \triple{P}{c}{Q, [\vec{R}]}$, where $\Delta$ is the type-context containing function specifications.
Its formalization in Coq is similar to the continuation based shallow embedding ($\vDash_{\text{sep}-c}$) discussed in this paper.
Specifically, the triple means: if every function \texttt{obeys} specifications in $\Delta$, then for all continuations $\kappa$ and all separation logic frames $F$,
$$
\text{if }
\begin{cases}
    \guard{Q * F}{\cskip}{\kappa}\\
    \text{and }\guard{R_\ekb * F}{\cmdbreak}{\kappa}\\
    \text{and }\guard{R_\ekc * F}{\ccontinue}{\kappa}\\
    \text{and }\guard{R_\ekr * F}{\cmdreturn}{\kappa}
\end{cases},
\text{ then } \guard{P * F}{c}{\kappa}.
$$
A function $f$ \texttt{obeys} specification $\triple{P}{f}{Q}$ in $\Delta$ iff. its function body $c$ has $\Delta \vDash \triple{P}{c}{Q}$.
VST uses step-index to resolve the circularity in the definition.

The shallowly embedded VST does not provide extended rules like \textsc{seq-inv} and \textsc{nocontinue}, which we prove in section~\ref{sec:proof-cont} and find their proof construction challenging.
We then seek to formalize a deeply embedded Hoare logic.
As proposed in section~\ref{sec:d2s}, we first define the provable predicate for Hoare triple by primary proof rules (atomic rules, compositional rules, and the consequence rule) and then directly use the validity definition and soundness proofs in the original shallowly embedded VST to establish the soundness of our deeply embedded VST.
We then prove extended rules in the deep embedding by induction over proof tree.
As discussed in section~\ref{sec:d2s} and section~\ref{sec:frame-dep}, we choose to treat \textsc{hoare-ex} and \textsc{frame} rule as extended rules and prove them along with other extended rules, although they were proved sound in the original shallowly embedded VST.
Also, VST did support proof rules like \textsc{hoare-load-frame} and \textsc{hoare-store-frame} in its original development. Thus, those modifications discussed in the end of section~\ref{sec:frame-dep} are not even needed.

\vspace*{10pt}
\noindent\textit{\textbf{Evaluation.}}
\vspace*{5pt}

\linespread{1.1}
\begin{table}[h]
    \centering
    \begin{tabular}{|c|c|}
        \hline
        & lines of code \\\hline
        \begin{tabular}{c}
            Shallow Embedding Proofs of Primary Rules
        \end{tabular} & 12068 \\\hline
        \begin{tabular}{c}
            Deep Embedding Definition
        \end{tabular} & 297 \\\hline
        \begin{tabular}{c}
            Deep Embedding Soundness Proof
        \end{tabular} & 261 \\\hline
        \begin{tabular}{c}
            Deep Embedding Extended Rule Proofs
        \end{tabular} & 2114 \\\hline
        \begin{tabular}{c}
            Modification to User Proofs
        \end{tabular} & \underline{\textbf{0}} \\\hline
    \end{tabular}
    \caption{Evaluation of Deeply Embedded VST}
    \label{tab:eval}
\end{table}
\linespread{1.0}

We demonstrate evaluation of our deeply embedded VST in table~\ref{tab:eval}.
It only takes very few lines of code to formalize primary rules and logic soundness by reusing original proofs in the shallow embedding.
We are able to formalize extended rules in the deep embedding with reasonable lines of code, which was not available in the shallow embedding.
Moreover, there is no need to modify user proofs and users can upgrade their projects depending on the shallowly embedded VST to our deeply embedded one for free.

\subsection{Shallowly Embedded Iris-CF \& Deeply Embedded Iris-Imp}
\label{sec:project-iris-shallow}


Iris is a representative proof system that uses small-step semantics and weakest precondition to embed their program logic.
There have been works to extend Iris with control flow reasoning \cite{sammler2021refinedc, timany2019mechanized} but not using the multiple post-condition.
So we first equip Iris with it and present Iris-CF so that we can have an instance of weakest precondition based embedding in this paper and discuss the situation of extended rules in Iris-CF.
To evaluate our lifting approach in section~\ref{sec:d2s}, we lift Iris-CF to a deeply embedded program logic Iris-Imp.

Iris-lambda is a ML-like language for many Iris project's demonstration, which can be extended to more realistic programming languages (e.g., the $\lambda_{\texttt{Rust}}$ in the RustBelt\cite{jung2017rustbelt}).
To evaluate our theory in the Iris framework, we extended Iris-lambda into Iris-CF:
$$
\begin{aligned}
    v \in \textit{Val} ::=\, & () \lsep z \lsep \btrue \lsep \bfalse \lsep l \lsep \lambda x.e \lsep \cdots \\
    e \in \textit{Expr} ::=\, & v \lsep x \lsep e_1(e_2) \lsep \cmdref(e) \lsep !e \lsep e_1 \leftarrow e_2 \lsep \cmdfork{e} \lsep {e_1 \cmdseq e_2} \lsep \underline{\cmdloop{e} e} \\
    \lsep\, & \cmdif e_1 \cmdthen e_2 \cmdelse e_3 \lsep \underline{\cmdbreak e} \lsep \underline{\cmdcontinue} \lsep \underline{\cmdcall e} \lsep \underline{\cmdreturn e} \lsep \cdots \\
    K \in \textit{Ctx} ::=\, & \bullet \lsep K(e) \lsep v(K) \lsep \cmdref(K) \lsep !K \lsep K \leftarrow e \lsep v \leftarrow K \lsep {K \cmdseq e} \lsep \underline{\cmdloop{e} K} \\
    \lsep\, & \cmdif K \cmdthen e_2 \cmdelse e_3 \lsep \underline{\cmdbreak K}\lsep \underline{\cmdcall K} \lsep \underline{\cmdreturn K} \lsep \cdots
\end{aligned}
$$
\underline{Underlined} ones are what we add to the language to support control flow. 
We refer the reader to appendix for its semantics and Iris's original paper \cite{jung2018iris} for explanations of Iris-lambda's features.

\begin{figure}[h]
\centering
$$
\begin{aligned}
    \wpre{e}{\Phi_N}{\Phi_B}{\Phi_C}{\Phi_R} \triangleq&
            (e \in \textit{Val} \land \upd \Phi_N(e)) \\
    \lor & (\exists v \in \textit{Val}.\, e = \cmdbreak v \land \upd \Phi_B(v)) \\
    \lor & (e = \cmdcontinue \land \upd \Phi_C()) \\
    \lor & (\exists v \in \textit{Val}.\, e = \cmdreturn v \land \upd \Phi_R(v)) \\
    \lor & \biggl(\forall \sigma.\, e \notin \text{terminals} \land S(\sigma) \wand \upd \bigl(\cred(e, \sigma) \\
    & \land \later
        \forall e', \sigma', \vec{e}_f.\, \bigl((e, \sigma) \tred (e', \sigma', \vec{e}_f)\bigr) \wand \upd \\
    & \quad \bigl(S(\sigma') * \wpre{e'}{\Phi_N}{\Phi_B}{\Phi_C}{\Phi_R} \\
    & \quad\quad * \mathlarger{\mathlarger{\mathlarger{*}}}_{e'\in\vec{e}_f} \textsf{WP}\, e' \progspec{\top} \bigr)\bigr)\biggr)
\end{aligned}
$$
\caption{Weakest Precondition in Iris-ControlFlow}
\label{fig:wp-icf}
\end{figure}

We define the weakest precondition in Iris logic with control flow in figure~\ref{fig:wp-icf}.
Hoare triple's validity is defined by weakest precondition as we demonstrated in related projects in section~\ref{sec:embed-small}.
The definition is consistent with our previous weakest precondition based shallow embedding in section~\ref{sec:embed-small}, where we have 4 branches for 4 different terminals, value, break, continue, and return terminals, and 1 branch for the preservation case.
The definition uses many Iris notations and features.
For example, it has later modality $\later$, which is necessary in Iris's step-indexed model to prevent circularity; it has the update modality $\upd$, which allows angelic updates as mentioned in section~\ref{sec:discuss-nondeterm}; it includes weakest-preconditions for forked threads so that it can reason about concurrency.
Most of them are not strongly related to the topic of this paper, and we refer readers to Iris's original paper \cite{jung2018iris} for a detail explanation.


All primary proof rules in section~\ref{sec:setting} are established in Iris-CF and figure~\ref{fig:iris-cf-rule} shows a snippet of it.

\begin{figure}[h]
\centering
\begin{mathpar}
\inferrule[val]{}{
    \vdash \triple{P(v)}{v}{P, [\bot, \bot, \bot]}
}
\and
\inferrule[break]{
    \vdash \triple{P}{e}{Q}
}{
    \vdash \triple{P}{\cmdbreak e}{\bot, [Q, \bot, \bot]}
}
\and
\inferrule[continue]{}{
    \vdash \triple{P}{\cmdcontinue}{\bot, [\bot, \lambda\_.\,P, \bot]}
}
\and
\inferrule[return]{
    \vdash \triple{P}{e}{Q}
}{
    \vdash \triple{P}{\cmdreturn e}{\bot, [\bot, \bot, Q]}
}
\and
\inferrule[seq]{
    \vdash \triple{P}{e_1}{\lambda\_.\,Q, [\vec{R}]}
    \and
    \vdash \triple{Q}{e_2}{\Phi, [\vec{R}]}
}{
    \vdash \triple{P}{e_1 \cseq e_2}{\Phi, [\vec{R}]}
}
\and
\inferrule[call]{
    \vdash \triple{P}{e}{Q, [\bot, \bot, Q]}
}{
    \vdash \triple{P}{\cmdcall}{Q, [\bot, \bot, \bot]}
}
\and
\inferrule*[left=loop]{
    \vdash \triple{I}{e}{\lambda\_.\,I, [Q, \lambda\_.\,I, R]}
}{
    \vdash \triple{I}{\cmdloop{e} e}{Q, [\bot,\bot,R]}
}
\and
\inferrule*[left=if]{
    \vdash \triple{P}{e_1}{\lambda v.\, (v = \texttt{true} \land R_1) \lor (v = \texttt{false} \land R_2), [\vec{R}]}
    \\\\
    \vdash \triple{R_1}{e_2}{Q, [\vec{R}]}
    \and
    \vdash \triple{R_2}{e_3}{Q, [\vec{R}]}
}{
    \vdash \triple{P}{\cmdif e_1 \cmdthen e_2 \cmdelse e_3}{Q, [\vec{R}]}
}
\end{mathpar}
\caption{A Snippet of Primary Rules in Iris-CF}
\label[]{fig:iris-cf-rule}
\end{figure}

\paragraph{Extended Rules in Shallowly Embedded Iris-CF.}

Almost all extended rules from section~\ref{sec:rules} are valid\footnote{
    The \textsc{nocontinue} rule has an additional condition that there are no unscoped \texttt{continue} command in heap, because otherwise, in a higher-ordered language like Iris-lambda, these \texttt{continue} will be loaded into the program and invalidate the rule's conclusion.
    We do not consider \textsc{loop-nocontinue} in Iris because it does not use \texttt{for}-loops in the language.
}.
However, the \textsc{seq-inv} rule is not true due to the step-indexed model of Iris.
Specifically, we need to match the later modality for programs on the two side of the simulation relation, which means in $(c_1, \kappa_1) \sim (c_2, \kappa_2)$, each step of $(c_2, \kappa_2)$ must simulate at least one step of $(c_1, \kappa_1)$.
In other words, in the \textbf{termination} case, if the \textit{source} terminates, the \textit{target} must terminate immediately.
One special case of \textsc{seq-inv} is
$$
\begin{array}{rl}
    & \textsf{WP}\, v \cseq c \progspec{\Phi} \wand \triangleright \textsf{WP}\, c \progspec{\Phi} \\
    \Leftrightarrow & \big(\forall \sigma.\, S(\sigma) \wand \cdots \wand (S(\sigma) * \textsf{WP}\, c \progspec{\Phi})\big) \wand \triangleright \textsf{WP}\, c \progspec{\Phi}
\end{array}
$$
where $v \cseq c$ has one more step than $c$.
If the current resource conflicts with all $S(\sigma)$, which is possible in Iris because provers can choose the resource algebra, then the premise is a tautology ($\text{False} \wand \cdots \wand (S(\sigma) * \textsf{WP}\, c \progspec{\Phi})$), but we can always find a $c$ whose weakest precondition violates the resource we choose.
Therefore, the special case here can not be proved, which is necessary in the \textsc{seq-inv} proof.






\paragraph{Deeply Embedded Iris-Imp.}

We then lift the shallowly embedded Iris-CF into the deeply embedded Iris-Imp \cite{exrule-repo} for an imperative language:
$$
\begin{aligned}
    e' \in \textit{Expr}_{\textit{imp}} ::=\,
    & v \lsep x \lsep \texttt{ref}'(e') \lsep !'e' \lsep e'_1 \leftarrow' e'_2 \lsep \texttt{fork}'\{e'\} \\
    \lsep\, 
    & {e'_1 \cmdseq e'_2} \lsep \cmdif' e'_1 \cmdthen e'_2 \cmdelse e'_3 \lsep {\texttt{loop}'_{e'} e'} \\
    \lsep\, 
    & {\cmdbreak' e'} \lsep {\cmdcontinue'} \lsep {\cmdcall' e'} \lsep {\cmdreturn' e'} \lsep \cdots
\end{aligned}
$$
which is directly encoded using Iris-CF's expressions.
The $\Vert e'\Vert$ assigns the encoding to Iris-Imp commands and we can then reuse Iris-CF's semantic definitions and logic judgement for developing the validity of Iris-Imp logic (definition~\ref{def:deep-iris-valid} and definition~\ref{def:deep-iris-logic}).
$$
\begin{aligned}
    &\Vert v \Vert \triangleq v &&
    \Vert e'_1 \cseq e'_2 \Vert \triangleq (\lambda \_.\, \Vert e'_2 \Vert) (\Vert e'_1 \Vert) \\
    &\Vert \texttt{ref}'(e') \Vert \triangleq \texttt{ref}(\Vert e'\Vert) &&
    \Vert {\texttt{loop}'_{e'} e'} \Vert \triangleq {\texttt{loop}_{\Vert e' \Vert} \Vert e'\Vert} \\
    &\cdots &&\cdots
\end{aligned}
$$
\begin{definition}[Validity]
    \label{def:deep-iris-valid}
    $\vDash_{\text{iris-imp}} \triple{P}{e'}{Q, [\vec{R}]}$ is true, iff., $\vDash_{\text{iris-cf}} \triple{P}{\Vert e' \Vert}{Q, [\vec{R}]}$ is true, i.e., $P \wand \textsf{WP}\,\Vert e'\Vert\{Q, [\vec{R}]\}$ is true.
\end{definition}
\begin{definition}[Iris-Imp Logic]
    \label{def:deep-iris-logic}
    $\vdash_{\text{iris-imp}} \triple{P}{e'}{Q, [\vec{R}]}$ is true, iff., the triple can be constructed by primary rules of Iris-CF (in figure~\ref{fig:iris-cf-rule}) with the \textsc{bind} rule removed and expressions changed to those in Iris-Imp.
\end{definition}

We can easily prove its soundness in theorem~\ref{thm:deep-iris-sound} by reusing the soundness proof of Iris-CF, because we have defined the validity of Iris-Imp as an encoding of the validity of Iris-CF.
Moreover, we can prove all extended rules (theorem~\ref{thm:deep-iris-extend}) by following proofs for the deep embedding in section~\ref{sec:rules}.

\begin{thm}[Soundness of Iris-Imp Logic]
    \label{thm:deep-iris-sound}
    The deeply embedded logic Iris-Imp is sound, i.e., $\vdash_{\text{iris-imp}} \triple{P}{e'}{Q, [\vec{R}]}$ implies $\vDash_{\text{iris-imp}} \triple{P}{e'}{Q, [\vec{R}]}$.
\end{thm}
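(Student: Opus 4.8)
The plan is to exploit the fact that Iris-Imp's validity is, by Definition~\ref{def:deep-iris-valid}, \emph{literally} the Iris-CF validity of the encoded command: $\vDash_{\text{iris-imp}} \triple{P}{e'}{Q, [\vec{R}]}$ unfolds to $\vDash_{\text{iris-cf}} \triple{P}{\Vert e' \Vert}{Q, [\vec{R}]}$. So I would factor the soundness theorem through Iris-CF in three moves, exactly as prescribed by the lifting scheme of section~\ref{sec:d2s}. First I would prove a \emph{translation lemma} for provability, namely that $\vdash_{\text{iris-imp}} \triple{P}{e'}{Q, [\vec{R}]}$ implies $\vdash_{\text{iris-cf}} \triple{P}{\Vert e' \Vert}{Q, [\vec{R}]}$. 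Then I would invoke the already-established soundness of Iris-CF to conclude $\vDash_{\text{iris-cf}} \triple{P}{\Vert e' \Vert}{Q, [\vec{R}]}$, and finally unfold Definition~\ref{def:deep-iris-valid} to read this off as $\vDash_{\text{iris-imp}} \triple{P}{e'}{Q, [\vec{R}]}$. The only genuine content is the translation lemma; the other two moves are a citation and a definitional rewrite.

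The translation lemma I would prove by induction over the Iris-Imp proof tree. By Definition~\ref{def:deep-iris-logic}, each primary rule of Iris-Imp is a copy of an Iris-CF rule from figure~\ref{fig:iris-cf-rule} with the \textsc{bind} rule dropped and the expressions replaced by their Iris-Imp counterparts. Since the encoding $\Vert\cdot\Vert$ is a syntactic homomorphism that maps every Iris-Imp constructor to the matching Iris-CF constructor --- $\Vert\texttt{ref}'(e')\Vert = \texttt{ref}(\Vert e'\Vert)$, $\Vert{\texttt{loop}'_{e'} e'}\Vert = \texttt{loop}_{\Vert e'\Vert}\Vert e'\Vert$, $\Vert\cmdbreak' e'\Vert = \cmdbreak\Vert e'\Vert$, and so on --- every case \emph{except sequencing} is immediate: the encoded conclusion is exactly the corresponding Iris-CF rule applied to the translated premises supplied by the induction hypotheses, and that rule is sound in Iris-CF.

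The sequencing case is where I expect the real work, and it is the main obstacle. Here $\Vert e'_1 \cseq e'_2 \Vert = (\lambda \_.\, \Vert e'_2 \Vert)(\Vert e'_1 \Vert)$, so there is no direct Iris-CF \textsc{seq} rule to appeal to; instead I must \emph{re-derive} the encoded sequencing rule inside Iris-CF from the rules that survive the encoding together with \textsc{bind}, which remains available in Iris-CF (this is precisely why \textsc{bind} is kept there and only removed from Iris-Imp, to avoid the proof-tree indeterminacy discussed in section~\ref{sec:proof-deep}). Concretely I would apply Iris-CF's \textsc{bind} rule with the evaluation context $(\lambda \_.\, \Vert e'_2 \Vert)(\bullet)$ to focus on $\Vert e'_1 \Vert$, discharge the focused goal with the first induction hypothesis, and then, once $\Vert e'_1 \Vert$ has reduced to a value, use the function-application (beta) rule to step $(\lambda \_.\, \Vert e'_2 \Vert)(v)$ to $\Vert e'_2 \Vert$ and finish with the second induction hypothesis. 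The delicate point to verify is that the control-flow postconditions $[\vec{R}]$ behave correctly under \textsc{bind}: a $\cmdbreak$, $\cmdcontinue$, or $\cmdreturn$ arising inside $\Vert e'_1 \Vert$ must escape the context $(\lambda \_.\, \Vert e'_2 \Vert)(\bullet)$ with $[\vec{R}]$ threaded through unchanged, while only the \emph{normal} postcondition is rewrapped as ``continue in the context''. This is exactly how the weakest precondition of figure~\ref{fig:wp-icf} is structured --- break/continue/return are resolved by the dedicated disjuncts rather than by the preservation disjunct that carries the context --- so granting that Iris-CF's \textsc{bind} is stated to pass $[\vec{R}]$ through the context, the sequencing case closes and the induction, hence the theorem, goes through.
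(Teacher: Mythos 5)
Your proposal is correct and follows essentially the same route as the paper: soundness is obtained by induction over the Iris-Imp proof tree, discharging each case with the already-proven Iris-CF lemma for the matching primary rule and then reading off $\vDash_{\text{iris-imp}}$ via Definition~\ref{def:deep-iris-valid} --- exactly the proof-reuse scheme of section~\ref{sec:d2s}, which is all the paper's one-line proof invokes (your intermediate $\vdash_{\text{iris-cf}}$ provability judgement is harmless but not something the paper defines, since Iris-CF is only ever shallowly embedded). The one place you diverge is the sequencing case: because in Iris-CF the form $e_1 \cseq e_2$ is itself the application $(\lambda\_.\,e_2)(e_1)$ and figure~\ref{fig:iris-cf-rule} already supplies a \textsc{seq} rule for that form, $\Vert e'_1 \cseq e'_2\Vert$ matches its conclusion directly, so the \textsc{bind}-plus-beta re-derivation you sketch is really the (already completed) proof of that Iris-CF lemma rather than extra work this induction must redo.
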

\begin{thm}[Extended Rules of Iris-Imp Logic]
    \label{thm:deep-iris-extend}
    All three categories of extended rules in section~\ref{sec:rules} is true for the deeply embedded logic Iris-Imp.
\end{thm}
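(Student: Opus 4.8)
The plan is to transport the deep-embedding argument of Section~\ref{sec:proof-deep} wholesale into Iris-Imp, exploiting the fact that $\vdash_{\text{iris-imp}}$ is a purely syntactic provability judgement (Definition~\ref{def:deep-iris-logic}) and is therefore insensitive to the step-indexed weakest-precondition semantics that made \textsc{seq-inv} fail in the shallow Iris-CF. Every extended rule below is stated and proved at the level of provable triples, so none of the problematic matching of later modalities arises; in particular the soundness Theorem~\ref{thm:deep-iris-sound} is not even needed for this result. This is precisely the payoff of the lifting approach: the counterexample to \textsc{seq-inv} lives in the weakest precondition, which the deep derivation never touches.

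The essential preliminary step is to check the design invariant emphasized at the end of Section~\ref{sec:proof-deep}: for each Iris-Imp expression constructor there is exactly one corresponding primary rule. This is exactly why Definition~\ref{def:deep-iris-logic} removes the \textsc{bind} rule --- otherwise a triple about $e'_1 \cseq e'_2$ (encoded as a lambda application) could be derived either by \textsc{seq} or by an application rule, and the case analysis underlying inversion would collapse. With \textsc{bind} removed, the final rule of any derivation of $\vdash_{\text{iris-imp}} \triple{P}{e'}{Q, [\vec{R}]}$ is either the unique compositional rule matching the head constructor of $e'$ or the consequence rule.

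First I would prove the inversion rules (\textsc{seq-inv}, \textsc{loop-inv}, \textsc{if-inv}) by induction over the derivation tree, mirroring Theorem~\ref{th:seq-inv} and Theorem~\ref{th:loop-inv}: in the matching-compositional-rule case the premises are handed to us directly, and in the consequence case we invoke the induction hypothesis and re-thread the entailments through the consequence rule. Next I would dispatch the structural rule \textsc{nocontinue} and the transformation rule \textsc{if-seq} using the uniform deep-embedding recipe --- extract the component triples via the just-proved inversion rules, adjust control-flow postconditions where necessary, and recombine with the primary rules, following the derivation scheme of Figure~\ref{fig:loop-nocont-eg}; \textsc{nocontinue} in addition proceeds by induction on the continue-free expression so that its base cases have $\bot$ continue-postcondition and its compositional cases close by the induction hypothesis. (The \textsc{loop-nocontinue} rule is dropped, since Iris-Imp uses a single-body loop rather than a C-style for-loop, while the loop-unrolling rule is proved analogously against Iris-Imp's \textsc{loop} rule.)

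The hard part will be \textsc{hoare-ex}, for the same reason it was delicate in Theorem~\ref{th:hoare-ex-deep}: the derivation of $\vdash_{\text{iris-imp}}\triple{P(x)}{e'}{Q, [\vec{R}]}$ may have a different shape for each $x$, so one cannot induct over the proof tree and must instead induct over the syntax of $e'$, supplying a per-constructor inversion lemma in the style of Lemma~\ref{th:seq-post-inv} for every head form. This argument rests on Hypothesis~\ref{hypo:ex}, hence on the assertion language admitting impredicative quantification over assertions together with Hoare triples as assertions. Here Iris is on comfortable ground: $\text{iProp}$ is impredicative by construction, so the existential-over-assertions and the triple-as-assertion injection used in Lemma~\ref{th:seq-post-inv} are available verbatim, and the sequential-composition induction step goes through as in Theorem~\ref{th:hoare-ex-deep}. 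The remaining induction steps (loop, conditional, call, return, and the memory and fork primitives) each demand their own inversion lemma, which is routine once the one-rule-per-constructor invariant is secured, so the only genuine effort is the bookkeeping across the enlarged constructor set of $\textit{Expr}_{\textit{imp}}$.
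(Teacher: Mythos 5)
Your proposal matches the paper's approach: the paper proves Theorem~\ref{thm:deep-iris-extend} simply by replaying the deep-embedding proofs of Section~\ref{sec:proof-deep} (proof-tree induction for inversion rules, recombination via inversion plus primary rules for the others, and syntax induction with impredicative \text{iProp} assertions for \textsc{hoare-ex}), relying on the one-rule-per-constructor invariant secured by removing \textsc{bind} — exactly the points you identify. Your observation that soundness is not needed and that the step-indexing obstruction to \textsc{seq-inv} disappears at the provability level is also the paper's intended payoff of the lifting approach.
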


\section{Related Work}
\label{sec:related-embed}

In section~\ref{sec:embed}, we have already introduced various verification projects using different Hoare logic embeddings.
And in this section, we review other work on program logic embeddings.

Cook's \cite{cook1978soundness} proof of relative completeness for Hoare logic gives another view of problems we have discussed in Section~\ref{sec:proof}.
We now assume the deeply embedded logic $\vdash S$ comes with its completeness proof w.r.t. the shallowly embedded logic $\vDash S$, in stead of the incompleteness assumption in Section~\ref{sec:proof}.
With the completeness, the proof of inversion rules under all shallow embeddings can be greatly simplified.
Take \textsc{seq-inv} as an example, the completeness allows entailment from $\vDash \triple{P}{c_1\cseq c_2}{Q, [\vec{R}]}$ to $\vdash \triple{P}{c_1\cseq c_2}{Q, [\vec{R}]}$, and use the \textsc{seq-inv} under the deep embedding to get $\vdash \triple{P}{c_1}{S, [\vec{R}]}$ and $\vdash \triple{S}{c_2}{Q, [\vec{R}]}$.
Lastly, by the soundness of the deeply embedded logic, we get $\vDash \triple{P}{c_1}{S, [\vec{R}]}$ and $\vDash \triple{S}{c_2}{Q, [\vec{R}]}$ and finish the proof.
The completeness enables proofs of this style for inversion rules, which can be further utilized to simplify proofs of structural rules and transformation rules.
Nevertheless, as we have mentioned in Section~\ref{sec:proof}, the completeness proof is often missing for logics in real verification projects, and our complicated constructive proofs under the shallow embedding should be trivial, when compared to the even more complex completeness proof.

Nipkow \cite{wildmoser2004certifying} compares shallowly and deeply embedded assertion languages in an early paper.
He considers a combination of shallowly embedded programming languages, shallowly embedded assertion languages and shallowly embedded program logics.
He also considers a combination of deeply embedded programming languages, deeply embedded assertion languages and executable symbolic execution functions.
His focus is the comparison between these two settings, while we compare multiple logic embeddings with fixed language and assertion embeddings.

PHOAS \cite{chlipala2008parametric} is a different formalization style beside shallow/deep embeddings. It is used for formalizing binders in languages.
If we can reason about triples within assertions in such formalization, our conclusions for the deep embedding may apply to it.

\section{Conclusion}
\label{sec:conclusion}

In this paper, we clarify the nomenclature of ``deep embedding'' and ``shallow embedding'' for formalizing programming languages, assertion languages, and program logics and explore a deep embedding and three different techniques to shallowly embed a program logic.
We identify a set of extended rules that could benefit the verification work, and we point out critical proof steps to validate them under different embeddings.
We find they are relatively easier to prove under the deeply embedded logic because we can use induction over the proof tree.
To alleviate the proof burden of instantiating extended rules for existing verification tools using shallow embeddings, we propose a method to build deeply embedded program logics on existing shallowly embedded ones, where we only need to reuse proofs for primary rules under the shallow embedding to build the deep one.

We also extend our results from conventional Hoare logics to separation logics. We present the deeply embedded VST, where we reuse the original shallowly embedded VST's proofs for soundness proofs of primary rules and prove extended rules under the deep embedding, which is much easier than the proofs in the original VST (e.g. \textsc{if-seq}, \textsc{loop-nocontinue}).
We also use our result to implement Iris-CF, which incorporates control flow reasoning into Iris, and we present both a shallowly embedded Iris-CF logic and a deeply embedded Iris-Imp logic.
With deeply embedded VST and Iris-Imp, we demonstrate the feasibility of our theory in real verification project.

In conclusion, this paper shows that using different embedding to formalize a program logic brings different benefits, where the shallow embedding allows a more straightforward formalization and the deep embedding makes it easier to add extended proof rules.
Our work also indicates the possibility of formalizing a deeply embedded logic based on a shallowly embedded logic by reusing its soundness proof.
Moreover, for any shallowly embedded logic (even with different embedding techniques), we can build its corresponding deeply embedded logic by the same set of primary rules.
This could be an efficient way to extend existing verification tools to support more powerful proof rules like our extended proof rules.



\bibliography{paper}


\newpage
\begin{appendices}





\renewcommand\thefigure{\arabic{figure}}

\section{Big-step semantics}
\label{sec:Abigs}

Figure~\ref{fig:complete-big-1} and figure~\ref{fig:complete-big-2} list the complete big-step semantics for the toy language.
We use $(c, \sigma) \Downarrow (\ek, \sigma')$ to denote that the program $c$ starts execution from state $\sigma$ will safely terminate in state $\sigma'$ with exit kind $\ek$.
We use $(c, \sigma) \Uparrow$ to denote that the program $c$ starts execution from state $\sigma$ will run into an error.

\setcounter{figure}{14}
\begin{figure}[H]
\begin{mathpar}
    \inferrule[ref]{
        \mathrm{eval}(e, \sigma) = v \and
        h(l) = \bot
    }{
        (x = \cmdref(e), (\sigma, h)) \Downarrow (\epsilon, (\sigma[x\mapsto l], h[l\mapsto v]))
    }
    \and
    \inferrule[assign]{
        \mathrm{eval}(e, \sigma) = v
    }{
        (x = e, (\sigma, h)) \Downarrow (\epsilon, (\sigma[x\mapsto l], h))
    }
    \and
    \inferrule[store]{
        \mathrm{eval}(e_1, \sigma) = l \and
        \mathrm{eval}(e_2, \sigma) = v \and
        h(l) \neq \bot
    }{
        ([e_1] = e_2, (\sigma, h)) \Downarrow (\epsilon, (\sigma, h[l \mapsto v]))
    }
    \and
    \inferrule[store\_fail]{
        \mathrm{eval}(e_1, \sigma) = l \and
        h(l) = \bot
    }{
        ([e_1] = e_2, (\sigma, h)) \Uparrow
    }
    \and
    \inferrule[load]{
        \mathrm{eval}(e, \sigma) = l \and
        h(l) = v
    }{
        (x = [e], (\sigma, h)) \Downarrow (\epsilon, (\sigma[x \mapsto v], h))
    }
    \and
    \inferrule[load\_fail]{
        \mathrm{eval}(e, \sigma) = l \and
        h(l) = \bot
    }{
        (x = [e], (\sigma, h)) \Uparrow
    }
    \and
    \inferrule[Seq1]{
        (c_1, \sigma_1) \Downarrow (\epsilon, \sigma_3) \and
        (c_2, \sigma_3) \Downarrow (\ek, \sigma_2)
    }{
        (c_1 \cseq c_2, \sigma_1) \Downarrow (\ek, \sigma_2)
    }
    \and
    \inferrule[Seq2]{
        (c_1, \sigma_1) \Downarrow (\ek, \sigma_2)
    }{
        (c_1 \cseq c_2, \sigma_1) \Downarrow (\ek, \sigma_2)
    }
    \and
    \inferrule[Seq\_Fail1]{
        (c_1, \sigma) \Uparrow
    }{
        (c_1 \cseq c_2, \sigma) \Uparrow
    }
    \and
    \inferrule[Seq\_Fail2]{
        (c_1, \sigma_1) \Downarrow (\epsilon, \sigma_2) \and
        (c_1, \sigma_2) \Uparrow
    }{
        (c_1 \cseq c_2, \sigma_1) \Uparrow
    }
\end{mathpar}
\caption{Big-step semantics of the Toy Language (1)}
\label{fig:complete-big-1}
\end{figure}

\begin{figure}[H]
\begin{mathpar}
    \inferrule[if\_true]{
        \mathrm{eval}(e, \sigma_1) = \text{true} \and
        (c_1, \sigma_1) \Downarrow (\ek, \sigma_2)
    }{
        (\cif{e}{c_1}{c_2}, \sigma_1) \Downarrow (\ek, \sigma_2)
    }
    \and
    \inferrule[if\_true\_fail]{
        \mathrm{eval}(e, \sigma_1) = \text{true} \and
        (c_1, \sigma_1) \Uparrow
    }{
        (\cif{e}{c_1}{c_2}, \sigma_1) \Uparrow
    }
    \and
    \inferrule[if\_false]{
        \mathrm{eval}(e, \sigma_1) = \text{false} \and
        (c_2, \sigma_1) \Downarrow (\ek, \sigma_2)
    }{
        (\cif{e}{c_1}{c_2}, \sigma_1) \Downarrow (\ek, \sigma_2)
    }
    \and
    \inferrule[if\_false\_fail]{
        \mathrm{eval}(e, \sigma_1) = \text{false} \and
        (c_2, \sigma_1) \Uparrow
    }{
        (\cif{e}{c_1}{c_2}, \sigma_1) \Uparrow
    }
    \and
    \inferrule[for]{
        \ek \text{ is not } \ekb \and
        (c_1, \sigma_1) \Downarrow (\ek, \sigma_3) \\\\
        (c_2, \sigma_3) \Downarrow (\epsilon, \sigma_4) \and
        (\cfor{c_1}{c_2}, \sigma_4) \Downarrow (\epsilon, \sigma_2)
    }{
        (\cfor{c_1}{c_2}, \sigma_1) \Downarrow (\epsilon, \sigma_2)
    }
    \and
    \inferrule[for\_fail1]{
        (c_1, \sigma_1) \Uparrow
    }{
        (\cfor{c_1}{c_2}, \sigma_1) \Uparrow
    }
    \and
    \inferrule[for\_fail2]{
        \ek \text{ is not } \ekb \\\\
        (c_1, \sigma_1) \Downarrow (\ek, \sigma_2) \and
        (c_2, \sigma_2) \Uparrow
    }{
        (\cfor{c_1}{c_2}, \sigma_1) \Uparrow
    }
    \and
    \inferrule[for\_break]{
        (c_1, \sigma_1) \Downarrow (\ekb, \sigma_2)
    }{
        (\cfor{c_1}{c_2}, \sigma_1) \Downarrow (\epsilon, \sigma_2)
    }
    \and
    \inferrule[break]{}{
        (\cbreak, \sigma) \Downarrow (\ekb, \sigma)
    }
    \and
    \inferrule[continue]{}{
        (\ccontinue, \sigma) \Downarrow (\ekb, \sigma)
    }
\end{mathpar}
\caption{Big-step semantics of the Toy Language (2)}
\label{fig:complete-big-2}
\end{figure}

\section{Proofs of Some Extended Rules in Big-step semantics}
\label{sec:Abigproof}

We give some proofs of extended rules that are not covered in section~\ref{sec:proof-big}.
All extended rules are proved in our Coq formalization.

\begin{thm}[\textsc{if-seq}]
    Forall $P, Q, \vec{R}, c_1, c_2, c_3$, if
    $$\vDash_b \triple{P}{\cif{e}{c_1\cmdseq c_3}{c_2 \cmdseq c_3}}{Q, [\vec{R}]},$$
    then 
    $$\vDash_b \triple{P}{(\cif{e}{c_1}{c_2}) \cmdseq c_3}{Q, [\vec{R}]}.$$
\end{thm}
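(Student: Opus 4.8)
The plan is to instantiate the general transformation-rule scheme of Section~\ref{sec:proof-big}: first establish the big-step refinement
$$(\cif{e}{c_1}{c_2}) \cseq c_3 \ \sqsubseteq\ \cif{e}{c_1 \cseq c_3}{c_2 \cseq c_3},$$
and then invoke Lemma~\ref{lem:big-triple-trans}. Taking the lemma's abstract programs to be, respectively, the merged-if program $\cif{e}{c_1 \cseq c_3}{c_2 \cseq c_3}$ (for which the hypothesis already supplies a valid triple) as the refined target and the external-if program $(\cif{e}{c_1}{c_2}) \cseq c_3$ as the refining source, the lemma converts $\vDash_b \triple{P}{\cif{e}{c_1 \cseq c_3}{c_2 \cseq c_3}}{Q, [\vec{R}]}$ directly into the desired $\vDash_b \triple{P}{(\cif{e}{c_1}{c_2}) \cseq c_3}{Q, [\vec{R}]}$. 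The entire content of the theorem therefore reduces to verifying this one refinement.

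To prove the refinement I would perform a case analysis on the big-step derivation of the external-if program, checking both the terminating clause and the error clause of $\sqsubseteq$. For termination, assume $((\cif{e}{c_1}{c_2}) \cseq c_3, s_1) \Downarrow (\ek, s_2)$ and invert the top-level sequential composition. If the derivation ends with \textsc{Seq1}, the if-statement exits normally to an intermediate $s_3$ and $c_3$ runs from $s_3$; inverting the if via \textsc{if\_true}/\textsc{if\_false} splits on $\mathrm{eval}(e, s_1)$, and in each branch I recombine the chosen branch's reduction with $(c_3, s_3) \Downarrow (\ek, s_2)$ through \textsc{Seq1} to obtain $(c_i \cseq c_3, s_1) \Downarrow (\ek, s_2)$, then reapply the corresponding if-rule to rebuild the merged-if reduction with the same $\ek$. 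If the derivation ends with \textsc{Seq2}, the if exits by break or continue with $\ek \neq \epsilon$ and $c_3$ is skipped; inverting the if and reapplying \textsc{Seq2} to $c_i \cseq c_3$ again skips $c_3$, reproducing $(\ek, s_2)$.

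The error clause is symmetric, obtained by inverting \textsc{Seq\_Fail1}/\textsc{Seq\_Fail2}: an error inside the if propagates through \textsc{if\_true\_fail}/\textsc{if\_false\_fail} into an error of $c_i \cseq c_3$ and hence of the merged-if, whereas an error raised by $c_3$ after a normal if-exit is reproduced by \textsc{Seq\_Fail2} applied to $c_i \cseq c_3$. The only point demanding care — and what I regard as the main, though mild, obstacle — is threading the exit kind correctly across the two sequencing rules: a normal if-exit must be matched with \textsc{Seq1} (so that $c_3$ executes), while a break/continue if-exit must be matched with \textsc{Seq2} (so that $c_3$ is skipped and $\ek$ propagates unchanged), and these two must not be conflated. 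Once this bookkeeping is respected, every subcase is a direct reassembly of the same sub-derivations, so the refinement — and therefore \textsc{if-seq} — follows by pure case analysis with no induction, unlike the loop-iteration induction required for \textsc{loop-nocontinue}.
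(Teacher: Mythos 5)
Your proposal is correct and matches the paper's approach: Section~\ref{sec:proof-big} proves \textsc{if-seq} exactly by establishing the refinement $(\cif{e}{c_1}{c_2}) \cseq c_3 \sqsubseteq \cif{e}{c_1 \cseq c_3}{c_2 \cseq c_3}$ and applying Lemma~\ref{lem:big-triple-trans}, with the refinement shown by the same destruct-and-reassemble case analysis on the sequencing and if derivations (normal exit versus break/continue exit versus error). The appendix proof merely inlines this refinement argument instead of invoking the lemma explicitly, so there is no substantive difference.
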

\begin{proof}
    Consider the initial state to be $\sigma_1$ with $\sigma_1 \vDash P$.

    If $\mathrm{eval}(e, \sigma_1) = \text{true}$, then we know $c_1 \cmdseq c_3$ will not cause error, and therefore $c_1$ and $c_3$ will not cause error and $(\cif{e}{c_1}{c_2}) \cmdseq c_3$ will not cause error.
    The same happens when $\mathrm{eval}(e, \sigma_1) = \text{false}$, and we know $\cif{e}{c_1\cmdseq c_3}{c_2 \cmdseq c_3}$ will not cause error.

    If $\mathrm{eval}(e, \sigma_1) = \text{true}$ and $c_1$ terminates with normal exit, then we know forall $\sigma_3$ with
    $$
    ((\cif{e}{c_1}{c_2}) \cmdseq c_3, \sigma_1) \Downarrow (\ek, \sigma_3)
    $$
    there exists $\sigma_2$ such that
    $$
        (c_1, \sigma_1) \Downarrow (\epsilon, \sigma_2) \quad\quad
        (c_3, \sigma_2) \Downarrow (\ek, \sigma_3)
    $$
    and we can construct by
    \begin{mathpar}
        \inferrule*[]{
            \inferrule*[]{
                (c_1, \sigma_1) \Downarrow (\epsilon, \sigma_2) \and
                (c_3, \sigma_2) \Downarrow (\ek, \sigma_3)
            }{
                (c_1 \cmdseq c_3, \sigma_1) \Downarrow (\ek, \sigma_3)
            }
            \and \mathrm{eval}(e, \sigma_1) = \text{true}
        }{
            (\cif{e}{c_1\cmdseq c_3}{c_2 \cmdseq c_3}, \sigma_1) \Downarrow (\ek, \sigma_3)
        }
    \end{mathpar}
    and by the triple in the premise, we know $(\ek, \sigma) \vDash \{Q, [\vec{R}]\}$ and we prove the triple in the proof goal.
    If $\mathrm{eval}(e, \sigma_1) = \text{true}$ and $c_1$ terminates with some control flow exit, then $\cif{e}{c_1\cmdseq c_3}{c_2 \cmdseq c_3}$ terminates with the same exit and state, and therefore we know the terminal state satisfies post-conditions.
    The same happens when $\mathrm{eval}(e, \sigma_1) = \text{false}$, and we prove the triple to be valid.
\end{proof}

\begin{thm}[\textsc{nocontinue}]
    Forall $P,Q,R_{\ekb},R_{\ekc},R_{\ekc}',c$, if $c$ does not contain $\ccontinue$ and
    $
    \vDash_b \triple{P}{c}{Q, [R_{\ekb}, R_{\ekc}]},
    $
    then
    $
    \vDash_b \triple{P}{c}{Q, [R_{\ekb}, R_{\ekc}']}.
    $
\end{thm}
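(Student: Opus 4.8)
The plan is to reduce the statement to a single structural fact about the big-step semantics: a command that syntactically contains no $\ccontinue$ can never terminate with the continue exit kind. Once this is isolated, the rule becomes almost a triviality by unfolding the definition of $\bigvalid$.

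First I would state and prove the key lemma: for all $c$, $\sigma_1$, $\ek$, $\sigma_2$, if $c$ contains no $\ccontinue$ and $(c, \sigma_1) \Downarrow (\ek, \sigma_2)$, then $\ek \neq \ekc$. I would prove this by induction on the derivation of the big-step relation (equivalently, by structural induction on $c$ together with inversion on the semantics of appendix~\ref{sec:Abigs}). The singleton cases $\cskip$ and assignment only produce the normal exit $\epsilon$; the $\cbreak$ case produces $\ekb$; the $\ccontinue$ case is excluded outright by the no-continue hypothesis. For $c_1 \cseq c_2$, both \textsc{Seq1} and \textsc{Seq2} yield an exit kind that is the exit kind of a sub-command with no continue, so the induction hypothesis applies to that sub-command. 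For the if-statement, the executed branch has no continue and the induction hypothesis delivers the conclusion. For the for-loop, every terminating rule (\textsc{for} and \textsc{for\_break}) produces the normal exit $\epsilon$, so the conclusion is immediate regardless of what the loop body does.

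With this lemma in hand the theorem follows by unfolding $\bigvalid$. Fix any $\sigma_1 \vDash P$. The safety clause $\lnot (c, \sigma_1) \Uparrow$, the normal post-condition clause, and the break post-condition clause are literally identical in the two triples $\bigvalid \triple{P}{c}{Q, [R_\ekb, R_\ekc]}$ and $\bigvalid \triple{P}{c}{Q, [R_\ekb, R_\ekc']}$, so they transfer directly from the premise. The only clause that differs is the one governing the continue exit, namely ``$\ek = \ekc$ implies $\sigma_2 \vDash R_\ekc'$''; but by the key lemma no reachable configuration $(\ek, \sigma_2)$ satisfies $\ek = \ekc$, so this clause holds vacuously. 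Hence the conclusion triple is valid.

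I expect the only genuine work to lie in the key lemma, and within it the main obstacle is handling the loop and sequencing cases correctly: one must confirm that the big-step rules for the for-loop never propagate a continue exit to the enclosing context (continue exits arising inside the loop body are absorbed into the next iteration by the semantics) and that the no-continue property is properly inherited by the relevant sub-commands $c_1$ and $c_2$ in each inductive step. Everything after the lemma is a routine unfolding of the validity definition and requires no further calculation.
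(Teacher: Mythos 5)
Your proposal is correct and follows essentially the same route as the paper: the paper likewise proves, by induction over the derivation of $(c, \sigma_1) \Downarrow (\ek, \sigma_2)$, that a command containing no $\ccontinue$ never terminates with exit kind $\ekc$, and then observes that the conclusion triple differs from the premise only in the (now vacuous) continue clause of $\bigvalid$. Your extra care about the loop cases absorbing continue exits and the inheritance of the no-continue property by sub-commands is exactly the right place to focus the inductive argument.
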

\begin{proof}
    Consider the initial state to be $\sigma_1$ with $\sigma_1 \vDash P$.

    The triple in the premise guarantees $c$ will not cause error and we only need to show any $(\ek, \sigma_2)$ that $(c, \sigma_1)$ reduces to satisfies $[R_{\ekb}, R_{\ekc}']$.
    In fact, we can prove by induction over $(c, \sigma_1) \Downarrow (\ek, \sigma_2)$ that $\ek$ is not $\ekc$, i.e., it never exits with continue because $c$ does not contain $\ccontinue$.
    Therefore, $(\ek, \sigma_2)$ satisfies $[R_{\ekb}, \bot]$ and obviously satisfies $[R_{\ekb}, R_{\ekc}']$.
\end{proof}

The proof idea of \textsc{nocontinue} in the big-step based embedding is similar to that in the weakest precondition based embedding, but we do not need to do the simulation because there is no requirement after each step of reduction in the big-step based one, and we only need to show certain properties of the final state it reduces to.
They are very different from the proof in the continuation based one because the interpretation of the post-condition's satisfiability is different.

\section{Semantics Iris-CF}

This section presents key reduction rules for Iris-CF to support control flows.

\setcounter{equation}{0}
\begin{figure}[H]
\begin{align}
    (\cmdloop{e} v, \sigma) &\hred (\cmdloop{e} e, \sigma, \epsilon) && \label{eq:icf-loop-step} \\
    (\cmdloop{e} \cmdcontinue, \sigma) &\hred (\cmdloop{e} e, \sigma, \epsilon) && \label{eq:icf-loop-cont}  \\
    (\cmdloop{e} (\cmdbreak v), \sigma) &\hred (v, \sigma, \epsilon) && \label{eq:icf-loop-break} \\
    (K[\cmdbreak v], \sigma) &\hred (\cmdbreak v, \sigma, \epsilon) && \text{if } K \neq \bullet \text{ and } \cmdbreak \lightning K \label{eq:icf-break-ctx} \\
    (K[\cmdcontinue], \sigma) &\hred (\cmdcontinue, \sigma, \epsilon) && \text{if } K \neq \bullet \text{ and } \cmdcontinue\, \lightning K \label{eq:icf-cont-ctx}
\end{align}
\caption{Head Reductions for Loop}
\label{fig:hred-loop}
\end{figure}
\begin{description}
    \item[\eqref{eq:icf-loop-step}, \eqref{eq:icf-loop-cont}] When the current iteration of a loop evaluate to value or continue terminal, we load the loop body $e$ from the subscription of $\cmdloop{e}$ into the loop context for next iteration.
    \item[\eqref{eq:icf-loop-break}] When the current iteration of a loop evaluate to break terminal, we evaluate the entire loop to value $v$.
    \item[\eqref{eq:icf-break-ctx}, \eqref{eq:icf-cont-ctx}] Break and continue terminals can skip the context around it, if they can ``penetrate'' ($\lightning$) it, e.g., $\cmdcontinue$ penetrates $K \cmdseq \bullet$ but not $\cmdloop{e} \bullet$.
\end{description}

\end{appendices}


\end{document}